\newtheorem{theorem}{Theorem}
\newtheorem{thm}[theorem]{Theorem}
\theoremstyle{definition}
\newtheorem{definition}[theorem]{Definition}
\newtheorem{example}[theorem]{Example}
\newtheorem{cor}[theorem]{Corollary}
\newtheorem{lemma}[theorem]{Lemma}
\newtheorem{prop}[theorem]{Proposition}
\newcommand{\beq}{\begin{equation}}
\newcommand{\eeq}{\end{equation}}
\crefname{trsf}{transformation}{transformations}
\tikzset{>=stealth', c/.style={draw, regular polygon, regular polygon sides=3, minimum height=2.4em, inner sep=0}, q/.style={draw, circle, minimum height=1.8em, inner sep=0}, e/.style={<-, thick}}
\tikzset{chi/.style={c, fill=yellow!25}}
\newcommand{\Hb}{\mathcal{H}}
\DeclareMathAlphabet{\amscal}{OMS}{cmsy}{m}{n}
\newcommand{\Cs}{\ensuremath{\amscal{C}}}
\newcommand{\Qs}{\ensuremath{\amscal{Q}}}
\newcommand{\Gs}{\ensuremath{\amscal{G}}}
\newcommand{\Is}{\ensuremath{\amscal{I}}}
\newcommand\indep{\protect\mathpalette{\protect\independenT}{\perp}}
\def\independenT#1#2{\mathrel{\rlap{$#1#2$}\mkern2mu{#1#2}}}
\newcommand{\mi}[2]{#1 \indep #2}
\newcommand{\ci}[3]{#1 \indep #2\ |\ #3}
\newcommand{\dsep}[3]{#1 \bot #2\ | \ #3}
\newcommand{\mutual}[2]{\operatorname{I}(#1:#2)}
\newcommand{\cmutual}[3]{\operatorname{I}(#1:#2|#3)}
\newcommand{\shan}[1]{\operatorname{H}(#1)}
\newcommand{\cshan}[2]{\operatorname{H}(#1|#2)}
\newcommand{\qpath}{\rightsquigarrow}
\newcommand{\ich}[1]{m(#1)}
\newcommand{\ian}[1]{J^{-}(#1)}
\newcommand{\pa}[1]{\text{pa} \, {#1}}
\newcommand{\PA}[1]{\text{PA} \, {#1}}
\newcommand{\cpa}[1]{\text{opa} \, {#1}}
\newcommand{\GPA}[1]{\text{incU} \, {#1}}
\newcommand{\GCH}[1]{\text{outU} \, {#1}}
\newcommand{\s}[1]{^{(#1)} }
\newcommand{\cT}{\mathcal{T}}
\DeclareMathOperator{\Tr}{Tr}
\newcounter{dagcounter}
\newcommand{\dagc}{\#\refstepcounter{dagcounter}\thedagcounter}
\title{Theory-independent limits on correlations from \\ generalised Bayesian networks}
\author{\begin{tabular}{ccc}
Joe Henson & Raymond Lal & Matthew F. Pusey\\
{\footnotesize Imperial College London and} & {\footnotesize University of Oxford and} & {\footnotesize Perimeter Institute}\\
{\footnotesize University of Bristol} & {\footnotesize University of Cambridge}\\
{\footnotesize {\tt j.henson@bristol.ac.uk}} & {\footnotesize \tt rayl@cs.ox.ac.uk} & {\footnotesize\tt m@physics.org}\\
\end{tabular}}
\begin{document}
\maketitle

\begin{abstract}
Bayesian networks provide a powerful tool for reasoning about probabilistic causation, used in many areas of science.
They are, however, intrinsically classical.
In particular, Bayesian networks naturally yield the Bell inequalities.
Inspired by this connection, we generalise the formalism of classical Bayesian networks in order to investigate non-classical correlations in arbitrary causal structures.
Our framework of `generalised Bayesian networks' replaces latent variables with the resources of any generalised probabilistic theory, most importantly quantum theory, but also, for example, Popescu-Rohrlich boxes.
We obtain three main sets of results.
Firstly, we prove that all of the observable conditional independences required by the classical theory also hold in our generalisation; to obtain this, we extend the classical $d$-separation theorem to our setting.
Secondly, we find that the theory-independent constraints on probabilities can go beyond these conditional independences. 
For example we find that no probabilistic theory predicts perfect correlation between three parties using only bipartite common causes.
Finally, we begin a classification of those causal structures, such as the Bell scenario, that may yield a separation between classical, quantum and general-probabilistic correlations.
\end{abstract}

\section{Introduction}\label{sec:intro}
	
Bell's theorem \cite{bell} is a central result in the foundations of quantum mechanics. 
It reveals that certain quantum correlations are stronger than those obtainable in any locally causal model as defined by Bell.
Recently, new results have been obtained by using variations of the scenario that Bell originally considered.
For example, Popescu \cite{pophidden} found that sequences of measurements can reveal nonclassicality in more states than the single measurements considered in a Bell scenario. 
Branciard, Gisin and Pironio \cite{bilocal2} found that including the independence of multiple sources could lead to more robust experiments than the single source assumption of a Bell scenario. 
Using this idea, Fritz \cite{tobias} showed that the `free will' assumption of Bell's theorem can be replaced with an assumption about independence of sources, by replacing the measurement settings of the Bell scenario with additional sources.
Finally, Bancal \emph{et al.}\ \cite{Bancal2012} used an elaborate quadripartite scenario to show that explanations of the violation of Bell inequalities using superluminal but finite speed influences are in conflict with the no-signalling principle.

The common theme in these results is the consideration of more complicated causal structures than the one usually assumed in the Bell scenario.
This leads to new insights into how quantum theory deviates from classical physics: 
by considering arbitrary causal structures, these examples expose a rich structure to quantum correlations.
However, to clarify and unify these results, it would be helpful to have a \emph{general} framework that formalises the connection between causal structure and observable correlations.
There are two desirable features that a general framework of this kind should have.
Firstly, it should describe constraints on locally causal models (i.e.~defined using classical random variables), for arbitrary causal structures, e.g.~it should generalise Bell inequalities.
Secondly, it should also allow for non-classical resources---not only of quantum mechanics, but also those of \emph{generalised probabilistic theories} (GPTs).
The development of GPTs originates in the fact that, in the Bell scenario, quantum theory cannot achieve the strongest correlations that are consistent with the no-signalling principle \cite{Cirelson1980, prbox}.
It would interesting to understand the consequences of different types of causal structure for the separation between classical, quantum, and more general correlations.
In particular, this would allow us to pose the question of what is special about quantum correlations in a wider framework than has so far been used.  

A framework that achieves the first objective is that of \emph{Bayesian networks}, based on directed acyclic graphs (DAGs).
This has been an active area of research by statisticians and computer scientists for several decades, pioneered in particular by Pearl \cite{pearl,pearlintro}. 
When this framework is applied to a Bell-type experiment, and the causal structure implied by special relativity and independence of settings is assumed, one obtains exactly Bell's notion of local causality \cite{wood}. 
The significance of this is two-fold: firstly, Bayesian networks are the natural setting for generalising Bell scenarios; secondly, a new formalism---but structurally similar to Bayesian networks---will be needed to describe the behaviour of quantum theory and other GPTs on arbitrary causal structures. 

\paragraph{Our contribution.}
In this paper, we propose a generalisation of Bayesian networks which incorporates the framework of GPTs.
In particular, we generalise the latent nodes of standard Bayesian networks to allow for resources from an arbitrary GPT.
We then investigate the extent to which results from the causality literature generalise to our approach.
We have three main results.

Our first result shows that all the observable conditional independences that follow from a classical Bayesian network still follow in our generalisation. The conditional independences mandated by a DAG are characterised graphically by the `$d$-separation criterion'. Technically our result is that this criterion is still sound in our generalisation. 
Since our framework goes beyond classical probability theory, we do not have enough structure to even define conditional independences involving latent nodes; hence we require a proof that is very different to the classical case.

Secondly, we also explore what constraints further than the observable conditional independences can be derived for a given causal structure, even in the most general theories.  
In the case of classical Bayesian networks, all constraints on probability distributions implied by the causal structure are (by definition) conditional independences. 
However, these conditional independences may involve `latent' variables, which are unobserved.  Hence not all of the constraints on observable variables need to take the form of \emph{observable} conditional independences. 
For example, in the Bell setup, Bell inequalities are constraints on the observable variables that arise from the existence of latent variables. 
But Bell inequalities are stronger constraints than the observable conditional independences, textit{i.e.}~the no-signalling conditions.

Since our approach will be to allow arbitrary GPTs,  
 the Bell inequalities in the Bell scenario will not constrain the observable probabilities in a general theory.
However, we examine two other quantitative limits on classical correlations that apply to different causal structures.  As in the Bell inequality case, these limits do not follow from the observable conditional independences.
Nevertheless, we find that both limits do carry over to arbitrary GPTs.
Specifically, we show that perfect correlation between three parties cannot be explained by bipartite common causes alone, regardless of which physical theory is used. 
We also show that any GPT obeys the `instrumental inequality', a close cousin of the Bell inequalities that applies to a simple four-node DAG. 

Finally we identify an important classification problem: which are the causal structures that, even classically, have no observable consequences beyond conditional independences? Structures not in this class will certainly be the focus of attention in quantum foundations, but we believe this classification will be of interest in other applications of even the classical causality framework.  We make progress on this problem by providing a sufficient condition for our generalised DAG to imply only the observable conditional independences.

\paragraph{Related work.}
Our work extends Pearl's research programme \cite{pearl} to the study of nonlocality. 
In this respect we build upon the work of Wood and Spekkens \cite{wood}, who showed that such a connection can be made.
Part of our work also builds upon the circuit framework developed by Chiribella, D'Ariano and Perinotti (CDP) \cite{Chiribella2010}.
There are several other lines of investigation with similar but distinct aims to ours.
Leifer and Spekkens have the ambitious aim of an inherently quantum theory of Bayesian networks \cite{neutral}.
However the Leifer-Spekkens approach is work in progress, and is unlikely to allow for other general probabilistic theories.
Fritz has generalised the definitions of classical, quantum, and GPT correlations beyond the Bell scenario, and provided many interesting examples \cite{tobias,tobias2}. But he does not aim to generalise the standard theory of Bayesian networks directly, and so not all of our results can be translated to his definitions.
In \cref{sec:comparison} we discuss the connections to these works in more detail. Related work has meanwhile appeared in \cite{pienaar,infoquant}, the latter including extensions of some of our results.

\paragraph{Plan of paper.}
In Section~\ref{sec:CBN} we introduce the background on classical Bayesian networks, in particular the classical $d$-separation theorem.
In Section~\ref{sec:GBN} we discuss parts of the CDP circuit framework, which we then build upon to define `generalised Bayesian networks'.
We then prove the $d$-separation theorem for our framework.
In Section~\ref{sec:bounds} we investigate bounds on correlations for the triangle and instrumental inequality scenarios. 
Finally, in Section~\ref{sec:towards} we provide a sufficient condition on a causal structure for all sets of correlations to be equal. 

\section{Classical Bayesian networks}\label{sec:CBN}

We often have reasons to assume a given set of causal relations between random variables. 
A basic example is the Bell scenario \cite{bell}, in which we consider probability distributions $P(a,b|x,y)$. 
The underlying spatio-temporal relations are assumed to constrain these distributions by conditional independences known as the `no-signalling' conditions, e.g.~$P(a|x,y)=P(a|x)$. 
Bell's locality condition places a further restriction on the possible correlations:
\beq\label{eq:Bell_locality}
P(a,b|x,y)=\sum_\lambda P(a|x,\lambda)P(b|y,\lambda)P(\lambda).
\eeq
Now, the locality condition can be understood as a condition on the background causal structure, stating that the correlations in $P(a,b|x,y)$ arise through a common cause---a classical random variable $\lambda$---that is in the past of both Alice and Bob.  
Bell inequalities then characterise the correlations that are compatible with this causal structure\footnote{At this point one might question the physical motivations for assuming a particular causal structure, especially with regard to the spatio-temporal causal order that is so crucial to the discussion on Bell's theorem and its consequences.  While much could be said on this issue, the main intention here is to discuss the consequences of assuming a causal structure, rather than the many possible motivations for doing so.}.

In general, how do we characterise the set of allowed probability distributions given a certain causal structure?  
In the case where we only consider causal relations between classical random variables, this question is answered by the theory of Bayesian networks. 
This theory provides a way to describe causal structures, along with rules to determine which probability assignments are consistent with them.  
Here we provide a brief introduction to this aspect of Bayesian networks, with a view to its generalisation in subsequent sections.  
We largely follow Pearl's terminology and notation \cite{pearl}.

\subsection{Probabilities on graphs}

Recall that a \emph{directed graph} $G$ is a pair $(V,E)$, where $V$ is a set of nodes, and $E\subseteq V\times V$ is a set of directed edges.
It is often useful to label the nodes with an index, so that we can write $V=\{X\s{i}\}_i$.
A directed graph may have a \emph{directed cycle}, viz.~a sequence of edges $X\s{1}\to X\s{2}\to\dots\to X\s{n} \to X\s{1}$.
A \emph{directed acyclic graph (DAG)} is a directed graph which has no directed cycles.

In our work, DAGs will represent causal structure: more specifically, an edge $X\to Y$ will represent the possibility of direct causal influence from $X$ to $Y$, where `direct causal influence' will be defined in terms of probabilistic conditional dependence.  
The nodes that can directly influence $Y$ are all nodes $X$ for which there is an edge $X \rightarrow Y$; these are the \emph{parents} of $Y$, and the set of all parents of $Y$ is denoted $\PA{Y}$.  
Similarly, if $X\to Y$ then $Y$ is a \emph{child} of $X$.
A \emph{directed path} is a sequence of nodes $X\s{1},X\s{2},\dots,X\s{n}$ such that $X\s{i} \rightarrow X\s{i+1}$ for $1\leq i \leq n-1$.  
In keeping with familial terminology, we say that $Y$ is a \textit{descendant} of $X$, and $X$ is an \emph{ancestor} of $Y$, if there is a directed path from $X$ to $Y$.  
We also define the following two useful functions on sets of nodes: 
\begin{enumerate}[(i)]
	\item we define $\ich{U}$ to be the union of the set of nodes $U$ with all the children of each of the nodes in $U$;
	\item we define  $\ian{U}$ to be the union of $U$ with the set of all ancestors of nodes in $U$ (the entire `past' of $U$).
\end{enumerate}

Now, consider the Bell scenario in which a common cause is assumed to exist.
The DAG for this scenario is shown in \cref{belldag}, 
\begin{figure}
\begin{center}
\begin{tikzpicture}[baseline=(current bounding box.center)]
  \node[c](A) at (1,0){$\Lambda$};
  \node[c](B) at (0,0){$X$};
  \node[c](C) at (2,0){$Y$};
  \node[c](D) at (1.5,1){$B$}
  edge[e] (A)
  edge[e] (C);
  \node[c](E) at (0.5,1){$A$}
  edge[e] (A)
  edge[e] (B);
\end{tikzpicture}
\end{center}
\caption{The Bell scenario depicted as a DAG, with hidden variable $\Lambda$.}\label{belldag}
\end{figure}
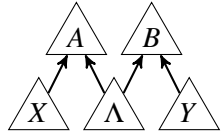
where the $A$ and $B$ nodes represent experiment outcomes in the two wings of the experiment, $X$ and $Y$ are the respective settings, and $\Lambda$ is the common cause (i.e.~the `hidden variable'). 
Writing down such a DAG incorporates various causal assumptions, for example: (i) that the settings are `free', e.g.~there are no edges $\Lambda \to X$ or $\Lambda \to Y$;
 and (ii) that the two wings are causally disconnected from each other (which could arise from spacelike separation between Alice and Bob), e.g.~there is no edge $X\to B$.

Let us now consider random variables associated to the nodes of the DAG.  Only certain probability distributions will be consistent with the causal structure, if it is to have the intended meaning.  As in other treatments, $X\s{1}$ will denote a random variable, while $x\s{1}$ denotes the value of the random variable, and the same label will also be used for the node in the graph associated to this variable (it will be clear from the context which is meant). Sometimes capital letters will also be used to signify sets of random variables, and the lowercase letter a value for each of these variables.  The basic objects of interest will be probability distributions over all the nodes, $P(g)$.  It is convenient to extend this notation to the parents in the following way:
\begin{itemize}
  \item $\PA{X\s{i}}$ is the set of random variables associated to the parents of the node $X\s{i}$;
  \item $\pa{x\s{i}}$ denotes a values of the random variables $\PA{X\s{i}}$.
\end{itemize}

The notion of causality that we now apply has several equivalent forms \cite{pearl}.
Perhaps the most intuitive is that given a random variable $X$, once direct causal influence of the parents has been taken into account by conditioning, then $X$ should be independent of every other node, except for its descendants. For our purposes the following form is the most suggestive:

\begin{definition}[Markov condition]\label{def:markov}
Let $G$ be a DAG.
A probability distribution $P$ is \emph{Markov relative to $G$} if $P$ satisfies
\[
P(x^{(1)},\dots,x^{(n)}) = \prod_i P(x^{(i)}|\pa{x\s{i}}).
\]
\end{definition}

A simple example is given by a probability distribution $P$ that is Markov with respect to the chain $X\to Y\to Z$: this means that $Y$ `screens off' the influence of $X$ from $Z$, i.e.~$P(z|x,y)=P(z|y)$.

\begin{definition}
A \emph{(classical) Bayesian network} is a pair $(P, G)$ where $G$ is a DAG, and $P$ is a probability distribution that is Markov relative to $G$.
\end{definition}

Often, only a subset of the nodes in a Bayesian network represent observable outcomes. 
These are called \emph{observed} nodes, whereas the other nodes are referred to as \emph{latent} or \emph{hidden} nodes.
Latent nodes are usually added by hypothesis in an attempt to explain observed correlations.


We can describe the Bell's theorem in this language \cite{wood}.
If $P$ is Markov relative to the DAG in \cref{belldag} then
\[
P(a,b,x,y,\lambda)=P(a|x,\lambda)P(b|y,\lambda)P(x)P(y)P(\lambda).
\]
After marginalising over $\lambda$, and dividing through by $P(x)P(y)$, we obtain Bell's locality condition, i.e.~\cref{eq:Bell_locality}.
Hence we see that: (i) the idea of a hidden variable $\lambda$ is identical to the existence of a latent node; (ii) Bell's locality condition follows from the Markov condition for the Bell DAG.
In this way, we can see that Bell applied the same basic account of causality as used in Bayesian networks, albeit applied to a particularly simple and intuitive case.  
For more complex DAGs, 
the more general framework is needed.

\subsection{A graphical criterion for independence: $d$-separation}

A Bayesian network specifies a graph and a probability distribution that decomposes `locally' along the edges of the graph.
This means that it encodes certain conditional independences.
But in general, further independences will be derivable from those given directly by the fact that $P$ is Markov with respect to $G$.  
For example, in the Bell DAG, the Markov condition immediately implies that $P(a|x,\lambda,y)=P(a|x,\lambda)$ (sometimes called `parameter-independence' \cite{Shimony2013}).
But the probability calculus also implies that we can marginalise over $\lambda$ to obtain $P(a|x,y)=P(a|x)$, i.e. the no-signalling condition.
In the theory of Bayesian networks, these additional conditional independences are of paramount importance. Clearly they follow from the structure of the graph alone, but deriving them using probability theory can be impractical, especially in more complicated DAGs.
The condition of $d$-separation, developed by Geiger \cite{Geiger1987} and Verma and Pearl \cite{Verma1988}, provides a way to `read off' these conditional independences from the structure of the graph.  

To gain an intuitive understanding of the $d$-separation condition, let us consider the connected Bayesian networks that have three nodes, $X$, $Y$ and $Z$, and two edges.
There are three such networks:
\begin{enumerate}[(i)]
	\item The \emph{chain} $X \rightarrow Z \rightarrow Y$;
	\item The \emph{fork} $X \leftarrow Z \rightarrow Y$; and
	\item The \emph{collider} $X \rightarrow Z \leftarrow Y$.
\end{enumerate}
We can consider whether $P(x,y|z)=P(x|z)P(y|z)$ holds in each of these cases, denoted $\ci{X}{Y}{Z}$.
For the chain and fork, it is immediate that $X$ and $Y$ are conditionally independent given $Z$ in any Markov probability distribution (but need not satisfy marginal independence $p(x,y)=p(x)p(y)$).  
However, in the collider we may not have $\ci{X}{Y}{Z}$, even though $X$ and $Y$ are now marginally independent.
For example, $Z$ could hold the value $1$ when $x=y$, and $0$ otherwise.  
The same prevention of conditional independence may be caused by conditioning on any node in the mutual future of $X$ and $Y$ in a more general DAG.  
Roughly speaking, these observations show that, for sets of nodes $X$, $Y$ and $Z$, conditional independences $\ci{X}{Y}{Z}$ will follow when $Z$ \emph{contains} the middle node of chains and forks, but \emph{excludes} the middle node of colliders.

We shall use the form of $d$-separation originally developed by Lauritzen \textit{et al.}\ \cite{lauritzen}.
Let $G$ be a DAG with disjoint subsets $X$, $Y$ and $Z$. 
Then we define the set $W:=G \setminus \ian{X \cup Y \cup Z}$.  In words, the set $W$ is every node in $G$ that is not in the inclusive past of any node in $X$, $Y$ or $Z$.
Now define a \textit{pseudo-path} from node $P\s{1}$ to node $P\s{p}$ to be a sequence of nodes $(P\s{1},P\s{2},\dots,P\s{p})$ such that, for all $i \in \{1,...,p\}$, $P\s{i} \not \in W$, and $m(P\s{i}) \cap m(P\s{i+1}) \not\subseteq W$.  
That is, a pseudo-path does not intersect $W$, and two sequential elements in a pseudo-path must be adjacent or share a common child that is not in $W$.

\begin{definition}
\label{d:dsep}
Let $G$ be a DAG $G$ with disjoint subsets $X$, $Y$ and $Z$. 
We say that $X$ and $Y$ are \emph{$d$-separated} by $Z$, written $\dsep{X}{Y}{Z}$, if, for all nodes $A \in X$ and $B \in Y$, all pseudo-paths from $A$ to $B$ are non-trivially intersected by $Z$.
\end{definition}

\begin{example}[$d$-separation]
As we would expect, for the chain and fork we have $\dsep{X}{Y}{Z}$, but this fails for the collider. 
Consider the the dotted line in Figure~\ref{fig:collider2}. 
This is a pseudo-path, since $W$ is the empty set in this DAG, and the path has only two sequential elements, with $Z$ as the common child.
However this pseudo-path does not intersect $Z$, and hence $\dsep{X}{Y}{Z}$ fails to hold.
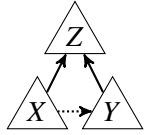
\begin{figure}[h]
	\begin{center}
	\begin{tikzpicture}
\node[c](X) at (0,0) {$X$};
\node[c](Y) at (1,0) {$Y$}
edge[e,densely dotted] (X);
\node[c](Z) at (0.5,1) {$Z$}
edge[e] (X)
edge[e] (Y);
\end{tikzpicture}
	\end{center}
	\caption{A pseudo-path for the collider.}\label{fig:collider2}
\end{figure}
\end{example}

The following theorem establishes the link between the $d$-separation condition and conditional independence.

\begin{thm}[Verma and Pearl \cite{Verma1988}, Meek \cite{Meek95}]\label{t:dsep}
Let $G$ be a DAG with disjoint subsets $X$,$Y$ and $Z$. Then:
\begin{itemize}
\item[(i)] If $P$ is Markov with respect to $G$, then $\dsep{X}{Y}{Z} \Rightarrow \ci{X}{Y}{Z}$.
\item[(ii)] If $\ci{X}{Y}{Z}$ holds for all $P$ which are Markov with respect to $G$, then $\dsep{X}{Y}{Z}$.
\end{itemize}
\end{thm}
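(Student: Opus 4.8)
The plan is to route both directions through the \emph{moral graph} of the ancestral subgraph, since the pseudo-paths of \cref{d:dsep} are exactly the edge-paths of that undirected graph. For soundness~(i), I would first restrict to the ancestrally closed set $\ian{X \cup Y \cup Z}$ and its induced subgraph $G'$. The first step is a lemma that marginalising a Markov distribution over $W = G \setminus \ian{X \cup Y \cup Z}$ leaves a distribution Markov relative to $G'$: since $G'$ is ancestrally closed, $W$ is closed under descendants, so there is a topological order with all of $G'$ first, and summing the factors $P(x\s{i}\mid\pa{x\s{i}})$ out from the bottom contributes only factors of $1$, leaving $\prod_{i\in G'} P(x\s{i}\mid\pa{x\s{i}})$. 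The second step is to moralise $G'$: marry the parents of each retained node and drop orientations to get an undirected graph $(G')^{m}$. Because each retained factor depends only on a node and its parents---a clique of $(G')^{m}$---the marginal factorises over the cliques of $(G')^{m}$ and hence obeys the undirected global Markov property. A pseudo-path is precisely an edge-path of $(G')^{m}$ (the condition $\ich{P\s{i}} \cap \ich{P\s{i+1}} \not\subseteq W$ encodes either adjacency or a shared child inducing a marriage), so $\dsep{X}{Y}{Z}$ says $Z$ separates $X$ from $Y$ in $(G')^{m}$. The conclusion $\ci{X}{Y}{Z}$ then follows from the standard factorisation argument: partition the vertices of $(G')^{m}$ into those reachable from $X$ without passing through $Z$ and the rest; since $Z$ separates the two sides, no clique straddles them, so the clique product splits as a function of $Z$ and the $X$-side times a function of $Z$ and the $Y$-side.

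For completeness~(ii), I would argue the contrapositive: if $\dsep{X}{Y}{Z}$ fails, exhibit one distribution that is Markov relative to $G$ yet violates $\ci{X}{Y}{Z}$. The failure supplies a pseudo-path from some $A \in X$ to some $B \in Y$ avoiding $Z$; using the equivalence between the Lauritzen criterion and the original path criterion, I would first convert this into a standard $d$-connecting trail, in which every non-collider lies outside $Z$ and every collider has itself or a descendant in $Z$. Along this trail I would build a binary distribution: variables off the trail are independent fair coins carrying no information, across each non-collider the child copies the relevant parent, and at each collider $C$ I make the relevant node $z\in Z$ (a copy propagated down from $C$, or $C$ itself) equal to the parity of the two incoming trail-variables, so that conditioning on $z$ couples them. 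Composing the copies and parities along the trail forces the value of $A$ to fix a parity function of $B$ once $Z$ is held fixed, so $\ci{A}{B}{Z}$ fails; and the distribution is Markov by construction, since it is defined directly through the factorisation of \cref{def:markov}.

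The hard part will be the completeness construction, and in particular the reconciliation between the moral-graph notion of activation used in \cref{d:dsep} and the standard collider condition. A marriage in $(G')^{m}$ is induced by any common child $C$ merely lying in the ancestral set, whereas a parity at $C$ transmits correlation only when $C$ (or a descendant) is actually conditioned on via $Z$; the naive ``$C = $ parity of its parents'' gadget therefore propagates nothing unless that $C$ is in fact reachable by $Z$. Showing that every failing pseudo-path can be promoted to a genuine $d$-connecting trail whose colliders all admit such a $Z$-descendant is exactly the subtle step, and the remaining bookkeeping---ensuring the composed copies and parities yield a net dependence rather than a cancellation, and that conditioning on $Z$ activates every intended collider while nothing off the trail is conditioned---is where the argument is heaviest.
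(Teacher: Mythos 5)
You should first note that the paper never proves \cref{t:dsep} itself: it is imported from the literature (Verma--Pearl and Meek), and the paper's own proof effort goes into the generalised analogue, \cref{t:dsep_general}. With that caveat, your soundness argument is essentially the same proof, in its classical form, that the paper adapts. Your marginalisation-over-$W$ step is exactly \cref{l:part1} specialised to classical theories, and your moralise-and-factorise step is the classical shadow of \cref{l:dsep} together with \cref{l:part2}: the paper's partition $\{U,V,Z,W\}$ with $\ich{U}\cap \ich{V}\subseteq W$ is precisely the statement that $Z$ separates $X$ from $Y$ in the moral graph of the ancestral subgraph, written without naming the moral graph. (Indeed the paper's pseudo-path definition is Lauritzen's criterion, so your identification of pseudo-paths with edge-paths of $(G')^{m}$ is the intended reading.) For part (i), then, your route and the paper's machinery coincide; your phrasing buys the familiar undirected global Markov property, while the paper's rewriting buys a form of the argument that survives when conditioning on latent nodes is unavailable, which is what the generalisation needs.

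For part (ii) there is a genuine gap, which you candidly flag yourself: you have not shown how to promote a failing pseudo-path into a $d$-connecting trail whose colliders all have descendants in $Z$, nor verified that the copy/parity gadget composes into a non-cancelling dependence once $Z$ is conditioned on. Invoking the equivalence between the Lauritzen criterion and the path-based criterion is legitimate if cited (it is a theorem of Lauritzen \emph{et al.}), but even granting it, your construction for the counterexample distribution remains an outline of the standard Geiger--Pearl/Meek argument rather than a proof; the two difficulties you name (collider activation via $Z$-descendants, and non-cancellation along the trail) are exactly the steps that must be carried out. Note, however, that the paper demands nothing more of itself here: it cites the classical completeness result and derives completeness of \cref{t:dsep_general} from it via $\Cs \subseteq \Gs$. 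So your proposal matches the paper on (i), and on (ii) it attempts more than the paper does but does not finish.
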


Item (i) says that $d$-separation is a sound criterion for conditional independence, and item (ii) says that $d$-separation is complete, \textit{i.e.}~\emph{all} robust conditional independences arise through applying the $d$-separation condition to the underlying DAG. Theorem \Cref{t:dsep} is of central importance to classical Bayesian networks. 
For example, many algorithms for causal inference rely exclusively on conditional independences \cite{pearl}.  

\begin{example}[Conditional independences in the Bell scenario]
Consider again the Bell DAG \cref{belldag}.
We can use the $d$-separation theorem to derive the usual conditional independences, i.e.~the no-signalling conditions.
For example, we have $\dsep{A}{Y}{X}$, which implies $P(a|x,y)=P(a|x)$.
We obtain $\dsep{A}{Y}{X}$ as follows.
We have $W=\{B\}$.
But consider the sequences of nodes between $A$ and $Y$ (the candidate pseudo-paths).
For example, $p_1:=(Y,B,\Lambda,A)$ and $p_2:=(Y,\Lambda,A)$ are two such sequences.
But $p_1$ intersects $W$, and $p_2$ has a pair sequential elements $(\Lambda,A)$ that share a common child in $W$.
Similarly, all sequences of nodes between $A$ and $Y$ fail to be pseudo-paths, and hence
$\dsep{A}{Y}{\emptyset}$.\footnote{
	Note that the Bell DAG here encodes the assumption that the inputs are uncorrelated, i.e.~$P(x,y)=P(x)P(y)$. 
	Hence we obtain a `no-signalling' condition that is stronger than the usual one considered for nonlocality in the Bell setup.
	That is, we obtain $P(a|y)=P(a)$ as well as $P(a|x,y)=P(a|x)$.
	To allow for the possibility that the inputs are correlated, we would use a different DAG, with extra edges $U\to X$ and $U\to Y$, where $U$ represents a correlating variable.
	With this DAG, we obtain only $P(a|x,y)=P(a|x)$, without $P(a|y)=P(a)$, as expected.
}
\end{example}

\section{Generalised Bayesian networks}\label{sec:GBN}

We will now extend the definitions of Bayesian networks to go beyond classical theories.
This will serve as a framework within which to discuss the differences between the allowed set of probability distributions in classical and quantum systems, and even more general cases.  
To do this we shall build on the circuit framework for general probabilistic theories that was developed by Chiribella, D'Ariano and Perinotti (CDP) \cite{Chiribella2010}. 
This provides a graphical approach which is useful when considering DAGs, and their framework imposes very minimal requirements on the theories it encompasses. We describe this CDP framework in Section~\ref{sec:CDP}.  We then introduce our definition of generalised Bayesian networks in Section~\ref{sec:def_GBN}, after which, in Section~\ref{sec:gen_dsep}, we prove that the $d$-separation criterion can be extended to generalised Bayesian networks.

\subsection{The Chiribella-D'Ariano-Perinotti framework}\label{sec:CDP}


The CDP framework provides an abstract description of `circuits' consisting of operations (which include preparations, transformations and observations) connected by propagating systems.  These will be used to describe sources of general correlations in our generalied Bayesian networks.  First, the way in which elements of the circuits compose will be specified (the `operational' part); then the way in which probabilities are attached to circuits will be described.  Together these parts constitute what CDP call an \emph{operational-probabilistic theory}.

\subsubsection{The operational part}

To specify the operational part, we consider a collection of named systems $A,B,C\dots$, including a \em trivial \em system $I$. 
Systems are the inputs and outputs of  \em tests \em $\{\mathcal{C}_i\}_{i\in X}$, which represent a single use of some physical device, e.g.~a Stern-Gerlach device.
 To prevent the input of a test being its own output,  the input and output systems of a test must be distinct, except when both are trivial.
The elements of tests, $\mathcal{C}_i$, represent operationally distinguishable outcomes of the test. They are referred to as \em events\em, and are indexed by a finite number of outcomes $i\in X$. 

For example, for the test corresponding to the use of a Stern-Gerlach device with a spin-half particle, the outcome set would have two elements, corresponding to the two different spin outcomes.
If a test $\{\mathcal{C}_i\}_{i\in X}$ is a singleton, i.e.~if there is only one outcome $i=i_0$, then we say that this is a \emph{deterministic test}.

Below we will find it useful to explicitly include the input and output systems in our notation, so an event with input system $A$ and output $B$ will be represented as $\mathcal{C}_{i \, A}^{B}$. The trivial system will not be included explicitly.\footnote{This mimics the use of tensorial notation by Hardy \cite{hardytensor,hardytensor2}.}
CDP use a graphical notation that builds upon that of Abramsky and Coecke \cite{Abramsky2004a}.
A test $\{\mathcal{C}_i\}$ with input system $A$ and output system $B$ is depicted as:
\[
%
\ifx\JPicScale\undefined\def\JPicScale{1}\fi
\psset{unit=\JPicScale mm}
\psset{linewidth=0.3,dotsep=1,hatchwidth=0.3,hatchsep=1.5,shadowsize=1,dimen=middle}
\psset{dotsize=0.7 2.5,dotscale=1 1,fillcolor=black}
\psset{arrowsize=1 2,arrowlength=1,arrowinset=0.25,tbarsize=0.7 5,bracketlength=0.15,rbracketlength=0.15}
\begin{pspicture}(0,0)(12,17)
\pspolygon[linewidth=0.2,fillcolor=white,fillstyle=solid](4,13)
(4,5)
(12,5)
(12,13)(4,13)
\rput(8,9){$\mathcal{C}_i$}
\rput(7,11){}
\psline(8,13)(8,17)
\psline(8,1)(8,5)
\rput(11,2){$A$}
\rput(11,16){$B$}
\end{pspicture}

\]
If the input of a test is the trivial system then it is depicted as
\[
%
\ifx\JPicScale\undefined\def\JPicScale{1}\fi
\psset{unit=\JPicScale mm}
\psset{linewidth=0.3,dotsep=1,hatchwidth=0.3,hatchsep=1.5,shadowsize=1,dimen=middle}
\psset{dotsize=0.7 2.5,dotscale=1 1,fillcolor=black}
\psset{arrowsize=1 2,arrowlength=1,arrowinset=0.25,tbarsize=0.7 5,bracketlength=0.15,rbracketlength=0.15}
\begin{pspicture}(0,0)(10.01,11.01)
\rput(5,3){$\mathcal{\rho}_i$}
\rput(4,4){}
\psline(5,6)(5,10)
\rput{0}(5,6.01){\pscustom[]{\psellipticarc(0,0)(5.01,5.01){-177.06}{-2.94}\closepath}}
\rput(7,10){$A$}
\end{pspicture}

\]
and referred to as a \em preparation-test\em. 
\em Observation-tests \em are the dual notion, for which the output is the trivial system. 
From now on, we shall omit labelling the systems in the graphical notation.

When the output system of $\{\mathcal{C}_i\}$ is the same as the input system of $\{\mathcal{D}_j\}$, they are composed \em in sequence\em, depicted as
\[
\ifx\JPicScale\undefined\def\JPicScale{1}\fi
\psset{unit=\JPicScale mm}
\psset{linewidth=0.3,dotsep=1,hatchwidth=0.3,hatchsep=1.5,shadowsize=1,dimen=middle}
\psset{dotsize=0.7 2.5,dotscale=1 1,fillcolor=black}
\psset{arrowsize=1 2,arrowlength=1,arrowinset=0.25,tbarsize=0.7 5,bracketlength=0.15,rbracketlength=0.15}
\begin{pspicture}(0,0)(12,31)
\pspolygon[linewidth=0.2,fillcolor=white,fillstyle=solid](4,13)
(4,5)
(12,5)
(12,13)(4,13)
\rput(8,9){$\mathcal{C}_i$}
\rput(7,11){}
\psline(8,13)(8,17)
\psline(8,1)(8,5)
\pspolygon[linewidth=0.2,fillcolor=white,fillstyle=solid](4,27)
(4,19)
(12,19)
(12,27)(4,27)
\rput(8,23){$\mathcal{D}_j$}
\rput(7,25){}
\psline(8,27)(8,31)
\psline(8,15)(8,19)
\end{pspicture}

\]
or symbolically as $\mathcal{C}_{i \, A}^{\, B} \mathcal{D}_{j \, B}^{\, C}$. Otherwise they are composed \em in parallel:\em
\[
\ifx\JPicScale\undefined\def\JPicScale{1}\fi
\psset{unit=\JPicScale mm}
\psset{linewidth=0.3,dotsep=1,hatchwidth=0.3,hatchsep=1.5,shadowsize=1,dimen=middle}
\psset{dotsize=0.7 2.5,dotscale=1 1,fillcolor=black}
\psset{arrowsize=1 2,arrowlength=1,arrowinset=0.25,tbarsize=0.7 5,bracketlength=0.15,rbracketlength=0.15}
\begin{pspicture}(0,0)(26,17)
\pspolygon[linewidth=0.2,fillcolor=white,fillstyle=solid](4,13)
(4,5)
(12,5)
(12,13)(4,13)
\rput(8,9){$\mathcal{C}_i$}
\rput(7,11){}
\psline(8,13)(8,17)
\psline(8,1)(8,5)
\pspolygon[linewidth=0.2,fillcolor=white,fillstyle=solid](18,13)
(18,5)
(26,5)
(26,13)(18,13)
\rput(22,9){$\mathcal{D}_j$}
\rput(21,11){}
\psline(22,13)(22,17)
\psline(22,1)(22,5)
\end{pspicture}

\] 
or $\mathcal{C}_{i \, A}^{\, B} \mathcal{D}_{j \, C}^{\, D}$.  Each type of composition yields another test, whose outcomes $(i,j)$ are ordered pairs formed by the outcomes $i$ and $j$ of each factor. 
 
If $\mathcal{C}_i$ has input system $A$ and output $B$, and $\mathcal{D}_i$ has input $C$ and output $D$, then their parallel composition has the \em composite systems\em, $AC$ and $BD$, as inputs and outputs respectively.
`Composite system' is a primitive notion for CDP, assumed to satisfy certain basic requirements, and so it is not defined with respect to any other mathematical structure.

\subsubsection{The probabilistic part}
An operational-probabalistic theory is defined as one in which every test from the trivial system to itself (pictorally, a diagram with no input or output wires) is a probability distribution over the outcome set, and where the composition of such tests is given by the corresponding product distribution.

Two tests are called operationally equivalent if substituting one for the other never affects a probability distribution. An operationally equivalent class of observation-events is called an \emph{effect}.

To complete this framework we shall assume the existence of a \emph{unique} deterministic effect $\top_A$ for each system $A$. 
Graphically we denote this as:
\[
\ifx\JPicScale\undefined\def\JPicScale{1}\fi
\psset{unit=\JPicScale mm}
\psset{linewidth=0.3,dotsep=1,hatchwidth=0.3,hatchsep=1.5,shadowsize=1,dimen=middle}
\psset{dotsize=0.7 2.5,dotscale=1 1,fillcolor=black}
\psset{arrowsize=1 2,arrowlength=1,arrowinset=0.25,tbarsize=0.7 5,bracketlength=0.15,rbracketlength=0.15}
\begin{pspicture}(0,0)(9.38,9.38)
\psline[linewidth=0.25,fillcolor=blue,fillstyle=solid](5.62,0.6)(5.62,5.47)
\psline[linewidth=0.25,fillcolor=blue,fillstyle=solid](1.88,5.39)(9.38,5.39)
\psline[linewidth=0.25,fillcolor=blue,fillstyle=solid](3.75,7.39)(7.5,7.39)
\psline[linewidth=0.25,fillcolor=blue,fillstyle=solid](5.16,9.38)(6.09,9.38)
\psline[linewidth=0.25,fillcolor=blue,fillstyle=solid](2.82,6.38)(8.43,6.38)
\psline[linewidth=0.25,fillcolor=blue,fillstyle=solid](4.69,8.39)(6.55,8.39)
\end{pspicture}

\]
This assumption is referred to by CDP as \emph{causality}. In particular, ignoring the outcome of any observation-test always corresponds to this unique deterministic effect. This assumption is necessary for the comparison to Bayesian networks below to make sense: CDP show that it is equivalent to the assumption that the probability of an outcome at time $t_1$ does not depend on which operation is performed at time $t_2$, where $t_2>t_1$.  Hence the causality assumption can also be thought of as `no-signalling from the future to the past'.

The fact that the deterministic effect is unique trivially impies the following result, which we will use below.

\begin{lemma}
\label{l:compose_effect}
The deterministic effect on a composite system $AB$ is equal to the parallel composition of the deterministic effect on $A$ with the deterministic effect on $B$, or $\top_{AB}=\top_{A}\top_{B}$.
\end{lemma}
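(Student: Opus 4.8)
The plan is to verify that the right-hand side $\top_A \top_B$ is itself a deterministic effect on the composite system $AB$, and then to invoke the uniqueness of the deterministic effect guaranteed by the causality assumption to conclude that it must coincide with $\top_{AB}$. In outline: the lemma is not really a computation but a uniqueness argument, so the whole content lies in checking that the two objects being compared live in the same ``slot'' (deterministic effects on $AB$), after which they are forced to be equal.

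First I would recall the relevant definitions. By construction $\top_A$ is the effect arising from a deterministic (singleton) observation-test on $A$, and likewise $\top_B$ on $B$; being observation-tests, each has the trivial system $I$ as its output. The next, and only substantive, step is to confirm that the parallel composition $\top_A \top_B$ is again a deterministic observation-test, now with input $AB$. Its outcomes are the ordered pairs $(i,j)$ formed from the outcomes of the two factors; since each factor is a singleton, there is exactly one such pair, so the composite is again a singleton and hence deterministic. Its input is the composite $AB$, and its output is the parallel composite of the two trivial outputs, which is again trivial. Thus $\top_A \top_B$ is a deterministic observation-event on $AB$, i.e.\ a deterministic effect on $AB$. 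Finally, the causality assumption supplies a \emph{unique} deterministic effect $\top_{AB}$ on $AB$, so uniqueness forces $\top_{AB} = \top_A \top_B$.

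I expect the only point needing care to be this middle step: one must check that parallel composition sends a pair of singleton tests to a singleton test (immediate from the ordered-pair description of composite outcomes) and that composing the two trivial output systems yields the trivial system again, which is among the basic requirements the CDP framework places on composite systems. Neither is deep, but they are what make $\top_A\top_B$ an admissible deterministic effect on $AB$ in the first place; granting them, the conclusion is exactly the ``trivial'' consequence of uniqueness that the text promises.
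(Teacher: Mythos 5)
Your proof is correct and matches the paper's reasoning exactly: the paper offers no detailed argument, simply remarking that uniqueness of the deterministic effect ``trivially implies'' the lemma, and your write-up is precisely that argument spelled out (checking that $\top_A\top_B$ is a singleton observation-test on $AB$ and then invoking uniqueness). The care you take over the composite of trivial systems being trivial is a reasonable filling-in of what the paper leaves implicit.
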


We can now give some examples of causal operational-probabilistic theories.
\begin{example}[Quantum theory]
Quantum theory will be our main example of an operational-probabilistic theory.
Systems $A,B,C,\dots$ are associated to complex Hilbert spaces $\Hb_A, \Hb_B, \Hb_C,\dots$; and in particular, the trivial system is given by the one-dimensional space $\Hb_I=\mathbb{C}$. Composite systems are given by the vector space tensor product.

Tests are quantum instruments, i.e. sets of completely positive linear maps that sum to a trace preserving map. In particular, deterministic preparation-tests are unit trace positive operators, and observation-tests are of the form $\Tr(E_i \cdot)$ where $\{E_i\}$ is a POVM. Tests compose in sequence by ordinary composition of maps, and in parallel by the vector tensor product. The unique determinstic effect is $\Tr$.
\end{example}

\begin{example}[Boxworld] 
\emph{Boxworld} \cite{jon} is a theory defined to produce the maximal violation \cite{prbox} of the CHSH inequality. The simplest type of system, called a \emph{gbit}, comes with a pair of two-outcome observation-tests $\{e_1, e_2\}$ and $\{f_1, f_2\}$. For any pair of probabilities $p_e$ and $p_f$ there is exactly one deterministic preparation-test $\omega$ with $e_1(\omega) = p_e$ and $f_1(\omega) = p_f$. Composite systems get the parallel compositions of these, and there is then a unique deterministic preparation-test for any no-signalling distribution on the outcomes. Subject to these requirements, every other mathematically consistent test is included.
\end{example}

\begin{example}[Classical probability theory]\label{ex:classical_CDP}
  We obtain a \emph{classical operational-probabilistic theory} by associating systems $A, B, C, \dots$ with sets $\Lambda_A, \Lambda_B, \Lambda_C$, the trivial system having $\Lambda_I = \{\emptyset\}$. Composite systems are given by the Cartesian product.

  Tests with outcome $i$ from a system $A$ to a system $B$ are given by $p(i, \lambda_B|\lambda_A) \geq 0$, with $\lambda_A$ and $\lambda_B$ ranging over $\Lambda_A$ and $\Lambda_B$ respectively, and $\sum_{i,\lambda_B} p(i,\lambda_B|\lambda_A) = 1$. Tests compose in sequence by multiplying and the summing over the $\lambda$ for the intermediate system, and in parallel by multiplying. The unique deterministic effect is $p(\emptyset|\lambda) = 1$.

A natural question is whether a classical operational-probabilistic theory is, in fact, a Bayesian network.
However, there are two reasons why this is not the case:
\begin{enumerate}
  \item There is no classical conditioning in an operational-probabilistic theory. 
  That is, a test $\{\mathcal{C}_i\}_i$ should be thought of as a device with an output indicating which classical outcome, e.g.~a light that flashes red or green depending on whether spin up or down is detected.
  However, in general a physical device will have `dials', which can be used to control which operation will take place.
  This corresponds to allowing a test $\{\mathcal{C}_i\}_i$ to be a function of a classical input.
  Indeed, this is how the Bell setup is usually conceived, since Alice and Bob each have two possible measurements (observation-tests), and these measurements are chosen based on their input choice, which can be represented as a binary classical random variable. 
  \item 
  An operational-probabilistic theory carries \emph{two} types of information in each circuit element: the systems that `travel' along the wires, and the classical outputs.  The outcome of a test need not tell us everything about the test's output state, even when the relevant system is classical.  Hence a direct interpretation as a DAG, with the outputs of a test translated as the random variables on a node, can easily violate the Markov condition by failing to condition on all the relevant classical information carried by the system.

  For example, consider the following sequence of tests where each system is classical:
  \[
\ifx\JPicScale\undefined\def\JPicScale{1}\fi
\psset{unit=\JPicScale mm}
\psset{linewidth=0.3,dotsep=1,hatchwidth=0.3,hatchsep=1.5,shadowsize=1,dimen=middle}
\psset{dotsize=0.7 2.5,dotscale=1 1,fillcolor=black}
\psset{arrowsize=1 2,arrowlength=1,arrowinset=0.25,tbarsize=0.7 5,bracketlength=0.15,rbracketlength=0.15}
\begin{pspicture}(0,0)(15,27)
\rput(7.21,23.82){}
\rput(9.5,23.24){$e_{x\s{3}}$}
\rput(9.5,0.7){$\rho_{x\s{1}}$}
\rput(8.4,20.74){}
\rput{90}(9.5,3.52){\pscustom[]{\psellipticarc(0,0)(6.27,5.5){92.94}{267.06}\closepath}}
\rput{90}(9.49,20.74){\pscustom[]{\psellipticarc(0,0)(6.26,5.49){-90}{90}\closepath}}
\pspolygon[linewidth=0.2,fillcolor=white,fillstyle=solid](4,16.98)
(4,6.96)
(14.99,6.96)
(14.99,16.98)(4,16.98)
\rput(9.5,11.97){$\mathcal{C}_{x\s{2}}$}
\rput(8.12,14.47){}
\psline(9.5,16.98)(9.5,20.74)
\psline(9.5,3.2)(9.5,6.96)
\end{pspicture}

  \]
  For example, suppose that $\rho$ is the preparation of a coin, which can have either heads or tails facing up, and can be black or white.
  The test $\mathcal{C}$ could change the colour of the coin, but for simplicity let us suppose that each outcome leaves the state of the coin unchanged.
  The classical outputs are as follows: $x\s{1}$ is a bit representing the colour of the coin at $t_1$, $x\s{2}$ is a bit representing the face of the coin at $t_2$, and $x\s{3}$ is a bit representing the colour of the coin at $t_3$. 
  This yields a classical probability distribution $P(x\s{1},x\s{2},x\s{3})$.
  Now, suppose that we try to intepret this circuit as a classical Bayesian network:
  \[
    \begin{tikzpicture}
\node[c,inner sep=-2.4pt](X1) at (0,0) {$X^{(1)}$};
\node[c,inner sep=-2.4pt](X2) at (0,1.35) {$X^{(2)}$}
edge[e] (X1);
\node[c,inner sep=-2.4pt](X3) at (0,2.7) {$X^{(3)}$}
edge[e] (X2);
\end{tikzpicture}

  \]
  The Markov condition implies that $\ci{X^{(3)}}{X^{(1)}}{X^{(2)}}$.
  But if $\rho$ is the preparation of a coin with either side facing up, and in each colour with uniform probability, then $X^{(3)}$ is perfectly correlated with $X^{(1)}$, even conditioning on $X^{(2)}$.  Hence the Markov condition fails to hold.
\end{enumerate}
\end{example}

In the next subsection we shall connect DAGs with generalised probabilistic theories more carefully, overcoming these two problems.

\subsection{Definition of generalised Bayesian networks and examples}\label{sec:def_GBN}
Our aim in this subsection will be to generalise Bayesian networks in a way that can allow non-classical resources.
We begin by splitting the nodes into two types.
\begin{definition}
Let $G$ be a DAG with nodes $V=\{X\s{1},X\s{2},\dots,X\s{m}\}$. 
We shall say that $G$ is a \emph{generalised DAG (GDAG)} if $V$ can be partitioned into two sets of nodes: 
\begin{enumerate}
  \item the \emph{observed nodes} $\{X\s{1},X\s{2},\dots,X\s{n}\}$ (drawn as triangles), and
  \item the \emph{unobserved nodes} $\{X\s{n+1},\dots,X\s{m}\}$ (drawn as circles).
\end{enumerate}
\end{definition}
We choose this terminology because all classical data, \textit{e.g.}~the outcomes of measurements, will be associated to observed nodes.  
On the other hand, the unobserved nodes will replace `latent' random variables with `general resources', e.g.~replacing the source $\lambda$ in the Bell DAG with a general node will allow Alice and Bob to share a quantum state or the state corresponding to a PR box.

We will often apply DAG terminology (parents, children, $d$-separation, etc.)\ to GDAGs. Unless specified otherwise, the relevant definition should simply be applied to the underlying DAG (i.e. ignoring the distinction between observed and unobserved nodes).

We shall assign CDP tests to each node, and hence we shall use the CDP framework.
However, in the previous subsection, we discussed that a circuit element in the CDP framework carries \emph{two} types of data: the classical data associated with an outcome, and the system.
In Example~\ref{ex:classical_CDP} we noted that this makes it problematic to interpret a CDP circuit as a Bayesian network. 
Our framework will address this problem by using generalised DAGs. 
In particular we shall define the \emph{outputs} of observed and unobserved nodes in distinct ways:
\begin{enumerate}
	\item Observed nodes: 
	each observed node will map to a test with no outgoing wires, but will have a classical random variable $X$ assigned to it.
	In the CDP language, an observed node's test has the trivial system as output. Where there is an outgoing edge from an observed node, this means there is a choice of test to be performed at the child node, which depends on the value of the classical variable at the parent. CDP call this a `conditioned test' and show that causality is equivalent to them being well defined.
	\item Unobserved nodes: on the other hand, each unobserved node will output \textit{only} systems, 
	and will not have any non-trivial outcomes assigned to it.
	For convenience of notation we shall associate a classical random variable with every node\footnote{As is the case for classical Bayesian networks, we shall use the same symbol $X\s{i}$ to denote both the node and the random variable associated with the node; context will determine which is being referred to.} $X\s{i}$. However, the random variable associated with unobserved nodes will be trivial, taking only one value with probability one. 
\end{enumerate}
Accordingly, we shall associate a non-trivial probability distribution $P(x\s{1},x\s{2},\dots,x\s{n})$ only with the observed nodes.

More formally, we have:
\begin{definition}
Let $G$ be a generalised DAG. 
Call an edge of $G$ \emph{observed} if it begins on an observed node, and \emph{unobserved} if it begins on an unobserved node.
\end{definition}
\begin{definition}\label{def:GMC}
Let $G$ be a generalised DAG with $m$ nodes, of which the first $n$ are observed.
A probability distribution $P$ over the observed nodes is \emph{generalised Markov} with respect to $G$ if there exists:
\begin{enumerate}
\item a causal operational-probabilistic theory;
\item for every unobserved edge, a distinct system in the theory; and
\item for every node $X\s{i}$, and every value $\cpa{x\s{i}}$ of its observed parents, a test $\cT_{x\s{i}}(\cpa{x\s{i}})_{\GPA{ X\s{i}}}^{\GCH{ X\s{i}}}$ from the composite system $\GPA{ X\s{i}}$ formed by the systems on $X\s{i}$'s incoming unobserved edges to the composite system $\GCH{ X\s{i} }$ formed by the systems on its outgoing unobserved edges, with
\begin{enumerate}
\item an outcome set matching $X\s{i}$ in the case of an observed node, but
\item a 1-element outcome set in the case of an unobserved node
\end{enumerate}
\end{enumerate}
such that
\[
P(x\s{1},x\s{2},\dots,x\s{n}) =
  \prod^{m}_{i=1} \cT_{x\s{i}}(\cpa{ x\s{i} })_{\GPA{ X\s{i}}}^{\GCH{ X\s{i}}}.
\footnote{Although it is not required for our results, it would be nice if $P$ was independent of how the GDAG is described, in particular of how the incoming and outgoing edges of a node are ordered. See \cite{tobias2} for a sketch proof that should carry over to our setting.}
\]
\end{definition}

We say that the \emph{generalised Markov condition (GMC)} is satisfied by a probability distribution $P$ if it is generalised Markov with respect to a given GDAG $G$.
\begin{example}[Prepare and measure]
A randomly chosen preparation followed by a fixed measurement can be depicted as:
\[
\ifx\JPicScale\undefined\def\JPicScale{1}\fi
\psset{unit=\JPicScale mm}
\psset{linewidth=0.3,dotsep=1,hatchwidth=0.3,hatchsep=1.5,shadowsize=1,dimen=middle}
\psset{dotsize=0.7 2.5,dotscale=1 1,fillcolor=black}
\psset{arrowsize=1 2,arrowlength=1,arrowinset=0.25,tbarsize=0.7 5,bracketlength=0.15,rbracketlength=0.15}
\begin{pspicture}(0,0)(10.62,34.38)
\pscustom[]{\psline(5.62,34.38)(0.62,26.88)
\psline(0.62,26.88)(10.62,26.88)
\psbezier(10.62,26.88)(10.62,26.88)(10.62,26.88)
\psline(10.62,26.88)(5.62,34.38)
\closepath}
\psline{<-}(5.62,26.8)(5.62,21.88)
\psline{<-}(5.62,13.12)(5.62,9.5)
\rput(5.62,5){$X$}
\rput(5.62,30){$Y$}
\rput{90}(5.62,17.5){\psellipse[](0,0)(4.12,-4.07)}
\pscustom[]{\psline(5.63,9.37)(0.62,1.88)
\psline(0.62,1.88)(10.62,1.88)
\psbezier(10.62,1.88)(10.62,1.88)(10.62,1.88)
\psline(10.62,1.88)(5.63,9.37)
\closepath}
\end{pspicture}

\]
The first node, $X$, has no incoming edges. Since it is observed, the corresponding test also has no outgoing systems. Hence it corresponds to a test from the trivial system to itself, i.e. a probability distribution $p_x$. The unobserved node has an incoming edge from $X$ and hence the corresponding test will depend on $x$. It has one outgoing edge and so the test has a single outgoing system, i.e. it is a preparation-test $\rho_x$ for a single system. Finally, the last node corresponds to a test that receives the system from $\rho_x$ and has no outgoing systems, i.e. it is an observation-test $\{e_y\}$. Overall we have:
\[
\ifx\JPicScale\undefined\def\JPicScale{1}\fi
\psset{unit=\JPicScale mm}
\psset{linewidth=0.3,dotsep=1,hatchwidth=0.3,hatchsep=1.5,shadowsize=1,dimen=middle}
\psset{dotsize=0.7 2.5,dotscale=1 1,fillcolor=black}
\psset{arrowsize=1 2,arrowlength=1,arrowinset=0.25,tbarsize=0.7 5,bracketlength=0.15,rbracketlength=0.15}
\begin{pspicture}(0,0)(40,21)
\rput(40,8){}
\rput(20.87,0){$p_x$}
\rput(4.98,12.35){$P(x,y)=$}
\rput(20.88,17.29){$e_y$}
\rput(19.67,18.31){}
\psline(20.88,13.59)(20.89,11.09)
\rput(21,8){$\mathcal{\rho}(x)$}
\rput(19.66,8.65){}
\psline(20.88,11.12)(20.88,14.82)
\rput{90}(20.88,11.13){\pscustom[]{\psellipticarc(0,0)(6.18,6.12){92.94}{267.06}\closepath}}
\rput{90}(20.88,14.82){\pscustom[]{\psellipticarc(0,0)(6.18,6.11){-90}{90}\closepath}}
\end{pspicture}

\]
To interpret this diagram it is useful to recall that the composition of two tests from the trivial system to itself is simply multiplication of probability distributions. Hence $P(x,y) = P(y|x)p_x$ where
\[
\ifx\JPicScale\undefined\def\JPicScale{1}\fi
\psset{unit=\JPicScale mm}
\psset{linewidth=0.3,dotsep=1,hatchwidth=0.3,hatchsep=1.5,shadowsize=1,dimen=middle}
\psset{dotsize=0.7 2.5,dotscale=1 1,fillcolor=black}
\psset{arrowsize=1 2,arrowlength=1,arrowinset=0.25,tbarsize=0.7 5,bracketlength=0.15,rbracketlength=0.15}
\begin{pspicture}(0,0)(40,18)
\rput(40,8){}
\rput(4.98,9.38){$P(y|x)=$}
\rput(20.88,14.31){$e_y$}
\rput(19.67,15.32){}
\psline(20.88,10.62)(20.89,8.13)
\rput(21,5){$\mathcal{\rho}(x)$}
\rput(19.66,5.69){}
\psline(20.88,8.15)(20.88,11.85)
\rput{90}(20.88,8.16){\pscustom[]{\psellipticarc(0,0)(6.16,6.12){92.94}{267.06}\closepath}}
\rput{90}(20.88,11.85){\pscustom[]{\psellipticarc(0,0)(6.15,6.11){-90}{90}\closepath}}
\end{pspicture}

\]
\end{example}

\begin{definition}\label{def:GBN}
A \emph{generalised Bayesian network} is a pair $(P,G)$, such that $G$ is a generalised DAG, and $P$ is generalised Markov with respect to $G$.
\end{definition}

The definition of a generalised Bayesian network is therefore exactly analogous to that of a classical Bayesian network.

\begin{example}[Bell setup]
We can define a generalised Bayesian network corresponding to the Bell scenario as follows:
\[
\begin{tikzpicture}
  \node[q](A) at (1.5,0){};
  \node[c](B) at (0,0){$X$};
  \node[c](C) at (3,0){$Y$};
  \node[c](D) at (2.5,1){$B$}
  edge[e] (A)
  edge[e] (C);
  \node[c](E) at (0.5,1){$A$}
  edge[e] (A)
  edge[e] (B);
\end{tikzpicture}

\]
A probability distribution $P$ that is Markov for this generalised DAG is given by:
\[
\ifx\JPicScale\undefined\def\JPicScale{1}\fi
\psset{unit=\JPicScale mm}
\psset{linewidth=0.3,dotsep=1,hatchwidth=0.3,hatchsep=1.5,shadowsize=1,dimen=middle}
\psset{dotsize=0.7 2.5,dotscale=1 1,fillcolor=black}
\psset{arrowsize=1 2,arrowlength=1,arrowinset=0.25,tbarsize=0.7 5,bracketlength=0.15,rbracketlength=0.15}
\begin{pspicture}(0,0)(42.5,15)
\rput(40,8.4){}
\rput(-4,8){$P(a,b,x,y)=$}
\rput(34.25,11.7){$f_b(y)$}
\rput(33.26,12.58){}
\rput(28.12,3.33){$\mathcal{\rho}$}
\rput(33.25,4.26){}
\psline(34.37,5.99)(34.37,9.31)
\rput{0}(27.81,14.14){\pscustom[]{\psellipticarc(0,0)(14.15,14.15){-144.81}{-35.19}\closepath}}
\rput{0}(34.25,9.88){\pscustom[]{\psellipticarc(0,0)(5.01,5.01){-3.44}{183.44}\closepath}}
\rput(21.25,11.7){$e_a(x)$}
\rput(20.26,12.58){}
\psline(21.25,8.51)(21.26,6.36)
\rput(20.25,4.26){}
\psline(21.25,5.99)(21.25,9.58)
\rput{0}(21.25,9.88){\pscustom[]{\psellipticarc(0,0)(5.01,5.01){-3.44}{183.44}\closepath}}
\rput(12,4){$p_x$}
\rput(44,4){$p_y$}
\end{pspicture}

\]
In the special case that the operational theory under consideration is quantum theory, this gives
\[ P(a,b,x,y) = \Tr\left( (E_a(x) \otimes E_b(y)) \rho\right) p_x p_y, \]
where $\rho$ is a bipartite state and $\{E_a(x)\}$ and $\{E_b(y)\}$ are POVMs for each $x,y$. This is indeed the standard quantum model of a Bell experiment.
This example also illustrates that our formalism describes the classical control of tests as a parameterised family of CDP circuits.
\end{example}

A generalised Bayesian network will allow us to explore the consequences of using non-classical resources in place of classical latent variables.
However, we recover classical Bayesian networks if we do not include any unobserved nodes.

\begin{prop}
If all nodes are observed, then a generalised Bayesian network is a classical Bayesian network.
\end{prop}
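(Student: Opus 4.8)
The plan is to unwind the generalised Markov condition (Definition~\ref{def:GMC}) in the special case where no unobserved nodes are present, and to check that it collapses exactly onto the classical Markov condition (Definition~\ref{def:markov}). First I would observe that if every node is observed then, by definition, every edge is observed, so there are no unobserved edges at all. Consequently, for each node $X\s{i}$ the composite systems $\GPA{X\s{i}}$ and $\GCH{X\s{i}}$ are formed from empty collections of systems, so both equal the trivial system $I$, and each assigned test $\cT_{x\s{i}}(\cpa{x\s{i}})_{\GPA{X\s{i}}}^{\GCH{X\s{i}}}$ is a test from the trivial system to itself. By the defining property of an operational-probabilistic theory, such a test is simply a probability distribution over its outcome set, which for an observed node matches the values of $X\s{i}$. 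Moreover, since every parent of an observed node is itself observed, the conditioning data $\cpa{x\s{i}}$ here coincides with $\pa{x\s{i}}$, i.e.\ it ranges over all of $\PA{X\s{i}}$.

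Next I would package this data. Writing $k_i(x\s{i}\mid\pa{x\s{i}}) := \cT_{x\s{i}}(\cpa{x\s{i}})_{I}^{I}$, the previous step shows that for each fixed value of the parents this is a non-negative function of $x\s{i}$ summing to one, i.e.\ a genuine conditional-probability kernel. Since $m=n$, the product appearing in Definition~\ref{def:GMC} then reads
\[
P(x\s{1},\dots,x\s{n}) = \prod_{i=1}^{n} k_i(x\s{i}\mid\pa{x\s{i}}),
\]
which is a factorisation of the joint distribution into normalised kernels running along the edges of the underlying DAG.

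The final, and only non-immediate, step is to identify these kernels with the true conditionals of $P$, so as to match Definition~\ref{def:markov} verbatim. For this I would invoke the standard Bayesian-network fact: choosing a topological order of $G$ (which exists since $G$ is acyclic) and marginalising the joint in reverse order, each sink kernel sums to one, so by downward induction $P(x\s{1},\dots,x\s{j})=\prod_{i\le j}k_i$ for every initial segment; forming ratios of successive segments then gives $P(x\s{i}\mid\pa{x\s{i}})=k_i$. Hence $P(x\s{1},\dots,x\s{n}) = \prod_i P(x\s{i}\mid\pa{x\s{i}})$, so $P$ is Markov relative to the underlying DAG and $(P,G)$ is a classical Bayesian network. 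I expect this identification of kernels with conditionals to be the main (though routine) obstacle, since the generalised Markov condition only supplies \emph{some} normalised factorisation, whereas Definition~\ref{def:markov} is phrased in terms of the actual conditionals of $P$; the topological-marginalisation argument is precisely what bridges that gap.
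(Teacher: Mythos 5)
Your proof is correct and follows essentially the same route as the paper's: trivial incoming and outgoing systems make each test a test from $I$ to $I$, hence a probability distribution over its outcome set; the conditioning data $\cpa{x\s{i}}$ coincides with $\pa{x\s{i}}$; and the generalised Markov product collapses to a classical factorisation. The only difference is that you explicitly justify identifying the kernels with the true conditionals of $P$ via reverse-topological marginalisation, a routine step the paper handles by simply defining $P(x\s{i}|\pa{x\s{i}})$ to be the kernel and noting that tests from $I$ to $I$ compose by multiplication.
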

\begin{proof}
If all nodes are observed, then for every node $X$, we have $\GPA{X}=\GCH{X}=I$. 
That is, the incoming and outoing systems of every test are trivial. 
Then Definition \ref{def:GBN} becomes:
\[
P(x\s{1},x\s{2},\dots,x\s{n}) =
  \prod^{m}_{i=1} \cT_{x\s{i}}(\cpa{ x\s{i} })
\]
For every value of the parents $\pa{x\s{i}} = \cpa{x\s{i}}$, the event $\cT_{x\s{i}}(\cpa{ x\s{i} })$ is a test from $I$ to $I$ and hence a probability distribution on $x\s{i}$.  We can then define
\[
P(x^{(i)}|\pa{x\s{i}}) := \cT_{x\s{i}}(\cpa{ x\s{i} }),
\]
giving a set of conditional probabilities, which, since the composition of tests from $I$ to $I$ is just multiplication, satisfies Definition \ref{def:markov}.
\end{proof}



For a given GDAG $G$, we can identify the following sets of probabilities that are generalised Markov with respect to $G$:
\begin{enumerate}
	\item The set $\Gs$ of probabilities that are generalised Markov for any operational theory.
	\item The set $\Qs$ of probabilities that are generalised Markov for quantum theory.
	\item The set $\Cs$ of probabilities that are generalised Markov for classical probability theory.
\end{enumerate}


Since classical probability theory can be embedded into quantum theory by using diagonal operators, we have $\Cs \subseteq \Qs \subseteq \Gs$ for all GDAGs.

$\Cs$ is closely related to the standard Markov condition on DAGs, with our distinction between observed and unobserved nodes becoming the distinction between observed and latent variables. This is a second sense in which the GMC generalises the usual Markov condition:


\begin{lemma}
\label{l:classicalGMNs}
Let $(P,G)$ be a generalised Bayesian network with $P\in\Cs$.
Then there exists a classical Bayesian network $(P',G')$ where $G'$ is the underlying DAG for $G$, and $P$ and $P'$ agree on the observed nodes defined by $G$.
\end{lemma}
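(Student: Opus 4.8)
The plan is to use the classical realisation guaranteed by $P\in\Cs$ to read off a conditional probability table at each node, and then to promote the values travelling along the unobserved edges into genuine latent random variables for the classical Bayesian network. Concretely, since $P$ is generalised Markov for the classical operational-probabilistic theory of Example~\ref{ex:classical_CDP}, every unobserved edge $e$ carries a classical system with some value set $\Lambda_e$, and the test at each node $X\s{i}$ is a conditional distribution $p_i\!\left(x\s{i},\lambda_{\mathrm{out}(i)}\mid \cpa{x\s{i}},\lambda_{\mathrm{in}(i)}\right)$, where $\lambda_{\mathrm{in}(i)}$ and $\lambda_{\mathrm{out}(i)}$ denote the tuples of values on the incoming and outgoing unobserved edges of $X\s{i}$, and the outcome $x\s{i}$ is trivial when $X\s{i}$ is unobserved. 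By the classical composition rule (sequence composition multiplies and sums over the intermediate system), the decomposition of Definition~\ref{def:GMC} expands to
\[
P(x\s{1},\dots,x\s{n}) \;=\; \sum_{\{\lambda_e\}}\ \prod_{i=1}^{m} p_i\!\left(x\s{i},\lambda_{\mathrm{out}(i)}\mid \cpa{x\s{i}},\lambda_{\mathrm{in}(i)}\right),
\]
the sum running over the values of all unobserved edges.

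Next I would define the latent variables of the classical Bayesian network. The underlying DAG $G'$ has the same nodes as $G$; I keep $X\s{i}=x\s{i}$ at each observed node, and at each unobserved node I set its random variable to be the tuple $\lambda_{\mathrm{out}(i)}$ of values it emits on its outgoing edges. The crucial structural observation is that an observed node's test has no outgoing wires, so an observed node emits no systems; hence every unobserved edge begins on an unobserved node, and its value belongs to exactly one node's variable---that of its source. Consequently the value $\lambda_{\mathrm{in}(i)}$ on any incoming unobserved edge of $X\s{i}$ is a component of the variable of the corresponding unobserved parent, while $\cpa{x\s{i}}$ is read directly off the observed parents. Thus both arguments of $p_i$ are functions of $X\s{i}$ together with its parents $\pa{X\s{i}}$ in $G'$.

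I then set $P'(x\s{i}\mid\pa{x\s{i}}) := p_i$ under these identifications; each factor is a legitimate conditional distribution because the classical normalisation $\sum p_i = 1$ holds after summing over the node's own outputs (the outcome $x\s{i}$ for an observed node, the outgoing edge values $\lambda_{\mathrm{out}(i)}$ for an unobserved node). Defining $P'$ as $\prod_{i=1}^{m} P'(x\s{i}\mid\pa{x\s{i}})$ then makes $(P',G')$ Markov relative to $G'$ by Definition~\ref{def:markov}, hence a classical Bayesian network. Finally, because the joint assignment of the unobserved-node variables is in bijection with the edge tuples $\{\lambda_e\}$, marginalising $P'$ over all unobserved nodes is exactly the sum over $\{\lambda_e\}$ displayed above, so it returns $P(x\s{1},\dots,x\s{n})$; thus $P$ and $P'$ agree on the observed nodes.

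The main obstacle is the bookkeeping of the second paragraph: choosing the latent variables so that the classical theory's ``sum over internal wires'' composition coincides with marginalisation over latent nodes while respecting the parent structure of $G'$. The key idea making this work is that each edge value is owned by its source node, so assigning to every unobserved node the tuple of its outgoing-edge values turns each internal wire into a function of a single parent, preserving the Markov factorisation.
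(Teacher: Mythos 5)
Your proposal is correct and follows essentially the same route as the paper's proof: the paper likewise assigns to each unobserved node the random variable $y\s{i} := \lambda_{\GCH{X\s{i}}}$ (the tuple of classical states on its outgoing edges), reads off $P'(y\s{i}\mid\pa{y\s{i}})$ from the classical tests, and concludes the Markov factorisation and agreement on observed nodes. You simply spell out the details the paper leaves implicit ("in the obvious way"), in particular the bookkeeping that each unobserved edge is owned by its source node so that marginalising over latent variables coincides with the sum over internal wires.
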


\begin{proof} 
From \cref{def:GBN} and \cref{ex:classical_CDP}, if a generalised Bayesian network $(P,G)$ has $P\in\Cs$, then each node $X\s{i}$ has associated to it a probability distribution $p(x\s{i}, \lambda_{\GCH{ X\s{i}}}|\lambda_{\GPA{ X\s{i}}}, \cpa{ x\s{i} } )$, where $x\s{i}$ is the output, $\lambda_{\GPA{ X\s{i}}}$ is the classical state associated to the incoming edges from unobserved nodes, $\lambda_{\GCH{ X\s{i}}}$ is the classical state associated to the outgoing edges if the node is unobserved (and is trivial otherwise), and $\cpa{ x\s{i} }$ is the output of the observed parents.
In this case, we can define a classical random variable $Y\s{i}$, with values referred to as $y\s{i}$, for each node: for observed nodes this is simply the output random variable, so that $y \s{i}:=x \s{i}$, whereas for unobserved nodes it ranges over the classical states on the set of all the outgoing edges, so that $y \s{i}:= \lambda_{\GCH{ X\s{i}}}$.  
We can now define a probability distribution $P'(y \s{i} | \pa y \s{i})$ from $p(x\s{i}, \lambda_{\GCH{ X\s{i}}}|\lambda_{\GPA{ X\s{i}}}, \cpa{ x\s{i} } )$ in the obvious way. 
This implies that $P'$ is Markov with respect to the underlying DAG of $G$. 
Hence 
we obtain a classical Bayesian network $(P',G')$, and $P'$ agrees with $P$ on the observed nodes of $G$ by construction.
 \end{proof}

Since classical operational-probabilistic theory is defined using a canonical observation-test, and we only consider tests with a finite number of outcomes, $\Cs$ corresponds to classical probability distributions where all variables, including latent ones, are finite. The results of \cite{tobias2} would suggest that this gives observable probability distributions that are dense in the set that includes infinite-valued latent variables. However, it is very much an open question whether or not these sets are in fact equal, although this is known to be the case in the Bell scenario \cite{finethm}.

Finally, we will use $\Is$ to denote the set of probabilities that satisfy all of the observable conditional independences that follow from $d$-separation. In this notation, the first part of \cref{t:dsep} (along with lemma \ref{l:classicalGMNs}) gives $\Cs \subseteq \Is$ for all GDAGs. We will now strengthen this to $\Gs \subseteq \Is$.

\subsection{Extending $d$-separation to generalised Bayesian networks}\label{sec:gen_dsep}

In generalised probabilistic theories, no-signalling is still valid.  Therefore it is to be expected that a generalisation of theorem \ref{t:dsep}, when applied to three disjoint subsets of observed nodes, should obtain.  However, the standard proofs of the soundness part of Theorem \ref{t:dsep} [i.e., item (i)] make use of conditioning on latent variables, the analogue of which is unclear in the general case.\footnote{See \cite{neutral} for progress towards such a concept in the case of quantum theory.}
However, by reformulating $d$-separation before proving the generalisation, an alternative proof can be found that does not rely on conditioning on latent variables, and as a result can be more easily generalised.

\begin{lemma}[Proof in \cref{proofsec}]\label{l:dsep}
Let $G$ be a DAG with disjoint subsets $X$, $Y$ and $Z$, and let $W=G \backslash \ian{X \cup Y \cup Z}$.
Then $X$ and $Y$ are $d$-separated by $Z$ if and only if there exist sets of nodes $U$ and $V$ such that $\{U,V,Z,W\}$ is a partition of G, and
\begin{gather}
\label{e:dsep1}
X \subseteq U, \,Y \subseteq V, \\
\label{e:dsep3}
m(U) \cap m(V) \subseteq W.
\end{gather}
\end{lemma}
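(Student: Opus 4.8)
The plan is to recast $d$-separation entirely in terms of an undirected \emph{pseudo-adjacency} graph and then read off the required partition from its connected components. On the vertex set $G\setminus W$ I would define a graph $H$ in which two nodes $P,Q$ are joined precisely when $m(P)\cap m(Q)\not\subseteq W$; by construction a pseudo-path is exactly a path in $H$, and this relation is manifestly symmetric. Writing $H\setminus Z$ for the subgraph of $H$ induced on $G\setminus(W\cup Z)$, the statement $\dsep{X}{Y}{Z}$ becomes the assertion that no node of $X$ is connected to a node of $Y$ in $H\setminus Z$: since $X$, $Y$ and $Z$ are pairwise disjoint (and all disjoint from $W$), a pseudo-path from $A\in X$ to $B\in Y$ fails to be non-trivially intersected by $Z$ exactly when none of its nodes lies in $Z$, i.e.\ when it is a path in $H\setminus Z$. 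Throughout I would use the elementary identity $m(U)=\bigcup_{u\in U}m(u)$, immediate from the definition of $m$.

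For the \emph{if} direction, suppose $U,V$ satisfy \cref{e:dsep1,e:dsep3}. Take any pseudo-path $(P\s{1},\dots,P\s{p})$ with $P\s{1}=A\in X\subseteq U$ and $P\s{p}=B\in Y\subseteq V$, and suppose for contradiction it avoids $Z$. A pseudo-path also avoids $W$, and $\{U,V,Z,W\}$ partitions $G$, so every $P\s{i}$ lies in $U\cup V$; as the path starts in $U$ and ends in $V$, some consecutive pair has $P\s{i}\in U$ and $P\s{i+1}\in V$. Then $m(P\s{i})\subseteq m(U)$ and $m(P\s{i+1})\subseteq m(V)$, so the pseudo-path condition forces $m(U)\cap m(V)\not\subseteq W$, contradicting \cref{e:dsep3}. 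Hence every such pseudo-path meets $Z$ at an interior node, which is exactly $\dsep{X}{Y}{Z}$.

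For the \emph{only if} direction, assume $\dsep{X}{Y}{Z}$ and let $U$ be the set of nodes of $G\setminus(W\cup Z)$ reachable from $X$ in $H\setminus Z$, with $V:=(G\setminus(W\cup Z))\setminus U$. Then $\{U,V,Z,W\}$ partitions $G$, and $X\subseteq U$ is immediate; moreover $Y\subseteq V$, since a $Y$-node in $U$ would be connected to $X$ in $H\setminus Z$, contradicting $d$-separation. The crucial step, which I expect to be the main obstacle, is verifying \cref{e:dsep3}. Suppose $c\in m(U)\cap m(V)$ with $c\notin W$; by the identity above there are $u\in U$ and $v\in V$ with $c\in m(u)\cap m(v)$, so $m(u)\cap m(v)\not\subseteq W$ and $u,v$ are pseudo-adjacent. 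As $u,v\in G\setminus(W\cup Z)$, this is an edge of $H\setminus Z$, whence $v$ is reachable from $X$ through $u$, giving $v\in U$ and contradicting $v\in V$. The one delicacy to handle carefully is that the reduction to $H\setminus Z$ must remain faithful even though pseudo-paths may legitimately pass through $Z$; correspondingly, in this last step the witnessing common child $c$ need not avoid $Z$, since only its endpoints $u,v$ must, and the construction guarantees exactly that.
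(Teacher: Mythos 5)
Your proposal is correct and follows essentially the same route as the paper's proof: the ``if'' direction by locating a consecutive $U$-to-$V$ pair on a $Z$-avoiding pseudo-path and contradicting \cref{e:dsep3}, and the ``only if'' direction by taking $U$ to be the set of nodes reachable from $X$ along $Z$-avoiding pseudo-paths, $V$ its complement, and deriving \cref{e:dsep3} by extending reachability across a pseudo-adjacency. Your explicit undirected pseudo-adjacency graph $H$ is just a clean repackaging of the paper's direct manipulation of pseudo-paths, not a different argument.
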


\subsection{An example}

We seek a generalisation of theorem \ref{t:dsep} from classical to generalised Bayesian networks.
The following example is intended to clarify why this is reasonable, and also to elucidate the proof.

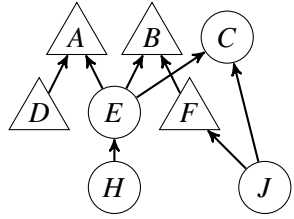
\begin{figure}
\begin{center}
\begin{tikzpicture}
  \node[q](X5) at (1,0) {$H$};
  \node[q](X7) at (1,1) {$E$}
  edge[e] (X5);
  \node[q](X8) at (3,0) {$J$};
  \node[c](X4) at (0,1) {$D$};
  \node[c](X6) at (2,1) {$F$}
  edge[e] (X8);
  \node[c](X2) at (0.5,2) {$A$}
  edge[e] (X4)
  edge[e] (X7);
  \node[c](X3) at (1.5,2) {$B$}
  edge[e] (X6)
  edge[e] (X7);
 \node[q](X1) at (2.5,2) {$C$}
  edge[e] (X7)
  edge[e] (X8);
\end{tikzpicture}
\end{center}
\caption{An example for $d$-separation}
\label{f:dsepexmaple}
\end{figure}

Consider the GDAG depicted in \cref{f:dsepexmaple}. 
This is the Bell GDAG with three extra unobserved nodes, $C$, $J$ and $H$, added.  
Intuitively, the addition of these nodes does nothing to alter the possible GMC probability distributions on the outcomes of the observed nodes. 
For example, the standard no-signalling conditions should still be satisfied.  
To investigate this, let $X:=\{A,D\}$ and
$Y:=\{F\}$, and let $Z$ be empty. 
These sets satisfy the conditions of \cref{l:dsep} with $U:= \{A,D,E,H\}$, $V:=\{F,J\}$ and $W:=\{B,C\}$.  
Hence $X$ and $Y$ are $d$-separated by the empty set.
Therefore, to prove the soundness of the $d$-separation criterion in our setting, we need to show that $X$ and $Y$ are independent in any GMC probability distribution on this graph.

To establish this, we only need to consider $P(x,y)=P(a,d,f)$, which will be the marginal of a probability distribution that satisfies the GMC with respect to the whole GDAG.  
$P(a,d,f)$ can therefore be represented graphically as:
\[
\ifx\JPicScale\undefined\def\JPicScale{1}\fi
\psset{unit=\JPicScale mm}
\psset{linewidth=0.3,dotsep=1,hatchwidth=0.3,hatchsep=1.5,shadowsize=1,dimen=middle}
\psset{dotsize=0.7 2.5,dotscale=1 1,fillcolor=black}
\psset{arrowsize=1 2,arrowlength=1,arrowinset=0.25,tbarsize=0.7 5,bracketlength=0.15,rbracketlength=0.15}
\begin{pspicture}(0,0)(61,28.88)
\rput(40,8){}
\rput(5,15){$P(a,d,f)=$}
\rput(23.13,13.69){}
\psline(31.88,11.88)(31.88,8.12)
\rput(23.12,5.87){}
\rput(24.25,25.5){$\mathcal{T}_a(d)$}
\psline(30,18.12)(24.12,22.88)
\rput{0}(24.12,22.88){\pscustom[]{\psellipticarc(0,0)(6,6){0}{180}\closepath}}
\pspolygon[](26.88,18.12)(36.88,18.12)(36.88,12.19)(26.88,12.19)
\rput(19.38,15){$\sum_b p_d$}
\rput(31.88,15){$\mathcal{T}_e$}
\psline(35.62,18.12)(53.12,22.5)
\rput(39.38,25.62){$\mathcal{T}_b(f)$}
\psline(33.75,18.12)(39.38,22.5)
\rput{0}(39.38,22.75){\pscustom[]{\psellipticarc(0,0)(6,6){0}{180}\closepath}}
\rput(55,25.62){$\mathcal{T}_c$}
\rput{0}(55,22.88){\pscustom[]{\psellipticarc(0,0)(6,6){0}{180}\closepath}}
\rput(46.88,15){$\mathcal{T}_f$}
\rput{0}(46.88,12.25){\pscustom[]{\psellipticarc(0,0)(6,6){0}{180}\closepath}}
\rput(31.88,5.62){$\mathcal{T}_h$}
\rput{0}(31.88,7.82){\pscustom[]{\psellipticarc(0,0)(5.01,5.01){176.54}{363.46}\closepath}}
\rput(55,5.62){$\mathcal{T}_j$}
\rput{90}(55,7.82){\pscustom[]{\psellipticarc(0,0)(5.01,5.01){86.54}{273.46}\closepath}}
\psline(56.25,22.5)(56.25,8.12)
\psline(46.88,12.5)(53.75,8.12)
\end{pspicture}

\]
To be consistent with our motivation, it should not be necessary to mention the nodes in $B$ and $C$ when defining this probability distribution, because they are to the future of all of $A$, $D$ and $F$.  This is indeed the case: this probability distribution still satisfies the GMC with respect to the graph with these two nodes removed.  To see this, note that the outcome $b$ only appears in the effect $\cT_b(f)_{\, E\rightarrow B}$ above, and so summing over all possible outcomes in this factor gives the unique deterministic effect.  The test $\cT_{c \, E\rightarrow C,J \rightarrow C}$ is also a deterministic effect, on $(E\rightarrow C)(J \rightarrow C)$.  We use \cref{l:compose_effect}, which states that the deterministic effect on a product of systems is the product of the deterministic effect on the systems separately.  This gives:
\[
\ifx\JPicScale\undefined\def\JPicScale{1}\fi
\psset{unit=\JPicScale mm}
\psset{linewidth=0.3,dotsep=1,hatchwidth=0.3,hatchsep=1.5,shadowsize=1,dimen=middle}
\psset{dotsize=0.7 2.5,dotscale=1 1,fillcolor=black}
\psset{arrowsize=1 2,arrowlength=1,arrowinset=0.25,tbarsize=0.7 5,bracketlength=0.15,rbracketlength=0.15}
\begin{pspicture}(0,0)(117.5,28.88)
\rput(5,15){$P(a,d,f)=$}
\rput(40.62,7.38){}
\rput(23.75,13.07){}
\psline(32.5,11.25)(32.5,7.5)
\rput(23.74,5.25){}
\rput(24.88,24.88){$\mathcal{T}_a(d)$}
\psline(30.62,17.5)(24.75,22.25)
\rput{0}(24.75,22.25){\pscustom[]{\psellipticarc(0,0)(6,6){0}{180}\closepath}}
\pspolygon[](27.5,17.5)(37.5,17.5)(37.5,11.56)(27.5,11.56)
\rput(32.5,14.38){$\mathcal{T}_e$}
\psline(36.25,17.5)(47.5,21.88)
\psline(34.38,17.5)(40,21.88)
\rput(47.5,14.38){$\mathcal{T}_f$}
\rput{0}(47.5,11.62){\pscustom[]{\psellipticarc(0,0)(6,6){0}{180}\closepath}}
\rput(32.5,5){$\mathcal{T}_h$}
\rput{0}(32.5,7.19){\pscustom[]{\psellipticarc(0,0)(5.01,5.01){176.51}{363.49}\closepath}}
\rput(55.62,5){$\mathcal{T}_j$}
\rput{0}(55.62,7.19){\pscustom[]{\psellipticarc(0,0)(5.01,5.01){176.51}{363.49}\closepath}}
\psline(56.88,21.88)(56.88,7.5)
\psline(47.5,11.88)(54.38,7.5)
\rput(18.75,14.38){$p_d$}
\rput(19.38,15){}
\psline[linewidth=0.25,fillcolor=blue,fillstyle=solid](47.5,21.88)(47.5,25)
\psline[linewidth=0.25,fillcolor=blue,fillstyle=solid](45,24.95)(50,24.95)
\psline[linewidth=0.25,fillcolor=blue,fillstyle=solid](46.25,26.23)(48.75,26.23)
\psline[linewidth=0.25,fillcolor=blue,fillstyle=solid](47.19,27.5)(47.82,27.5)
\psline[linewidth=0.25,fillcolor=blue,fillstyle=solid](45.63,25.58)(49.37,25.58)
\psline[linewidth=0.25,fillcolor=blue,fillstyle=solid](46.88,26.87)(48.13,26.87)
\psline[linewidth=0.25,fillcolor=blue,fillstyle=solid](40,21.92)(40,25.05)
\psline[linewidth=0.25,fillcolor=blue,fillstyle=solid](37.5,25)(42.5,25)
\psline[linewidth=0.25,fillcolor=blue,fillstyle=solid](38.75,26.27)(41.25,26.27)
\psline[linewidth=0.25,fillcolor=blue,fillstyle=solid](39.69,27.55)(40.32,27.55)
\psline[linewidth=0.25,fillcolor=blue,fillstyle=solid](38.13,25.63)(41.87,25.63)
\psline[linewidth=0.25,fillcolor=blue,fillstyle=solid](39.38,26.92)(40.63,26.92)
\psline[linewidth=0.25,fillcolor=blue,fillstyle=solid](56.88,21.88)(56.88,25)
\psline[linewidth=0.25,fillcolor=blue,fillstyle=solid](54.38,24.95)(59.38,24.95)
\psline[linewidth=0.25,fillcolor=blue,fillstyle=solid](55.63,26.23)(58.13,26.23)
\psline[linewidth=0.25,fillcolor=blue,fillstyle=solid](56.57,27.5)(57.19,27.5)
\psline[linewidth=0.25,fillcolor=blue,fillstyle=solid](55.01,25.58)(58.75,25.58)
\psline[linewidth=0.25,fillcolor=blue,fillstyle=solid](56.26,26.87)(57.5,26.87)
\rput(64.38,15.62){$=$}
\rput(95,8){}
\rput(78.13,13.69){}
\psline(86.88,11.88)(86.88,8.12)
\rput(78.12,5.87){}
\rput(79.25,25.5){$\mathcal{T}'_a(d)$}
\psline(85,18.12)(79.12,22.88)
\rput{0}(79.12,22.88){\pscustom[]{\psellipticarc(0,0)(6,6){0}{180}\closepath}}
\pspolygon[](81.88,18.12)(91.88,18.12)(91.88,12.19)(81.88,12.19)
\rput(86.88,15){$\mathcal{T}'_e$}
\rput(101.88,15){$\mathcal{T}'_f$}
\rput{0}(101.88,12.25){\pscustom[]{\psellipticarc(0,0)(6,6){0}{180}\closepath}}
\rput(110,5.62){$\mathcal{T}_j$}
\rput{0}(110,7.82){\pscustom[]{\psellipticarc(0,0)(5.01,5.01){176.51}{363.49}\closepath}}
\psline(101.88,12.5)(108.75,8.12)
\rput(73.12,15){$p_d$}
\rput(73.75,15.62){}
\rput(86.88,5.62){$\mathcal{T}_h$}
\rput{0}(86.88,7.82){\pscustom[]{\psellipticarc(0,0)(5.01,5.01){176.51}{363.49}\closepath}}
\rput(117.5,15){,}
\end{pspicture}

\]
where in the last diagram we define the primed tests as the product of the unprimed tests with any following deterministic effects, for example in the case of $E$,
\begin{equation}
\cT_{e \, H \rightarrow E}^{\prime \, E \rightarrow A}= \cT_{e \, H \rightarrow E}^{E \rightarrow A,E \rightarrow B, E \rightarrow C} \top_{E \rightarrow B}\top_{E \rightarrow C}.
\end{equation}
This result is equivalent to the statement that $P(a,d,f)$ fulfils the GMC for the original GDAG with $B$ and $C$ removed. Once this is done, we only need to note that the circuit has divided into two pieces, one referring to $ad$ but not $f$, and one referring to $f$ but not $ad$.  Recalling that the definition of operational-probabilistic theories requires that tests from $I$ to $I$ compose by multiplication, this establishes that $P(a,d,f)=P(a,d)P(f)$.

There are two main steps in this example, which are both relevant to the general case.  The first was to see that all nodes in $W$ (that is, $B$ and $C$) can be removed from the GDAG, in the following sense: if the probability distribution $P(x,y)$ fulfils the GMC on the whole graph $G$ then its restriction to  $G'=G \backslash W$ fulfils the GMC on $G'$.  Above this is symbolised by absorbing the deterministic effects corresponding to outcomes of nodes in $W$ into the preceding test.  Secondly, after this step, the circuit separates into two parts, and hence the probability distribution can be seen to factorise in the required way.

\subsection{The $d$-separation condition: general case}

We now seek to show that, as in the above example, $d$-separation in a GDAG $G$ implies conditional independence for all probability distributions that are GMC with respect to $G$. 

\begin{lemma}[Proof in \cref{proofsec}]
\label{l:part1}
Consider a GDAG $G$ and a subset $W \subseteq G$ that contains all of its own descendants. If probability distribution $P(g)$ fulfils the GMC on $G$ then the probability distribution $P(g')$ (derived from $P(g)$ by marginalising over outcomes in $W$) fulfils the GMC on $G'= G \backslash W$.
\end{lemma}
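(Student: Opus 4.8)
The plan is to follow the strategy illustrated by the worked example preceding the lemma, now carried out for an arbitrary downward-closed $W$. The first thing I would establish is the structural fact that makes everything work: since $W$ contains all of its own descendants, no edge of $G$ runs from $W$ into $G'=G\backslash W$. Indeed, if $w\in W$ and $w\to Y$ then $Y$ is a child, hence a descendant, of $w$, so $Y\in W$. Consequently every parent (observed or unobserved) of a node of $G'$ again lies in $G'$, so that each test $\cT_{x\s{i}}$ assigned by the GMC to a node $X\s{i}\in G'$ has all of its incoming unobserved edges and all of its controlling observed parents inside $G'$; these factors are therefore untouched by deleting $W$. The only way a $G'$ node interacts with $W$ is through outgoing edges that run into $W$.

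Next I would isolate the sub-circuit on $W$. Because no system leaves $W$, the sequential-and-parallel composition of the tests on all nodes of $W$ is a single test whose input is the composite system $S$ formed by the edges running from $G'$ into $W$ and whose output is trivial; that is, it is an observation-test, possibly conditioned on the values of those observed $G'$ nodes that control tests inside $W$. Marginalising over the outcomes in $W$ amounts to summing this effect over all of its outcomes (only the observed $W$ nodes carry nontrivial outcomes). By the causality assumption, summing all outcomes of a test yields a deterministic effect, and since the deterministic effect is unique this sum is $\top_S$, independent of the values of any controlling $G'$ nodes. This is the general form of the two observations made in the example: summing over $b$ gave $\top_{E\to B}$, and the node $C$ was already a deterministic effect.

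I would then decompose and reabsorb. By \cref{l:compose_effect}, $\top_S$ factorises into the product of the deterministic effects $\top_e$ on the individual edges $e$ running from $G'$ into $W$; each such $e$ is an outgoing unobserved edge, hence begins on an unobserved node $X\in G'$. For each such $X$ I define a primed test $\cT'_{x}$ by composing its original test $\cT_{x}$ with the deterministic effects on exactly those of its outgoing edges that enter $W$ (precisely as in the example, where $\cT'_{e}=\cT_{e}\,\top_{E\to B}\,\top_{E\to C}$). After this absorption, $\cT'_{x}$ has as output the composite of $X$'s outgoing edges remaining in $G'$, its input and outcome structure are unchanged, and the whole circuit now involves only nodes of $G'$. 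The resulting expression is exactly the GMC product for $G'$, built from the theory, system assignment, and (primed) tests inherited from $G$, so $P(g')$ is generalised Markov with respect to $G'$.

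The main obstacle is the bookkeeping in the middle step: making precise, for an arbitrary $W$ rather than the two explicit nodes of the example, that the entire $W$ sub-circuit collapses to $\top_S$ and that this collapse is independent of the classical control exerted on $W$ by observed nodes of $G'$. This is exactly where uniqueness of the deterministic effect (causality) and \cref{l:compose_effect} do the work: uniqueness guarantees the sum over outcomes is control-independent, so it can be pushed back onto the source nodes, and \cref{l:compose_effect} guarantees it splits edge-by-edge, so the reabsorption is local and reproduces the $G'$ decomposition node by node. I would not expect difficulty in checking that the primed tests meet the formal requirements of \cref{def:GMC} for $G'$, since neither the inputs nor the outcome sets of the $G'$ nodes are altered.
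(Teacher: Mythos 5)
Your proof is correct, and it rests on exactly the same two ingredients as the paper's --- uniqueness of the deterministic effect (causality) and \cref{l:compose_effect} --- but it is organised differently. The paper never collapses the whole $W$ sub-circuit at once: it argues by iteration, observing that a node maximal with respect to $W$ is maximal with respect to $G$ (this is where downward-closure enters), summing over that single node's outcomes to obtain the unique deterministic effect on its incoming systems, factorising that effect edge-by-edge via \cref{l:compose_effect}, absorbing each factor into the corresponding parent's test, and then repeating for the next maximal node until $W$ is exhausted. Your one-shot version instead composes the entire sub-circuit on $W$ into a single observation-test on the composite system $S$ of unobserved edges entering $W$, conditioned on those observed $G'$ nodes that control tests inside $W$, and applies causality once to conclude that the marginalised sub-circuit equals $\top_S$ independently of the conditioning. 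This is legitimate, but it leans on a heavier piece of the CDP framework: the well-definedness of such conditioned composite tests (which the paper notes CDP show to be equivalent to causality), whereas the paper's node-by-node induction only ever needs coarse-graining of a single test, so each step is elementary. What your route buys is brevity, the elimination of the induction, and a more transparent use of the structural fact that no edge leaves $W$; what the paper's route buys is that every step is checkable without invoking composition of an arbitrary conditioned sub-circuit.
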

This lemma can be applied to eliminate the set $W$ in the reformulation of $d$-separation given above, simplifying our task to proving the following.

\begin{lemma}[Proof in \cref{proofsec}]
\label{l:part2}
Let $X$, $Y$ and $Z$ be disjoint sets of observed nodes in a GDAG $G'$. Suppose $G'$ can be partitioned into $\{U,V,Z\}$ such that
\begin{gather}
\label{e:new1}
X \subseteq U, \,Y \subseteq V \\
\label{e:new2}
m(U) \cap m(V) = \emptyset.
\end{gather}
then $\ci{X}{Y}{Z}$ in any GMC probability distribution on $G'$.
\end{lemma}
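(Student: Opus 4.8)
The plan is to prove something slightly stronger that immediately implies $\ci{X}{Y}{Z}$: that the full joint distribution over the observed nodes factorises once the value $z$ of $Z$ is held fixed. Write $U_o$ and $V_o$ for the observed nodes lying in $U$ and $V$ respectively. Since $Z$ consists entirely of observed nodes and $\{U,V,Z\}$ partitions $G'$, the observed nodes are exactly $U_o\sqcup V_o\sqcup Z$, with $X\subseteq U_o$ and $Y\subseteq V_o$. It suffices to prove
\[
P(u_o,v_o,z)=f(u_o,z)\,g(v_o,z)
\]
for some functions $f,g$: summing the right-hand side over $U_o\setminus X$ and $V_o\setminus Y$ gives $P(x,y,z)=\tilde f(x,z)\,\tilde g(y,z)$, and the standard factorisation criterion (any distribution of this product form satisfies $\ci{X}{Y}{Z}$) then yields the lemma. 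So the whole task is to read this factorisation off the CDP circuit.

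First I would extract the graph-theoretic content of the hypothesis. From $\ich{U}\cap\ich{V}=\emptyset$ it follows that (a) there is no edge between $U$ and $V$ in either direction, and (b) no node is simultaneously a child of a node in $U$ and of a node in $V$. In particular each node of $Z$ is a child of at most one of $U$, $V$, which partitions $Z=Z_U\sqcup Z_V\sqcup Z_0$ according to whether its non-$Z$ parents lie in $U$, in $V$, or are absent. The decisive use of the observed-ness of $Z$ is this: every edge leaving a $Z$-node is an observed edge, so a $Z$-node influences its children only through classical control, which becomes a fixed parameter once we condition on $z$. Consequently no unobserved edge (i.e.\ no propagating system) ever crosses from the $U$-side to the $V$-side: an unobserved edge starts at an unobserved node of $U$ or $V$ and, by (a) and (b), can only terminate within the same side or at the corresponding $Z_U$ (resp.\ $Z_V$) node, where it is absorbed by an observation-test.

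With this in hand I would write $P$ as the product $\prod_i\cT_{x\s{i}}(\cpa{x\s{i}})$ of all the node tests and fix the outcomes to $u_o,v_o,z$. The $Z$-nodes are observed, so their tests are effects (trivial output system) consuming the incoming systems from their single side and emitting only the fixed classical control; the $Z_0$-nodes contribute scalar factors depending on $z$ alone. Because no system links the two sides and all $Z$-controls are frozen by $z$, the product splits into a closed $U$-side subcircuit on the nodes $U\cup Z_U$ (an $I\to I$ diagram, hence a scalar depending on $u_o$ and $z$), a closed $V$-side subcircuit on $V\cup Z_V$ (a scalar in $v_o,z$), and the $Z_0$ scalars. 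Invoking the defining rule that $I\to I$ tests compose by multiplication (and \cref{l:compose_effect} to split off the deterministic parts cleanly) gives exactly $P(u_o,v_o,z)=f(u_o,z)\,g(v_o,z)$, folding the $Z_0$ factors into $f$.

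The main obstacle is the middle step: rigorously arguing that the circuit genuinely disconnects into two closed sub-circuits. This rests on combining two points carefully — that $\ich{U}\cap\ich{V}=\emptyset$ forbids both direct $U$–$V$ edges and any shared (in particular any unobserved) child, and that the observed-ness of $Z$ downgrades all of $Z$'s outgoing influence to classical control that is frozen by conditioning. Making precise that a common $Z_0$-parent feeding observed edges into both sides, or a $Z_U$-node that also controls a $V$-node, creates only the allowed dependence through the conditioned value $z$ (and not a residual system-level link) is where the argument does its real work; the final factorisation is then a formal consequence of the composition rules, exactly as in the worked example preceding the statement.
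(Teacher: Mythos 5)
Your proposal is correct and takes essentially the same route as the paper's proof: both partition the circuit into a $U$-side and a $V$-side (the paper absorbs your $Z_0$ into the $V$-side by setting $Z_1 = Z \cap m(U)$ and $Z_2 = Z \setminus Z_1$), use $m(U)\cap m(V)=\emptyset$ plus the observedness of $Z$ to show that no systems and no unsummed classical outcomes link the two groups, and then invoke multiplication of closed ($I\to I$) circuits to obtain the product form. The only cosmetic difference is the final step: you marginalise and cite the standard product-form criterion for conditional independence, whereas the paper factorises $P(x,y,z)$ and $P(y,z)$ separately and cancels the common $V$-side factor in $P(x|y,z)=P(x,y,z)/P(y,z)$.
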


Finally, we can prove our $d$-separation theorem.

\begin{thm}
\label{t:dsep_general}

Let $G$ be a generalised DAG with disjoint observed subsets $X$, $Y$ and $Z$. Then
\begin{itemize}
\item[(i)] If $P$ is generalised Markov with respect to $G$, then $\dsep{X}{Y}{Z} \Rightarrow \ci{X}{Y}{Z}$.
\item[(ii)] If $\ci{X}{Y}{Z}$ holds for all $P$ which are generalised Markov with respect to $G$, then $\dsep{X}{Y}{Z}$.
\end{itemize}
\end{thm}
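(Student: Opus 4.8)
The plan is to prove the two halves separately: part (i) (soundness) by chaining the three lemmas already in hand, and part (ii) (completeness) by reducing to the classical $d$-separation theorem \cref{t:dsep}(ii) via the inclusion $\Cs \subseteq \Gs$.

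For part (i), suppose $\dsep{X}{Y}{Z}$. First I would invoke \cref{l:dsep} to obtain sets $U,V$ such that $\{U,V,Z,W\}$ is a partition of $G$ with $X \subseteq U$, $Y \subseteq V$, $m(U) \cap m(V) \subseteq W$, and $W = G \setminus \ian{X \cup Y \cup Z}$. To eliminate $W$ using \cref{l:part1} I must check that $W$ is descendant-closed. This is a short graph-theoretic argument: if some $w \in W$ had a descendant $d \notin W$, then $d \in \ian{X \cup Y \cup Z}$, so $d$ lies in or is an ancestor of a node of $X \cup Y \cup Z$; but then $w$, being an ancestor of $d$, would itself lie in $\ian{X \cup Y \cup Z}$, contradicting $w \in W$. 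Hence \cref{l:part1} applies, and the marginal $P(g')$ over $G' = G \setminus W$ is GMC on $G'$.

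After removing $W$, the sets $X, Y, Z$ all survive (they lie in $\ian{X \cup Y \cup Z}$, hence outside $W$), and $\{U,V,Z\}$ is a partition of $G'$ with $X \subseteq U$, $Y \subseteq V$. Computing $m$ relative to $G'$ deletes exactly the children lying in $W$, so $m(U) = m_G(U) \setminus W$ and likewise for $V$; therefore $m(U) \cap m(V) = (m_G(U) \cap m_G(V)) \setminus W = \emptyset$, since $m_G(U) \cap m_G(V) \subseteq W$. This is precisely the hypothesis of \cref{l:part2}, which delivers $\ci{X}{Y}{Z}$ for $P(g')$ on $G'$. Because $\ci{X}{Y}{Z}$ depends only on the joint marginal over $X \cup Y \cup Z$, which is left unchanged by marginalising over the disjoint set $W$, it also holds for the original $P$, completing part (i).

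For part (ii), I would argue directly through the classical theorem. Suppose $\ci{X}{Y}{Z}$ holds for every $P \in \Gs$; since $\Cs \subseteq \Gs$, it holds for every $P \in \Cs$. The key link is that the observed marginals of classical Bayesian networks on the underlying DAG $G'$ coincide with the distributions in $\Cs$: one inclusion is \cref{l:classicalGMNs}, and the reverse follows from a copying construction inside the classical operational-probabilistic theory of \cref{ex:classical_CDP}, in which each unobserved node computes its classical value and copies it along all outgoing systems, while classical influence between observed nodes is carried by the conditioned-test mechanism. Consequently $\ci{X}{Y}{Z}$ holds for every classical Bayesian network on $G'$ (the statement involves only observed nodes, so passing to the observed marginal is harmless), and \cref{t:dsep}(ii) then yields $\dsep{X}{Y}{Z}$.

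I expect the main obstacle to lie in the graph-theoretic bookkeeping of part (i) — verifying that $W$ is descendant-closed and that deleting it upgrades $m(U) \cap m(V) \subseteq W$ to the strict disjointness $m(U) \cap m(V) = \emptyset$ demanded by \cref{l:part2} — together with the reverse embedding in part (ii), namely that every classical Bayesian network on $G'$ realises an element of $\Cs$. The latter is conceptually routine given the classical operational theory, but it is the one point where the distinction between observed and unobserved edges (conditioned tests versus propagating systems) must be handled with care.
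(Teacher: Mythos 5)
Your proposal is correct and follows essentially the same route as the paper, whose proof of part (i) is exactly the chain \cref{l:dsep} $\to$ \cref{l:part1} $\to$ \cref{l:part2}, and whose part (ii) is exactly the reduction via $\Cs \subseteq \Gs$ to the classical \cref{t:dsep}(ii). The details you supply that the paper leaves implicit --- the descendant-closure of $W$, the upgrade of $m(U)\cap m(V)\subseteq W$ to disjointness in $G\setminus W$, invariance of the marginal on $X\cup Y\cup Z$, and the copying construction embedding classical Bayesian networks into $\Cs$ --- are all correct and genuinely needed for a complete argument.
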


\begin{proof}
To prove item (i), we combine \cref{l:part1} and \cref{l:part2}.  
Item (ii) is a consequence of $\Cs \subseteq \Gs$ and the classical \cref{t:dsep} part (ii).
\end{proof}

\section{Beyond conditional independence: quantitative bounds on correlations}\label{sec:bounds}

In the Bell scenario, Bell inequalities limit the classical correlations (establishing $\Cs \subsetneq \Qs$), and Tsirelson inequalities limit the quantum correlations (establishing $\Qs \subsetneq \Gs$). 
What limits the correlations in a general probabilistic theory? 
In the Bell scenario, a general probabilistic theory is limited \emph{only} by the no-signalling principle (see for example \cite{jon}).  
In our notation, this means that $\Gs = \Is$ for Bell GDAGs. 
Here we show that this fact does not extend to every scenario, i.e.~we provide examples for which $\Gs\subsetneq\Is$.
In other words, causal structure can impose quantitative limits \emph{beyond} the conditional independences between observed nodes, \emph{independently} of the precise physical theory under consideration.

\subsection{The triangle}\label{trisec}

\begin{figure}
  	\begin{center}
	\raisebox{-0.5\height}{
\begin{tikzpicture}[scale=1.2]
  \node[q](A) at (0,0){};
  \node[q](B) at (1,0){};
  \node[q](C) at (2,0){};
  \node[c](D) at (0,1){$A$}
    edge[e] (A)
    edge[e] (B);
  \node[c](E) at (1,1){$B$}
    edge[e] (A)
    edge[e] (C);
  \node[c](F) at (2,1){$C$}
    edge[e] (B)
    edge[e] (C);
  \end{tikzpicture}}
  \hspace{5em}
\raisebox{-0.5\height}{
\begin{tikzpicture}[scale=0.6]
  \node[q](A) at (0,0) {};
  \node[q](B) at (1,1.41) {};
  \node[q](C) at (-1,1.41) {};
  \node[c](D) at (2,0) {$B$}
  edge[e] (A)
  edge[e] (B);
  \node[c](E) at (-2,0) {$A$}
  edge[e] (A)
  edge[e] (C);
  \node[c](E) at (0,2.83) {$C$}
  edge[e] (B)
  edge[e] (C);
\end{tikzpicture}}
	\end{center}
\caption{The `triangle' GDAG drawn in two different ways.}\label{triangledag}
\end{figure}
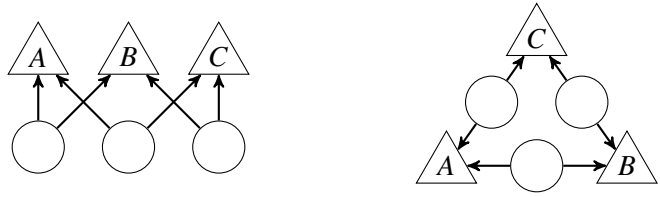

The triangle scenario, shown in \cref{triangledag}, has already received some interest in quantum foundations \cite{bilocal,tobias,chaves} and the causality literature \cite{commonancestors}. 
Branciard et al.~initially introduced the scenario
with definitions matching our $\Cs$ and $\Qs$  \cite{bilocal}. 
It was noted that understanding the classical correlations $\Cs$ in this scenario is much more mathematically challenging than in the Bell scenario. 
Nevertheless, Fritz showed that there exist quantum correlations for this scenario which cannot be reproduced using classical sources, i.e.~$\Cs \subsetneq \Qs$ \cite{tobias}. 
A key part of this proof was showing that any $P \in \Cs$ satisfies a `monogamy' inequality:
\begin{equation}
  \mutual{A}{B} + \mutual{B}{C} \leq \shan{B}. \label{tobiasineq}
\end{equation}
In other words, the stronger the correlations between $A$ and $B$, the weaker must be the correlations between $B$ and $C$.

This has some interesting consequences.
For example, note that there are no independences between observed nodes for this GDAG. 
Hence the `perfectly correlated bits' distribution $P(0,0,0) = P(1,1,1) = \frac12$ is in $\Is$.
However, this perfect correlation violates \cref{tobiasineq}, and hence cannot be produced using classical sources. 

Here we show that \cref{tobiasineq} this holds for any $P \in \Gs$, and hence perfect correlation cannot be produced in this GDAG using any generalised probabilistic theory.
In other words, $\Gs\subsetneq \Is$.
We do this by first proving an important fact about $\Gs$ in this scenario:
\begin{thm}\label{trithm}
  Suppose $P \in \Gs$ for the GDAG in \cref{triangledag}. Then there exists another probability distribution $P'$, such that:
  \begin{enumerate}
    \item\label{tri1} $P'(a,c) = P(a)P(c)$,
    \item\label{tri2} $P'(a,b) = P(a,b)$, and
    \item\label{tri3} $P'(b,c) = P(b,c)$.
  \end{enumerate}
\end{thm}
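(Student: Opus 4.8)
The plan is to start from a generalised-Markov decomposition of $P$ and build $P'$ by \emph{duplicating the single unobserved source that feeds both $A$ and $C$}, leaving everything else untouched. Label the three sources of the triangle (\cref{triangledag}) by the pair of observed nodes they feed: $S_{AB}$, $S_{AC}$ and $S_{BC}$. The source $S_{AC}$ is the unique common cause not connected to $B$, so it is the only route by which $A$ and $C$ can become correlated without passing through $B$. Severing precisely this route should decorrelate $A$ and $C$ while leaving the $A$--$B$ and $B$--$C$ marginals intact, since those correlations travel through $S_{AB}$ and $S_{BC}$ respectively.

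Concretely, since $P\in\Gs$ there is a causal operational-probabilistic theory and tests realising the product in \cref{def:GMC}. I would pass to the theory obtained by adjoining a second, independent copy $\tilde S_{AC}$ of the source $S_{AC}$; this is legitimate because $\Gs$ quantifies over all operational theories. Define $P'$ by the same circuit as $P$, except that the system carrying $S_{AC}$ into $C$ is replaced by the corresponding system of $\tilde S_{AC}$. That is, $A$ is fed by the $A$-branch of $S_{AC}$ with the deterministic effect $\top$ on its $C$-branch, and $C$ is fed by the $C$-branch of $\tilde S_{AC}$ with $\top$ on its $A$-branch; all other sources and all node tests are kept identical. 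Then $P'$ is itself generalised Markov for the triangle (though the theorem only asks for \emph{some} probability distribution).

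It remains to verify the three claims, each by marginalising the appropriate observed node and collapsing the resulting diagram. For the second claim, marginalise over $c$: by the causality assumption the sum over $C$'s outcomes is $\top$, and by \cref{l:compose_effect} this effect splits across the branches feeding $C$, so $\tilde S_{AC}$ is entirely capped by deterministic effects and contributes a scalar $1$; what remains is exactly the circuit computing $P(a,b)$, whence $P'(a,b)=P(a,b)$. The third claim is symmetric, marginalising over $a$ so that $S_{AC}$ is capped instead. For the first claim, marginalise over $b$: now $\top$ caps $B$, and after splitting the deterministic effects one sees that the surviving circuit uses $S_{AB}, S_{AC}$ only through $A$ and $S_{BC}, \tilde S_{AC}$ only through $C$, with no shared system. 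The diagram therefore disconnects into an $A$-piece and a $C$-piece, and since tests from $I$ to $I$ compose by multiplication, $P'(a,c)=P'(a)P'(c)$. The same capping argument gives $P'(a)=P(a)$ and $P'(c)=P(c)$, yielding $P'(a,c)=P(a)P(c)$.

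The main obstacle here is bookkeeping rather than conceptual: in each marginalisation one must carefully track which deterministic effects appear and apply \cref{l:compose_effect} to confirm that the duplicated source is absorbed without altering the reduced state seen by the node that survives. The one genuine point to check is that the reduced input $A$ receives from $S_{AC}$ (capped by $\top$ on its $C$-branch) is \emph{identical} in $P$ and $P'$, and symmetrically for the input $C$ receives from $\tilde S_{AC}$. This identity is exactly what forces the two preserved marginals to be equal on the nose rather than merely approximately equal, and verifying it is the crux of the argument.
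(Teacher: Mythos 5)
Your proposal is correct and takes essentially the same approach as the paper: the paper builds the same $P'$ by duplicating the $A$--$C$ source and capping the unused branches with the deterministic effect, then verifies claims 2 and 3 exactly as you do via \cref{l:compose_effect}. The only presentational difference is claim 1, where the paper notes that $P'$ is generalised Markov for a GDAG in which $A$ is $d$-separated from $C$ and invokes \cref{t:dsep_general}, whereas you inline the underlying disconnection argument directly---which is the same reasoning, since that factorisation argument is precisely how the soundness of $d$-separation is proved.
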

For a given $P$, the existence of $P'$ is then a linear feasibility problem (studied in \cite{ncycle1,ncycle2}), and hence an efficiently checkable necessary condition for $P \in \Gs$ (and thus also for $\Qs$ and $\Cs$).
\begin{proof}
By the definition of $\Gs$, there exists a causal operational-probabilistic theory with preparations $\rho$, $\sigma$, $\tau$ and observation-tests $\{e_a\}$, $\{f_b\}$, $\{g_c\}$ such that
\[
\ifx\JPicScale\undefined\def\JPicScale{1}\fi
\psset{unit=\JPicScale mm}
\psset{linewidth=0.3,dotsep=1,hatchwidth=0.3,hatchsep=1.5,shadowsize=1,dimen=middle}
\psset{dotsize=0.7 2.5,dotscale=1 1,fillcolor=black}
\psset{arrowsize=1 2,arrowlength=1,arrowinset=0.25,tbarsize=0.7 5,bracketlength=0.15,rbracketlength=0.15}
\begin{pspicture}(0,0)(64.62,16.13)
\rput(5,7.92){$P(a,b,c)=$}
\rput(21.11,7.14){}
\rput(21.1,0.56){}
\rput(21,13.3){$e_a$}
\psline(17.5,4.8)(17.5,11.05)
\rput{0}(20.63,11.05){\pscustom[]{\psellipticarc(0,0)(5,5){-0.57}{180.57}\closepath}}
\rput(21.05,2.43){$\rho$}
\rput{90}(20.75,4.74){\pscustom[]{\psellipticarc(0,0)(4.94,4.94){90.21}{269.79}\closepath}}
\rput(45,0.25){}
\rput(40.11,7.09){}
\rput(40.1,0.51){}
\rput(40,13.25){$e_b$}
\psline(37.5,4.83)(23.75,11.05)
\rput{0}(40,11.05){\pscustom[]{\psellipticarc(0,0)(5,5){-0.57}{180.57}\closepath}}
\rput(40.05,2.38){$\sigma$}
\rput{0}(40.05,4.72){\pscustom[]{\psellipticarc(0,0)(4.94,4.94){-179.67}{-0.33}\closepath}}
\psline(23.75,4.8)(37.5,11)
\rput(59.73,7.12){}
\rput(59.72,0.53){}
\rput(59.62,13.27){$e_c$}
\psline(62.5,4.8)(62.5,11.05)
\rput{0}(59.38,11.13){\pscustom[]{\psellipticarc(0,0)(5,5){-0.57}{180.57}\closepath}}
\rput(59.68,2.4){$\tau$}
\rput{0}(59.68,4.74){\pscustom[]{\psellipticarc(0,0)(4.95,4.95){-179.73}{-0.27}\closepath}}
\psline(42.5,4.8)(56.88,11.05)
\psline(56.88,4.8)(42.5,11.05)
\end{pspicture}

\]
We can use these, along with the unique deterministic effect, to define
\[
\ifx\JPicScale\undefined\def\JPicScale{1}\fi
\psset{unit=\JPicScale mm}
\psset{linewidth=0.3,dotsep=1,hatchwidth=0.3,hatchsep=1.5,shadowsize=1,dimen=middle}
\psset{dotsize=0.7 2.5,dotscale=1 1,fillcolor=black}
\psset{arrowsize=1 2,arrowlength=1,arrowinset=0.25,tbarsize=0.7 5,bracketlength=0.15,rbracketlength=0.15}
\begin{pspicture}(0,0)(64.62,16.33)
\rput(5,8.13){$P'(a,b,c)=$}
\rput(21.11,7.35){}
\rput(21.1,0.76){}
\rput(21,13.5){$e_a$}
\psline(17.5,5)(17.5,11.25)
\rput{90}(20.63,11.2){\pscustom[]{\psellipticarc(0,0)(5.05,5){-90}{90}\closepath}}
\rput(21.05,2.63){$\rho$}
\rput{0}(20.75,4.64){\pscustom[]{\psellipticarc(0,0)(4.95,4.64){176.54}{363.46}\closepath}}
\rput(45,0.45){}
\rput(40.11,7.29){}
\rput(40,13.45){$e_b$}
\psline(30.62,5)(23.75,11.25)
\rput{90}(40,11.2){\pscustom[]{\psellipticarc(0,0)(5.05,5){-90}{90}\closepath}}
\psline(23.75,5)(37.5,11.2)
\rput(59.73,7.32){}
\rput(59.72,0.73){}
\rput(59.62,13.48){$e_c$}
\psline(62.5,5)(62.5,11.25)
\rput{90}(59.37,11.28){\pscustom[]{\psellipticarc(0,0)(5.05,5){-90}{90}\closepath}}
\rput(59.68,2.61){$\tau$}
\rput{0}(59.68,4.64){\pscustom[]{\psellipticarc(0,0)(4.95,4.64){176.54}{363.46}\closepath}}
\psline(50,5)(56.88,11.25)
\psline(56.88,5)(42.5,11.25)
\psline[linewidth=0.25,fillcolor=blue,fillstyle=solid](44.38,5.06)(44.38,7.45)
\psline[linewidth=0.25,fillcolor=blue,fillstyle=solid](42.5,7.42)(46.25,7.42)
\psline[linewidth=0.25,fillcolor=blue,fillstyle=solid](43.44,8.4)(45.31,8.4)
\psline[linewidth=0.25,fillcolor=blue,fillstyle=solid](44.14,9.38)(44.61,9.38)
\psline[linewidth=0.25,fillcolor=blue,fillstyle=solid](42.97,7.9)(45.78,7.9)
\psline[linewidth=0.25,fillcolor=blue,fillstyle=solid](43.91,8.89)(44.84,8.89)
\rput(38.85,0.71){}
\rput(33.12,2.5){$\sigma$}
\rput{0}(33.25,4.64){\pscustom[]{\psellipticarc(0,0)(4.95,4.64){176.54}{363.46}\closepath}}
\rput(53.22,0.71){}
\rput(47.5,2.5){$\sigma$}
\rput{0}(47.63,4.64){\pscustom[]{\psellipticarc(0,0)(4.95,4.64){176.54}{363.46}\closepath}}
\psline[linewidth=0.25,fillcolor=blue,fillstyle=solid](36.25,5.06)(36.25,7.45)
\psline[linewidth=0.25,fillcolor=blue,fillstyle=solid](34.38,7.42)(38.12,7.42)
\psline[linewidth=0.25,fillcolor=blue,fillstyle=solid](35.31,8.4)(37.19,8.4)
\psline[linewidth=0.25,fillcolor=blue,fillstyle=solid](36.02,9.38)(36.48,9.38)
\psline[linewidth=0.25,fillcolor=blue,fillstyle=solid](34.85,7.9)(37.65,7.9)
\psline[linewidth=0.25,fillcolor=blue,fillstyle=solid](35.78,8.89)(36.71,8.89)
\end{pspicture}

\]
Notice that $P' \in \Gs'$ for the GDAG $\Gs'$ depicted in \cref{primeddag}. Since $A$ is $d$-separated from $C$ in this GDAG, \cref{t:dsep_general} gives $P'(a,c)=P'(a)P'(c)$, which, once we have also established \cref{tri2,tri3}, gives \cref{tri1}.

\begin{figure}
\begin{center}
\begin{tikzpicture}
\node[q](A) at (0,0){};
\node[q](B1) at (1,0){};
\node[q](B2) at (3,0){};
\node[q](C) at (4,0){};
\node[c](D) at (0,1){$A$}
edge[e] (A)
edge[e] (B1);
\node[c](E) at (2,1){$B$}
edge[e] (A)
edge[e] (C);
\node[c](F) at (4,1){$C$}
edge[e] (B2)
edge[e] (C);
\end{tikzpicture}
\end{center}
\caption{The GDAG for $P'$.}
\label{primeddag}
\end{figure}
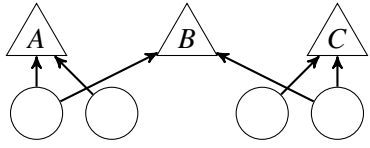

Using \cref{l:compose_effect}, we find
\[
\ifx\JPicScale\undefined\def\JPicScale{1}\fi
\psset{unit=\JPicScale mm}
\psset{linewidth=0.3,dotsep=1,hatchwidth=0.3,hatchsep=1.5,shadowsize=1,dimen=middle}
\psset{dotsize=0.7 2.5,dotscale=1 1,fillcolor=black}
\psset{arrowsize=1 2,arrowlength=1,arrowinset=0.25,tbarsize=0.7 5,bracketlength=0.15,rbracketlength=0.15}
\begin{pspicture}(0,0)(68,16.05)
\rput(-5.62,7.5){$P(a,b)=\sum_c P(a,b,c)=$}
\rput(21.11,7.14){}
\rput(21.1,0.56){}
\rput(21,13.3){$e_a$}
\psline(17.5,4.8)(17.5,11.05)
\rput{0}(20.63,11.05){\pscustom[]{\psellipticarc(0,0)(5,5){-0.57}{180.57}\closepath}}
\rput(21.05,2.43){$\rho$}
\rput{0}(20.75,4.74){\pscustom[]{\psellipticarc(0,0)(4.94,4.94){-179.79}{-0.21}\closepath}}
\rput(45,0.25){}
\rput(40.11,7.09){}
\rput(40.1,0.51){}
\rput(40,13.25){$e_b$}
\psline(37.5,4.83)(23.75,11.05)
\rput{0}(40,11.05){\pscustom[]{\psellipticarc(0,0)(5,5){-0.57}{180.57}\closepath}}
\rput(40.05,2.38){$\sigma$}
\rput{0}(40.05,4.72){\pscustom[]{\psellipticarc(0,0)(4.94,4.94){-179.67}{-0.33}\closepath}}
\psline(23.75,4.8)(37.5,11)
\rput(59.73,7.12){}
\rput(59.72,0.53){}
\psline(62.5,4.8)(62.5,11.05)
\rput(59.68,2.4){$\tau$}
\rput{0}(59.68,4.74){\pscustom[]{\psellipticarc(0,0)(4.95,4.95){-179.73}{-0.27}\closepath}}
\psline(42.5,4.8)(57,11)
\psline(56.88,4.8)(42.5,11.05)
\psline[linewidth=0.25,fillcolor=blue,fillstyle=solid](56.88,10.77)(56.88,13.16)
\psline[linewidth=0.25,fillcolor=blue,fillstyle=solid](55,13.12)(58.75,13.12)
\psline[linewidth=0.25,fillcolor=blue,fillstyle=solid](55.94,14.11)(57.81,14.11)
\psline[linewidth=0.25,fillcolor=blue,fillstyle=solid](56.64,15.09)(57.11,15.09)
\psline[linewidth=0.25,fillcolor=blue,fillstyle=solid](55.47,13.61)(58.28,13.61)
\psline[linewidth=0.25,fillcolor=blue,fillstyle=solid](56.41,14.6)(57.34,14.6)
\psline[linewidth=0.25,fillcolor=blue,fillstyle=solid](62.5,10.77)(62.5,13.16)
\psline[linewidth=0.25,fillcolor=blue,fillstyle=solid](60.62,13.12)(64.38,13.12)
\psline[linewidth=0.25,fillcolor=blue,fillstyle=solid](61.56,14.11)(63.44,14.11)
\psline[linewidth=0.25,fillcolor=blue,fillstyle=solid](62.27,15.09)(62.73,15.09)
\psline[linewidth=0.25,fillcolor=blue,fillstyle=solid](61.1,13.61)(63.9,13.61)
\psline[linewidth=0.25,fillcolor=blue,fillstyle=solid](62.03,14.6)(62.96,14.6)
\rput(68,9){,}
\end{pspicture}

\]
\[
\ifx\JPicScale\undefined\def\JPicScale{1}\fi
\psset{unit=\JPicScale mm}
\psset{linewidth=0.3,dotsep=1,hatchwidth=0.3,hatchsep=1.5,shadowsize=1,dimen=middle}
\psset{dotsize=0.7 2.5,dotscale=1 1,fillcolor=black}
\psset{arrowsize=1 2,arrowlength=1,arrowinset=0.25,tbarsize=0.7 5,bracketlength=0.15,rbracketlength=0.15}
\begin{pspicture}(0,0)(66.88,16.25)
\rput(-5.62,8.12){$P'(a,b)=\sum_c P'(a,b,c)=$}
\rput(21.11,7.35){}
\rput(21.1,0.76){}
\rput(21,13.5){$e_a$}
\psline(17.5,5)(17.5,11.25)
\rput{90}(20.63,11.2){\pscustom[]{\psellipticarc(0,0)(5.05,5){-90}{90}\closepath}}
\rput(21.05,2.63){$\rho$}
\rput{0}(20.75,4.64){\pscustom[]{\psellipticarc(0,0)(4.95,4.64){176.54}{363.46}\closepath}}
\rput(45,0.45){}
\rput(40.11,7.29){}
\rput(40,13.45){$e_b$}
\psline(30.62,5)(23.75,11.25)
\rput{90}(40,11.2){\pscustom[]{\psellipticarc(0,0)(5.05,5){-90}{90}\closepath}}
\psline(23.75,5)(37.5,11.2)
\rput(59.73,7.32){}
\rput(59.72,0.73){}
\psline(62.5,5)(62.5,11.25)
\rput(59.68,2.61){$\tau$}
\rput{0}(59.68,4.64){\pscustom[]{\psellipticarc(0,0)(4.95,4.64){176.54}{363.46}\closepath}}
\psline(50,5)(56.88,11.25)
\psline(56.88,5)(42.5,11.25)
\psline[linewidth=0.25,fillcolor=blue,fillstyle=solid](44.38,5.06)(44.38,7.45)
\psline[linewidth=0.25,fillcolor=blue,fillstyle=solid](42.5,7.42)(46.25,7.42)
\psline[linewidth=0.25,fillcolor=blue,fillstyle=solid](43.44,8.4)(45.31,8.4)
\psline[linewidth=0.25,fillcolor=blue,fillstyle=solid](44.14,9.38)(44.61,9.38)
\psline[linewidth=0.25,fillcolor=blue,fillstyle=solid](42.97,7.9)(45.78,7.9)
\psline[linewidth=0.25,fillcolor=blue,fillstyle=solid](43.91,8.89)(44.84,8.89)
\rput(38.85,0.71){}
\rput(33.12,2.5){$\sigma$}
\rput{0}(33.25,4.64){\pscustom[]{\psellipticarc(0,0)(4.95,4.64){176.54}{363.46}\closepath}}
\rput(53.22,0.71){}
\rput(47.5,2.5){$\sigma$}
\rput{0}(47.63,4.64){\pscustom[]{\psellipticarc(0,0)(4.95,4.64){176.54}{363.46}\closepath}}
\psline[linewidth=0.25,fillcolor=blue,fillstyle=solid](36.25,5.06)(36.25,7.45)
\psline[linewidth=0.25,fillcolor=blue,fillstyle=solid](34.38,7.42)(38.12,7.42)
\psline[linewidth=0.25,fillcolor=blue,fillstyle=solid](35.31,8.4)(37.19,8.4)
\psline[linewidth=0.25,fillcolor=blue,fillstyle=solid](36.02,9.38)(36.48,9.38)
\psline[linewidth=0.25,fillcolor=blue,fillstyle=solid](34.85,7.9)(37.65,7.9)
\psline[linewidth=0.25,fillcolor=blue,fillstyle=solid](35.78,8.89)(36.71,8.89)
\psline[linewidth=0.25,fillcolor=blue,fillstyle=solid](56.88,11.3)(56.88,13.7)
\psline[linewidth=0.25,fillcolor=blue,fillstyle=solid](55,13.66)(58.75,13.66)
\psline[linewidth=0.25,fillcolor=blue,fillstyle=solid](55.94,14.65)(57.81,14.65)
\psline[linewidth=0.25,fillcolor=blue,fillstyle=solid](56.64,15.62)(57.11,15.62)
\psline[linewidth=0.25,fillcolor=blue,fillstyle=solid](55.47,14.15)(58.28,14.15)
\psline[linewidth=0.25,fillcolor=blue,fillstyle=solid](56.41,15.14)(57.34,15.14)
\psline[linewidth=0.25,fillcolor=blue,fillstyle=solid](62.5,11.39)(62.5,13.79)
\psline[linewidth=0.25,fillcolor=blue,fillstyle=solid](60.62,13.75)(64.38,13.75)
\psline[linewidth=0.25,fillcolor=blue,fillstyle=solid](61.56,14.73)(63.44,14.73)
\psline[linewidth=0.25,fillcolor=blue,fillstyle=solid](62.27,15.71)(62.73,15.71)
\psline[linewidth=0.25,fillcolor=blue,fillstyle=solid](61.1,14.23)(63.9,14.23)
\psline[linewidth=0.25,fillcolor=blue,fillstyle=solid](62.03,15.23)(62.96,15.23)
\rput(66.88,8.12){.}
\end{pspicture}

\]
\[
\ifx\JPicScale\undefined\def\JPicScale{1}\fi
\psset{unit=\JPicScale mm}
\psset{linewidth=0.3,dotsep=1,hatchwidth=0.3,hatchsep=1.5,shadowsize=1,dimen=middle}
\psset{dotsize=0.7 2.5,dotscale=1 1,fillcolor=black}
\psset{arrowsize=1 2,arrowlength=1,arrowinset=0.25,tbarsize=0.7 5,bracketlength=0.15,rbracketlength=0.15}
\begin{pspicture}(0,0)(45,10.23)
\rput(30,5){$=1$}
\rput(21.1,0.56){}
\rput(45,0.25){}
\rput(40.1,0.51){}
\rput(19.1,7.66){}
\rput(19.09,1.07){}
\rput(19.05,2.94){$\sigma$}
\rput{0}(19.05,5.28){\pscustom[]{\psellipticarc(0,0)(4.95,4.95){-179.73}{-0.27}\closepath}}
\psline[linewidth=0.25,fillcolor=blue,fillstyle=solid](15.62,5.06)(15.62,7.45)
\psline[linewidth=0.25,fillcolor=blue,fillstyle=solid](13.75,7.41)(17.5,7.41)
\psline[linewidth=0.25,fillcolor=blue,fillstyle=solid](14.69,8.4)(16.56,8.4)
\psline[linewidth=0.25,fillcolor=blue,fillstyle=solid](15.39,9.38)(15.86,9.38)
\psline[linewidth=0.25,fillcolor=blue,fillstyle=solid](14.22,7.9)(17.03,7.9)
\psline[linewidth=0.25,fillcolor=blue,fillstyle=solid](15.16,8.89)(16.09,8.89)
\psline[linewidth=0.25,fillcolor=blue,fillstyle=solid](21.88,5.06)(21.88,7.45)
\psline[linewidth=0.25,fillcolor=blue,fillstyle=solid](20,7.41)(23.75,7.41)
\psline[linewidth=0.25,fillcolor=blue,fillstyle=solid](20.94,8.4)(22.81,8.4)
\psline[linewidth=0.25,fillcolor=blue,fillstyle=solid](21.64,9.38)(22.11,9.38)
\psline[linewidth=0.25,fillcolor=blue,fillstyle=solid](20.47,7.9)(23.28,7.9)
\psline[linewidth=0.25,fillcolor=blue,fillstyle=solid](21.41,8.89)(22.34,8.89)
\end{pspicture}

\]
giving \cref{tri2}. \Cref{tri3} follows similarly.
\end{proof}
\begin{cor}\label{gentobias}
\Cref{tobiasineq} holds whenever $P \in \Gs$ for the GDAG in \cref{triangledag}.
\end{cor}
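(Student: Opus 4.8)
The plan is to derive \cref{tobiasineq} for $P \in \Gs$ from \cref{trithm} by a short entropic argument, with the surrogate distribution $P'$ doing all the work. First I would invoke \cref{trithm} to obtain $P'$ satisfying $P'(a,c)=P(a)P(c)$, $P'(a,b)=P(a,b)$ and $P'(b,c)=P(b,c)$. The key observation is that every quantity in \cref{tobiasineq} depends only on marginals that $P$ and $P'$ share: $\mutual{A}{B}$ is determined by the bivariate marginal on $(A,B)$, $\mutual{B}{C}$ by the marginal on $(B,C)$, and $\shan{B}$ by the marginal on $B$ (which is the same whether read off from $P'(a,b)$ or from $P'(b,c)$). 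Hence the inequality holds for $P$ if and only if it holds for $P'$, and it suffices to prove $\mutual{A}{B}+\mutual{B}{C}\le\shan{B}$ for $P'$.

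Second, I would exploit the one feature that $P'$ adds over $P$, namely that property~\ref{tri1} makes $A$ and $C$ independent, so $\mutual{A}{C}=0$ under $P'$. Applying the chain rule for mutual information two ways, $\mutual{AB}{C} = \mutual{A}{C} + \cmutual{B}{C}{A} = \mutual{B}{C} + \cmutual{A}{C}{B}$, and setting $\mutual{A}{C}=0$ gives $\cmutual{B}{C}{A} = \mutual{B}{C} + \cmutual{A}{C}{B} \ge \mutual{B}{C}$, using nonnegativity of conditional mutual information. The chain rule in the other variable gives $\mutual{B}{AC} = \mutual{A}{B} + \cmutual{B}{C}{A}$. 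Combining, $\mutual{A}{B}+\mutual{B}{C} \le \mutual{A}{B} + \cmutual{B}{C}{A} = \mutual{B}{AC} \le \shan{B}$, where the final bound is the standard fact that the mutual information of $B$ with anything cannot exceed $\shan{B}$.

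The substantive content lives entirely in \cref{trithm}; once it supplies the $A$--$C$-independent surrogate with the correct pairwise marginals, what remains is a routine manipulation of chain rules and nonnegativity. I therefore expect no real obstacle in this corollary: the only point requiring care is the marginal-only-dependence observation that licenses replacing $P$ by $P'$ inside \cref{tobiasineq}, and I would state that explicitly before running the entropy estimate.
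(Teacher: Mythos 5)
Your proposal is correct and follows essentially the same route as the paper: both invoke \cref{trithm} to obtain the surrogate $P'$ with $\mutual{A}{C}=0$, both transfer the inequality back to $P$ via the shared pairwise marginals, and your two chain-rule steps amount to exactly the paper's single identity $\mutual{A}{B}+\mutual{B}{C}=\shan{B}+\mutual{A}{C}-\cmutual{A}{C}{B}-\cshan{B}{AC}$ together with nonnegativity of $\cmutual{A}{C}{B}$ and $\cshan{B}{AC}$.
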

\begin{proof}
For any probability distribution $\cmutual{A}{C}{B} \geq 0$ and $\cshan{B}{AC} \geq 0$ and so
\begin{equation}
  \mutual{A}{B} + \mutual{B}{C} = \shan{B} + \mutual{A}{C} - \cmutual{A}{C}{B} - \cshan{B}{AC} \leq \shan{B} + \mutual{A}{C}.
\end{equation}
Applying \cref{trithm} we obtain a $P'$ with $\mutual{A}{C} = 0$ so that
\begin{equation}
  \mutual{A}{B} + \mutual{B}{C} \leq \shan{B}.
\end{equation}
But this inequality only involves $P'(a,b)$ and $P'(b,c)$, which equal $P(a,b)$ and $P(b,c)$ respectively, and so this inequality holds for $P$ as well.
\end{proof}

\subsection{The instrumental GDAG}\label{instrsec}
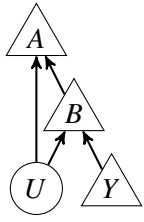
\begin{figure}
\begin{center}
\begin{tikzpicture}
  \node[q](A) at (0,0) {$U$};
  \node[c](B) at (1,0) {$Y$};
  \node[c](C) at (0.5,1) {$B$}
  edge[e] (A)
  edge[e] (B);
  \node[c](D) at (0,2) {$A$}
  edge[e] (A)
  edge[e] (C);
\end{tikzpicture}
\end{center}
\caption{The relevant GDAG for ``instrumental inequalities''.}
\label{instrdag}
\end{figure}

The fact that $\Cs \subsetneq \Is$ for the GDAG in \cref{instrdag} has already been noted in the causality literature \cite{instrument}. The original interest in this DAG arose in the study of cases of imperfect compliance in a controlled trial. 
For example $Y$ might be a randomly assigned treatment, $B$ the treatment the patient actually follows, and $A$ recovery. There could be factors $U$ that influence both the chance of recovery under each treatment, and also the chance of compliance with a particular treatment. This model does not imply any conditional independences on $\{A, B, Y\}$, but in \cite{instrument}, it is shown that it can still be tested because for any $P \in \Cs$,
\begin{equation}
  \max_b \sum_a \max_y P(a,b|y) \leq 1 \label{instrineq}.
\end{equation}

This is known as the \emph{instrumental inequality}. Here we strengthen this result to
\begin{thm}
\Cref{instrineq} holds for any $P \in \Gs$.
\end{thm}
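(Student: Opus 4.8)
The plan is to unpack the generalised Markov decomposition for the instrumental GDAG of \cref{instrdag} and then close the bound by absorbing the offending maximisation into the deterministic effect. Labelling the two unobserved edges out of $U$ by systems $S_1$ (into $B$) and $S_2$ (into $A$), the GMC (\cref{def:GMC}) supplies: a state $\rho$ on the composite system $S_1 S_2$ from the node $U$; a probability distribution $p_y$ from the node $Y$; for each setting $y$ an observation-test $\{e_b^y\}_b$ on $S_1$ from the node $B$; and, crucially, for each value $b$ an observation-test $\{f_a^b\}_a$ on $S_2$ from the node $A$. The key structural point, which I would flag immediately, is that because there is no edge $Y \to A$, the effects $f_a^b$ depend only on $b$ and not on $y$. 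Composing these tests from $I$ to $I$ (and dividing out $p_y$, noting the conditional is in any case read directly off the tests) gives $P(a,b|y) = (e_b^y \otimes f_a^b)(\rho)$.

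Next I would record two consequences of the causality assumption. Since $\{e_{b'}^y\}_{b'}$ is an observation-test, summing over its outcomes yields the deterministic effect, so $\top_{S_1} - e_b^y = \sum_{b' \neq b} e_{b'}^y$ is itself a (coarse-grained) effect; applying the composite effect $(\top_{S_1} - e_b^y)\otimes f_a^b$ to $\rho$ then produces a non-negative probability, which gives the pointwise bound $(e_b^y \otimes f_a^b)(\rho) \le (\top_{S_1}\otimes f_a^b)(\rho)$ for \emph{every} $y$. Likewise, summing the test $\{f_a^b\}_a$ over $a$ gives $\sum_a f_a^b = \top_{S_2}$.

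Then I would fix the outer index $b$ and, for each $a$, let $y_a$ attain $\max_y P(a,b|y)$. The pointwise bound erases the dependence on the maximising setting:
\[
\sum_a \max_y P(a,b|y) = \sum_a (e_b^{y_a}\otimes f_a^b)(\rho) \le \sum_a (\top_{S_1}\otimes f_a^b)(\rho) = \Big(\top_{S_1}\otimes \sum_a f_a^b\Big)(\rho) = (\top_{S_1}\otimes\top_{S_2})(\rho).
\]
By \cref{l:compose_effect} the right-hand side equals $\top_{S_1 S_2}(\rho) = 1$, since $\rho$ is a normalised state. As this holds for every $b$, taking the maximum over $b$ yields \cref{instrineq}.

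The main obstacle is not any single estimate but getting the decomposition and the direction of the bound exactly right: the argument only closes because the troublesome $\max_y$ sits entirely on the $S_1$ factor (the effects $e_b^y$), which can be bounded by $\top_{S_1}$ uniformly in $y$, whereas the $S_2$ factor carries the summation over $a$ through effects $f_a^b$ that are independent of $y$. I would therefore be careful to justify, within the bare CDP axioms, that effects are closed under coarse-graining and that $(\top_{S_1}-e_b^y)\otimes f_a^b$ applied to a state is a genuine non-negative probability — these are the only places where the structure of a causal operational-probabilistic theory is actually used.
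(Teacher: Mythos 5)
Your proof is correct. It reaches the inequality by a recognisably different route than the paper, although both arguments exploit the same structural fact, namely that the absence of an edge $Y \to A$ forces the effects at $A$ to depend only on $b$, so the troublesome $\max_y$ only ever touches the $S_1$ factor. The paper proceeds by promoting the measurement choice at $A$ to a free setting $x$, thereby defining an auxiliary no-signalling distribution $P'(a,b|x,y)$ with $P(a,b|y)=P'(a,b|x=b,y)$, and then adapting the classical derivation of \cite{instrument}: the chain rule, no-signalling from $y$ to $a$, and $P'(b|a,x,y)\le 1$ give $\sum_a P(a,b|y(a,b)) \le \sum_a P'(a|x=b) = 1$. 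You never introduce $P'$: your pointwise bound $(e_b^y\otimes f_a^b)(\rho)\le(\top_{S_1}\otimes f_a^b)(\rho)$ is precisely the paper's ``no-signalling plus $P'(b|a,x,y)\le 1$'' step, derived instead from positivity of coarse-grained effects, and your closing sum $\sum_a(\top_{S_1}\otimes f_a^b)(\rho)=\top_{S_1 S_2}(\rho)=1$ plays the role of $\sum_a P'(a|x=b)=1$, with \cref{l:compose_effect} and normalisation of the deterministic preparation doing the work. What your version buys is self-containedness: the whole argument lives at the level of effects in the operational theory, and the two places where causality (uniqueness of the deterministic effect) enters are made explicit. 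What the paper's version buys is modularity: it isolates the reusable fact that the instrumental inequality holds for \emph{every} no-signalling box, so the GPT statement becomes a corollary of a purely probabilistic result.
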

\begin{proof}
Since $P \in \Gs$, there is a bipartite preparation-test at $U$ and choices of observation-test at $A$ and $B$:
\[
\ifx\JPicScale\undefined\def\JPicScale{1}\fi
\psset{unit=\JPicScale mm}
\psset{linewidth=0.3,dotsep=1,hatchwidth=0.3,hatchsep=1.5,shadowsize=1,dimen=middle}
\psset{dotsize=0.7 2.5,dotscale=1 1,fillcolor=black}
\psset{arrowsize=1 2,arrowlength=1,arrowinset=0.25,tbarsize=0.7 5,bracketlength=0.15,rbracketlength=0.15}
\begin{pspicture}(0,0)(40,29)
\rput(40,8){}
\rput(5,15){$P(a,b,y)=$}
\rput(17,13.82){}
\psline(17.99,10)(18,7.98)
\rput(17.99,5){$\mathcal{\rho}$}
\rput(16.99,6){}
\rput{0}(17.81,8.31){\pscustom[]{\psellipticarc(0,0)(5.32,5.32){-177.98}{-2.02}\closepath}}
\rput(28.75,18.12){$f_b(y)$}
\psline(20,8)(28.75,15.62)
\rput(32,6){$p_y$}
\rput(18.12,25.62){$e_a(b)$}
\psline(17.99,8)(18,23)
\rput{0}(18,23){\pscustom[]{\psellipticarc(0,0)(6,6){0}{180}\closepath}}
\rput{0}(28.5,15.62){\pscustom[]{\psellipticarc(0,0)(6,6){0}{180}\closepath}}
\end{pspicture}
 
\]
These can be used to define a no-signalling distribution $P'(a, b|x,y)$ such that $P(a,b|y) = P'(a,b|x=b,y)$. Using no-signalling from $y$ to $a$ we can write $P'(a,b|x,y) = P'(a|x)P'(b|a,x,y)$. We can now adapt the proof in \cite{instrument} as follows.

For each $(a,b)$, define $y(a,b)$ as the choice of $y$ the maximizes $P(a,b|y)$. Then
\begin{equation}
  \sum_a P(a,b|y(a,b)) = \sum_a P'(a,b|x=b,y(a,b)) = \sum_a P'(a|x=b) P'(b|a,x=b,y(a,b)).
\end{equation}
Certainly $P'(b|a,x,y) \leq 1$, and the final term above is a convex combination of such, and so
\begin{equation}
\sum_a P(a,b|y(a,b)) \leq 1.
\end{equation}
Recalling the definition of $y(a,b)$ this is exactly
\begin{equation}
\sum_a \max_y P(a,b|y) \leq 1.
\end{equation}
Since this holds for all $b$ we have \cref{instrineq}.
\end{proof}
Since there are no observable independences for this GDAG, $\Is$ is just the set of all probability distributions. 
Hence this result establishes that $\Gs \subsetneq \Is$.

\section{Towards a classification of ``interesting'' GDAGs}\label{sec:towards}

\begin{figure}
\begin{center}
\begin{tikzpicture}
  \node[q](A) at (1,0){};
  \node[c](B) at (0,0){$X$};
  \node[c](D) at (1.5,1){$B$}
  edge[e] (A);
  \node[c](E) at (0.5,1){$A$}
  edge[e] (A)
  edge[e] (B);
\end{tikzpicture}
\end{center}
\caption{A bipartite Bell scenario where only Alice has a choice of measurement.}
\label{boringbell}
\end{figure}
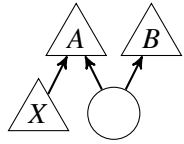

It is known that a Bell scenario where only one party has a choice of measurement (\cref{boringbell}) is not ``interesting''. What exactly does this mean? Certainly it doesn't mean that there are no restrictions on the probability distributions: there is still no-signalling from Alice to Bob. However, this is a conditional independence $X \indep B$ which follows from $d$-separation. Hence, by definition, it is satisfied by all distributions in $\Is$. The reason this scenario is not interesting is that even for classical distributions there are no further restrictions, i.e. $\Cs = \Is$.

Since we have seen that for any GDAG $\Cs \subseteq \Qs \subseteq \Gs \subseteq \Is$, GDAGs in which $\Cs = \Is$ must have $\Cs = \Qs = \Gs = \Is$. Hence there is very little to say about such GDAGs except for listing the observable conditional independences. It is therefore of interest to classify which GDAGs have $\Cs = \Is$ and which do not. The GDAGs that do not are then candidates for quantum advantages in (``black-box'') information processing, settings to compare quantum theory to more general theories, and so on.

Here we make significant progress towards such a classification by providing a sufficient condition for $\Cs = \Is$ and providing strong evidence that our condition is also necessary, at least for small GDAGs. This classification problem may be of interest even for purely classical causal inference, since if one has a candidate causal structure for which $\Cs \subsetneq \Is$ then it can be ruled out by checks that go beyond observable conditional independences (like Bell inequalities). On the other hand, if a candidate causal structure has $\Cs = \Is$ then checking the observable conditional independences implied by $d$-separation suffices for the existence of a (classical) model.

\subsection{A sufficient condition for $\Cs = \Is$}\label{sufficientboring}
We begin by observing that certain changes to a GDAG can only make $\Cs$ smaller. We will use the notation $X \qpath Y$ to denote the existence of a directed path from a node $X$ to node $Y$, where any intermediate nodes are unobserved.

\begin{thm}\label{trsfthm}
  Consider the set of of classical corelations $\Cs_G$ for a GDAG $G$.
  Suppose that one of the following transformations is performed on $G$, producing a GDAG $H$:
  \begin{enumerate}
    \item\label[trsf]{trsf1} Removal of an edge.
    \item\label[trsf]{trsf2} Removal of an isolated unobserved node.
    \item\label[trsf]{trsf3} Addition of an edge $X \to Y$ where previously $X \qpath Y$.
    \item\label[trsf]{trsf4} Addition of an edge $X \to Y$ where previously $\PA{X} \subseteq \PA{Y}$ and $\PA{X}$ contained at least one unobserved node.
  \end{enumerate}
Then $\Cs_{H}\subseteq \Cs_{G}$.
\end{thm}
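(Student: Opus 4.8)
The plan is to reduce everything to ordinary classical Bayesian networks via \cref{l:classicalGMNs}. Write $\bar G$ and $\bar H$ for the underlying DAGs of $G$ and $H$. Combining \cref{l:classicalGMNs} with \cref{ex:classical_CDP}, and noting that a classical unobserved node may always copy its entire state to each of its children (the diagonal preparation is a legitimate classical test), one sees that $P\in\Cs_G$ precisely when $P$ is the marginal on the observed nodes of some distribution that is Markov with respect to $\bar G$, where the unobserved nodes are given arbitrary finite ranges and each node may read off the full value of every parent. So for each of the four transformations I would take an arbitrary $P\in\Cs_H$, fix a witnessing classical model on $\bar H$, and manufacture from it a classical model on $\bar G$ with the same observed marginal; this places $P$ in $\Cs_G$ and hence gives $\Cs_H\subseteq\Cs_G$.

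\Cref{trsf1,trsf2} are routine. For edge removal, $\bar H$ is an edge-subgraph of $\bar G$ on the same vertex set, and a distribution that is Markov with respect to a DAG remains Markov when edges are added: fixing a topological order of $\bar G$ (which is automatically topological for $\bar H$), the chain rule gives $P=\prod_i P(x_i\mid\text{predecessors})$, and the $\bar H$-Markov property makes each factor depend only on the parents of $x_i$ in $\bar H$, hence only on the (larger) parent set in $\bar G$; this is exactly $\bar G$-Markovianity. For removal of an isolated unobserved node $N$, I would take the model on $\bar H=\bar G\setminus\{N\}$ and reinstate $N$ with an arbitrary (say trivial) distribution as an independent factor; since $N$ has no parents and no children the result is Markov with respect to $\bar G$ and, as $N$ is unobserved, has the same observed marginal.

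The substance of the theorem is \cref{trsf3,trsf4}, where $\bar H=\bar G\cup\{X\to Y\}$ carries one extra edge and I must show that this direct link buys nothing classically. For \cref{trsf3}, fix the directed path $X\to U_1\to\cdots\to U_k\to Y$ witnessing $X\qpath Y$, with every $U_j$ unobserved. In the $\bar H$-model, $Y$'s conditional may depend directly on the value of $X$; I would reproduce this dependence along the path by enlarging each latent $U_j$ so that, besides its original output, it forwards a verbatim copy of the value of $X$ (with $U_1$ reading it from its parent $X$ and each $U_{j+1}$ from $U_j$). Since $U_k$ is a parent of $Y$ in $\bar G$, the value of $X$ then reaches $Y$ through legitimate $\bar G$-edges, so $Y$ can apply exactly its $\bar H$-conditional. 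Enlarging latent ranges keeps every conditional well defined and the factorization along $\bar G$ intact, and because only unobserved nodes were altered the observed marginal is unchanged.

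For \cref{trsf4} the obstacle is that $Y$ must recover the value of the \emph{stochastic} variable $X$ across the missing edge. I would first purify $X$'s mechanism, writing $X=f(\text{pa}(X),R)$ for a deterministic $f$ and an auxiliary seed $R$ drawn independently (possible since all ranges are finite). I would then route $R$ through the unobserved common parent: by hypothesis some unobserved $N$ lies in $\PA{X}\subseteq\PA{Y}$, so enlarging $N$ to additionally broadcast a fresh copy of $R$ supplies $R$ to both $X$ and $Y$. Now $X$ keeps its original distribution, while $Y$ — which sees all of $\PA{X}$ (as $\PA{X}\subseteq\PA{Y}$) together with $R$ — can recompute the identical value $x=f(\text{pa}(X),R)$ and feed it into its $\bar H$-conditional. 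I expect this to be the main obstacle: keeping the reconstructed copy of $X$ exactly synchronised with the true $X$ while preserving the $\bar G$-factorization. The purification $X=f(\text{pa}(X),R)$ is what makes it tractable, converting ``share $X$'s randomness'' into ``share a seed living on a common latent node''; the two structural hypotheses — an all-unobserved path in \cref{trsf3}, and $\PA{X}\subseteq\PA{Y}$ with an unobserved member in \cref{trsf4} — are precisely what guarantee that $Y$ has access to everything it needs to carry out the reconstruction.
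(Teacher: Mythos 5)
Your proposal is correct and follows essentially the same route as the paper: the paper also reduces to classical models on the underlying DAG (via Pearl's functional causal models), handles \cref{trsf1,trsf2} trivially, realises \cref{trsf3} by enlarging the unobserved nodes along the path $X \qpath Y$ to carry a copy of $X$, and realises \cref{trsf4} by moving $X$'s noise into the unobserved common parent so that $Y$ can recompute $x$ deterministically. Your ``purification'' $x = f(\pa{x}, r)$ is exactly the functional-causal-model representation the paper uses throughout; you merely invoke it locally at $X$ rather than globally.
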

These transformations are illustrated in \cref{trsfig}.
\begin{figure}
\[
\begin{matrix}
1:&
\begin{matrix}
\begin{tikzpicture}
  \node[c](X) at (0,0) {$X$};
  \node[c](Y) at (1,0) {$Y$}
  edge[e] (X);
\end{tikzpicture}
\end{matrix}
&\to&
\begin{matrix}
\begin{tikzpicture}
  \node[c](X) at (0,0) {$X$};
  \node[c](Y) at (1,0) {$Y$};
\end{tikzpicture}
\end{matrix}&\qquad&
2:&
\begin{matrix}
\begin{tikzpicture}
  \node[q](X) at (0,0) {$X$};
\end{tikzpicture}
\end{matrix}
&\to&
\\[1em]
3:&
\begin{matrix}
\begin{tikzpicture}
  \node[c](X) at (0,0) {$X$};
  \node[q](Z) at (1,0) {$Z$}
  edge[e] (X);
  \node[c](Y) at (1,1) {$Y$}
  edge[e] (Z);
\end{tikzpicture}
\end{matrix}
&\to&
\begin{matrix}
\begin{tikzpicture}
  \node[c](X) at (0,0) {$X$};
  \node[q](Z) at (1,0) {$Z$}
  edge[e] (X);
  \node[c](Y) at (1,1) {$Y$}
  edge[e] (Z)
  edge[e] (X);
\end{tikzpicture}
\end{matrix}&&
4:&
\begin{matrix}
\begin{tikzpicture}
  \node[q](Z) at (0,0) {$Z$};
  \node[c](X) at (-0.5,1) {$X$}
  edge[e] (Z);
  \node[c](Y) at (0.5,1) {$Y$}
  edge[e] (Z);
\end{tikzpicture}
\end{matrix}
&\to&
\begin{matrix}
\begin{tikzpicture}
  \node[q](Z) at (0,0) {$Z$};
  \node[c](X) at (-0.5,1) {$X$}
  edge[e] (Z);
  \node[c](Y) at (0.5,1) {$Y$}
  edge[e] (X)
  edge[e] (Z);
\end{tikzpicture}
\end{matrix}\\
\end{matrix}
\]
\caption{Illustrations of the allowed transformations in \cref{trsfthm}.}
\label{trsfig}
\end{figure}
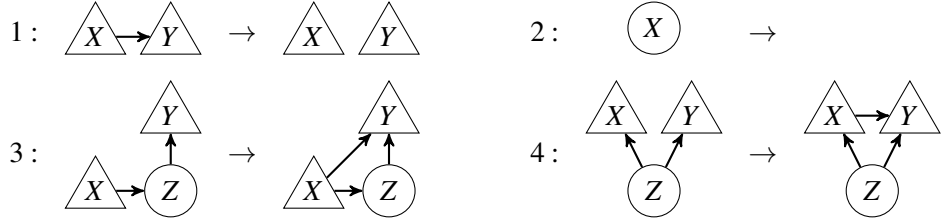

\begin{proof}
  We need to prove that if $P\in \Cs_H$ (i.e.~$P$ is classical for the new GDAG $H$) then $P\in\Cs_G$ (i.e.~$P$ is classical for the old GDAG $G$).
  We shall use the fact that $P$ is classical for a GDAG if and only if there exists a functional causal model for $P$ using the underlying DAG \cite{pearl}.
  In a functional causal model, the value of each node $Z$ is given by a function $z=f(\pa{z}, n_z)$ of its parents and a noise variable, and the noise variables are independently distributed.
  For each transformation, we shall show that if a functional causal model exists for $P$ defined on $H$, then a functional causal model exists for $P$ defined on $G$:
\begin{enumerate}
	\item  
	In $H$ the argument to a function has been removed, e.g. if a node $Z$ has parents $X$ and $Y$, then $z = f(x,y,n_z)$ becomes $z = f'(x,n_z)$. 
	We can define a functional causal model for $G$ using the one for $H$ by allowing the function to trivially depend on its new argument, e.g. $f(x,y,n_z)=f'(x,n_z)$. 
	By definition, this gives the same probabilities for all nodes.
	\item 
	We can define a model for $G$ by giving the isolated node $Z$ an arbitrary error variable $N_Z$ and making $Z$ an arbitrary function of it. 
	This has no effect on the probabilities for any other variable, which includes all the observable variables.
	\item 
	In both $G$ and $H$ we have $X \qpath Y$, but in $H$ we also have $X\to Y$.
	To define a model for $G$ we must absorb the dependence of $Y$ on $X$ that exists for $H$. 
	We can do so by using the unobserved nodes $Z\s{i}$ in the path $X\qpath Y$. 
	Specifically,  
	for each of the random variables $Z\s{i}$ defined for $H$, we define an `enlarged' variable $W\s{i}$ that includes a copy of $X$, when defining a model for $G$.
	That is, $z = f(\pa{z},n_z)$ becomes $w:=(z, x) = (f(\pa{z},n_z), x)$. 
	We then replace the dependence of the function at $Y$ on $X$ by its copy in $W$, i.e. $y = f(z,x,n_y)$ becomes $y = f(z', n_y)$. 
	This procedure does not affect any of the observable probabilities.
	\item 
	In $H$, the variable $Y$ is now a function of $X$. 
	In turn, $X$ is a function of its parents and an error variable $N_X$. 
	But since 
	$\PA{X}\subseteq\PA{Y}$,
	to define a model for $G$ we need only 
	ensure the dependence of $Y$ on $N_X$.
	Since $N_X$ is independently distributed, we can move this into an unobserved parent of $X$, say $Z$, which exists by assumption. 
	Specifically, we define $z':=(z,n_x)$, and then
	$x = f(z,n_x)$ for $H$ becomes $x = f'(z'):=f(z,n_x)$ for $G$. 
	We let $Y$ be calculated as before, but in place of the direct dependence on $X$, we use the same function used to calculate $x$ at $X$, e.g. $y = g(x,z,n_y)$ becomes 
	$y = g'(z', n_y):=g(f(z,n_x),z,n_y)$.
	The only variable whose probabilities have been changed is $Z$, which is not observable.\qedhere
\end{enumerate}
\end{proof}

The sufficient condition for $\Is = \Cs$ is as follows. If starting with a given GDAG one can apply a sequence of the above transformations and produce a GDAG with:
\begin{enumerate}
  \item no unobserved nodes, and
  \item requiring no more conditional independences on the observed nodes than the original GDAG did,
\end{enumerate}
then $\Is = \Cs$ for the original GDAG. To see this, start with some probability distribution in $\Is$. Recalling that the conditional independences are the only restrictions on (G)DAGs with no latent variables, the above two properties ensure that the distribution is classical for the new GDAG. But then by repeated applications of \cref{trsfthm} there is a classical model for the original GDAG with the same probabilities for the observed nodes, and we are done.

For example, this condition establishes that the Bell scenario with only one setting, \cref{boringbell}, indeed has $\Is = \Cs$, as shown in \cref{boringbellproof}.
\begin{figure}
\[
\begin{matrix}
\begin{matrix}
\begin{tikzpicture}
  \node[q](A) at (1,0){};
  \node[c](B) at (0,0){$X$};
  \node[c](D) at (1.5,1){$B$}
  edge[e] (A);
  \node[c](E) at (0.5,1){$A$}
  edge[e] (A)
  edge[e] (B);
\end{tikzpicture}
\end{matrix}
&\underset{\ref{trsf4}}{\to}&
\begin{matrix}
\begin{tikzpicture}
  \node[q](A) at (1,0){};
  \node[c](B) at (0,0){$X$};
  \node[c](D) at (1.5,1){$B$}
  edge[e] (A);
  \node[c](E) at (0.5,1){$A$}
  edge[e] (A)
  edge[e] (B)
  edge[e] (D);
\end{tikzpicture}
\end{matrix}
&\underset{\ref{trsf1}}{\to}&
\begin{matrix}
\begin{tikzpicture}
  \node[q](A) at (1,0){};
  \node[c](B) at (0,0){$X$};
  \node[c](D) at (1.5,1){$B$};
  \node[c](E) at (0.5,1){$A$}
  edge[e] (B)
  edge[e] (D);
\end{tikzpicture}
\end{matrix}
&\underset{\ref{trsf2}}{\to}&
\begin{matrix}
\begin{tikzpicture}
  \node[c](B) at (0,0){$X$};
  \node[c](D) at (1.5,1){$B$};
  \node[c](E) at (0.5,1){$A$}
  edge[e] (B)
  edge[e] (D);
\end{tikzpicture}
\end{matrix}
\end{matrix}
\]
\caption{Repeated applications of \cref{trsfthm} transform the GDAG of \cref{boringbell} into a new GDAG without enlarging $\Cs$. (The numbers under the arrows indicate the relevant transformation from \cref{trsfthm}.) Allowed distributions on the final GDAG are constrained only by $X \indep B$, which held for $\Is$ in the initial GDAG, and so $\Cs = \Is$ in the initial GDAG. }
\label{boringbellproof}
\end{figure}
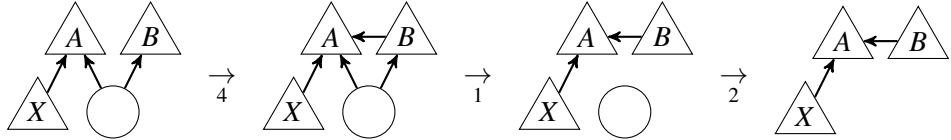

\subsection{Results for small GDAGs}
\begin{table}
\begin{center}
\begin{tabular}{cccc}
  Nodes & Number of GDAGs & Number for which our condition holds & Percent \\
  1 & 2        & 2        & 100\%  \\
  2 & 7        & 7        & 100\%  \\
  3 & 40       & 40       & 100\%  \\
  4 & 420      & 419      & 99.8\% \\
  5 & 8628     & 8532     & 98.9\% \\
  6 & 357468   & 347287   & 97.2\% \\
  7 & 29989052 & 28370373 & 94.6\%
\end{tabular}
\end{center}
\caption{The results of our condition for GDAGs of size 1 to 7. It is plausible that the fraction of GDAGs for which $\Cs = \Is$ tends to zero as the number of nodes tends to infinity, because larger and larger GDAGs should be more and more likely to contain, for example, a Bell scenario.}
\label{smallresults}
\end{table}

Using a strategy described in \cref{searchstrat}, and algorithms from \cite{Studeny1998} to keep track of conditional independeces, we have searched for applications of the above condition to all GDAGs with up to seven nodes. The results are shown in \cref{smallresults}. Our condition is powerful enough to show that the overwhelming majority of small GDAGs have $\Cs = \Is$. Indeed this is the case for all GDAGs of size three or smaller, and the only GDAG of size four is that of \cref{instrsec} for which it was already known that $\Cs \subsetneq \Is$. The 96 GDAGs of size 5 to which our condition does not apply are mostly trivial modifications of that of \cref{instrsec}, for which the proof that $\Cs \subsetneq \Is$ will easily carry over. To eliminate such GDAGs from consideration we developed a number of reduction criteria. For completeness these are described in \cref{reduction}.

Once these reduction criteria have been applied, there remain 2 GDAGs of size five and 18 of size six. If we can show that these 20 GDAGs have $\Cs \subsetneq \Is$ then we will have shown that our necessary condition is also sufficient, at least for GDAGs of size six or less. A full characterisation of $\Cs$ in a general scenario is not known, however necessary conditions for membership of $\Cs$ can be derived by searching for ``Shannon-type entropic inequalities''. These are linear inequalities expressed purely in terms of the Shannon entropy $\shan{X}$ of subsets of variables. See \cite{chaves} and references therein for the details of this approach.

For each GDAG we construct the Shannon cone (defined by the positivity of all conditional mutual informations) for all variables, observable and latent. For each node $X$ we add
\[ \cmutual{X}{\text{non-descendants of $X$}}{\PA{X}} \leq 0 \]
to enforce the Markov condition. Finally we use Fouirer-Motzkin elimination to project out entropies involving the latent variables. This gives a set of entropic inequalities $E_\Cs$.

The resulting inequalities are necessary conditions for membership of $\Cs$. However, we are interested in comparing $\Cs$ with $\Is$. Hence we repeat the process for $\Is$. We start with the Shannon cone on the observable variables, and add $\cmutual{X} {Y}{Z} \leq 0$ whenever $X$ and $Y$ are $d$-separated by $Z$. This gives a second set of entropic inequalities $E_\Is$.

For 19 of the 20 GDAGs we find inequalities in $E_\Cs$ that do not follow from those of $E_\Is$. Unless the inequality is a non-Shannon-type inequality for $\Is$, this establishes that $\Cs \subsetneq \Is$. Since non-Shannon-type inequalities rarely play a role, this is rather good evidence. For most of the GDAGs it is straightforward to find explicit $P \in \Is$ that violate one of $E_\Cs$, thus definitively establishing $\Cs \subsetneq \Is$. Curiously, the one GDAG for which $E_\Is = E_\Cs$ is the bipartite Bell scenario. Fortunately, we already know that $\Cs \subsetneq \Is$ for that case! The GDAGs and corresponding inequalities are listed in \cref{dagtable}.

These results provide excellent evidence that our sufficient condition for $\Cs = \Is$ is also necessary for all GDAGs with six or fewer nodes. Perhaps it is in fact necessary for an arbitrary GDAGs.

\section{Conclusions}
Here we have proposed a way to combine the frameworks of generalised probabilistic theories and causal Bayesian networks. We believe that the results we have obtained suggest that this proposal is worth exploring further, although the two fundamentally distinct types of node mean it is unlikely to be the final word on non-classical causation.

Our first main result was that the graphical $d$-separation criteria for conditional independence remains sound for generalised networks. This should be useful, since the classical soundness result is very fundamental to the classical theory. For example, the main algorithm for causal inference in the presence of latent variables, IC*, uses only observable conditional independences. Hence it will still operate correctly in our generalisation. It would be worth exploring similar generalisations of other fundamental parts of the classical theory, for example the criteria for two causal structures to have the same observable consequences.

We then found that some other constraints on observed probabilities also generalise to this setting.
This shows that even in its weakest interpretation, causal structure has more interesting consequences than ``no signalling'' in the Bell scenario, even extended to include all observable conditional independences.  This has interesting foundational consequences.  If the violation of Bell inequalities is to be explained by accepting altered causal structure, one must give up hope of an explanation of observed conditional independences such as no-signalling based on causal structure \cite{wood}. We now see that there are other, more intricate, limitations which would also be left unexplained by an altered causal structure. Since our techniques for finding such limits were rather ad hoc, the main open problem here is to obtain a more systematic understanding of these constraints. The entropic inequalities look like a promising place to start: indeed we do not know of any example of such an inequality being violated by any generalised probabilistic theory.

Finally, we have considered the problem of identifying whether or not the only consequences of a GDAG are conditional independences, i.e. $\Cs = \Is$. We have presented a sufficient condition. Proving the necessity of this (or any other) condition would shed light on the conceivable forms of ``device-independent'' non-classicality. If a GDAG has $\Cs \subsetneq \Is$, then one could also ask more fine-grained questions: is $\Cs \subsetneq \Qs$ (quantum non-classicality), $\Qs \subsetneq \Gs$ (post-quantum correlations), $\Gs \subsetneq \Is$ (theory-independent limits on correlation)? Other interesting classification problems include understanding when the distributions on some nodes, conditioned on some others, form a convex set.

Thinking more widely, it is of great interest to ask about the extent to which classical causal principles, such as Reichenbach's principle, can be extended to the unobserved nodes \cite{Cavalcanti:2013,neutral,Henson:2005wb}, and also whether quantum mechanics supports a stronger analogy to such classical principles than other GPTs (see \textit{e.g.} \cite{Spekkens:2012,Dowker:2013}). But before these deep issues are tackled, it is important to understand what causal features of classical theories carry over to the most general cases.  This work addresses the latter issue, and it is our hope that these other issues of causality in quantum mechanics, and beyond, can also be fruitfully explored using our framework.

\paragraph{Acknowledgements.}
We are grateful for useful discussions with Jonathan Barrett, Giulio Chiribella, Tobias Fritz, Anirudh Krishna and Rob Spekkens. Research at Perimeter Institute is supported in part by the Government of Canada through NSERC and by the Province of Ontario through MRI.  Work by JH and RL is supported by grants from the John Templeton Foundation.  JH also receives support form EPSRC grant \textit{DIQIP} and ERC grant \textit{NLST}.

\printbibliography
\appendix
\section{Comparison with other approaches}\label{sec:comparison}

Recent work by other authors has also considered correlations on general causal structures. We shall restrict our focus to those approaches which have been specifically used to study classical correlations resulting from quantum processes on general causal structures. 
Hence we omit works that give a quantum version of Bayesian networks by replacing probabilities with amplitudes (e.g. \cite{tucci}), or that only apply to states at a single time-step (e.g. \cite{poulin}), since neither appears to support the causal interpretation which we are interested in. More relevant are the ``Quantum Causal Networks'' of \cite{laskey}, but these are difficult to compare to our approach since they treat entanglement as a new type of causal relation indicated by an undirected edge, whereas in our approach entanglement requires an analog of a ``common cause,'' that is, mutual ancestors. Most closely related are two lines of work, based on source-measurement hypergraphs and circuit DAGs respectively. The idea of having two different types of node, and specifically the choice of triangles and circles, comes from a more general project to recast quantum theory as a theory of inference. To aid the reader who has encountered any of these three approaches, here we compare their definitions with ours.

\subsection{Hypergraphs}
Building on the idea of ``$N$-locality'' from \cite{bilocal}, in \cite{tobias} a causal structure is represented by a hypergraph, with vertices representing measurements and edges representing sources. This can be translated into our formalism by turning each vertex into an observed node, and each hypergraph edge into an unobserved node with an edge going to every member of the hypergraph edge. What is called a ``correlation'' in \cite{tobias} then agrees with our definition of a member of $\Is$, and the definitions of classical and quantum correlations map directly to our definitions of $\Cs$ and $\Qs$.

This close translation means that some of our results touch directly on the results and open problems in \cite{tobias}. Our triangle result answers the first part of Problem 3.4 in \cite{tobias} in the negative. Our investigation in \cref{sec:towards} seeks to address (a generalisation of) Problem 3.6 in \cite{tobias}. For example, the criteria given in \cref{sufficientboring} enables a graphical proof of the ``if'' part of Theorem 3.8 in \cite{tobias}, see \cref{tobias2us}.

Many GDAGs in our formalism will not correspond to any hypergraph in the formalism of \cite{tobias}. For example, the GDAG in \cref{instrsec} cannot be represented as a hypergraph as there is no way to encode the edge from $B$ to $A$.

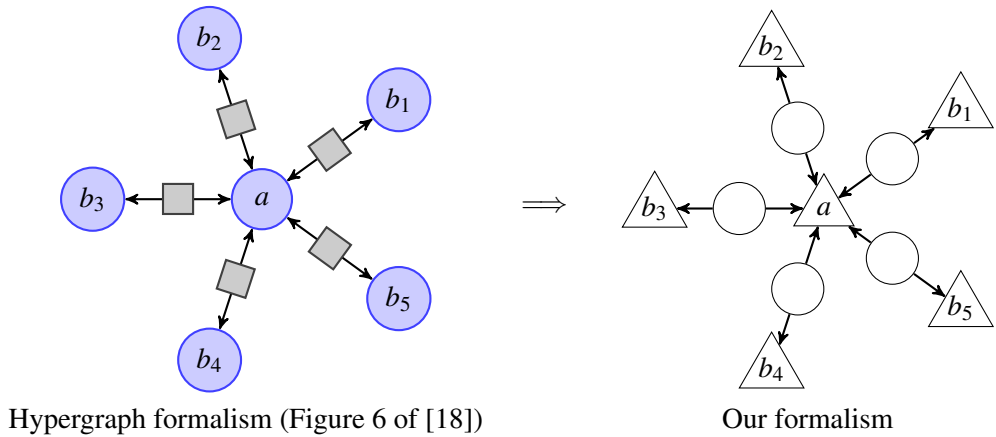
\begin{figure}
\begin{center}
\begin{tabular}{ccc}
\begin{tabular}{c}
\begin{tikzpicture}[node distance=2.1cm,>=stealth',thick,scale=0.75]
\tikzstyle{place}=[circle,thick,draw=blue!75,fill=blue!20,minimum size=8mm]
\tikzstyle{transition}=[rectangle,thick,draw=black!75,fill=black!20,minimum size=4.0mm]
\node[place] (a) at (0,0) {$a$} ;
\node[place] (b1) at (36:3) {$b_1$} ;
\node[place] (b2) at (108:3) {$b_2$} ;
\node[place] (b3) at (180:3) {$b_3$} ;
\node[place] (b4) at (252:3) {$b_4$} ;
\node[place] (b5) at (324:3) {$b_5$} ;
\draw[<->] (a) -- (b1) node [midway,sloped,above=-6pt,transition] {} ;
\draw[<->] (a) -- (b2) node [midway,sloped,above=-6pt,transition] {} ;
\draw[<->] (a) -- (b3) node [midway,sloped,above=-6pt,transition] {} ;
\draw[<->] (a) -- (b4) node [midway,sloped,above=-6pt,transition] {} ;
\draw[<->] (a) -- (b5) node [midway,sloped,above=-6pt,transition] {} ;
\end{tikzpicture}
\end{tabular}&
$\implies$&
\begin{tabular}{c}
\begin{tikzpicture}[scale=0.75]
\node[q] (q1) at (36:1.5) {};
\node[q] (q2) at (108:1.5) {} ;
\node[q] (q3) at (180:1.5) {} ;
\node[q] (q4) at (252:1.5) {} ;
\node[q] (q5) at (324:1.5) {} ;
\node[c] (a) at (0,0) {$a$}
edge[e] (q1)
edge[e] (q2)
edge[e] (q3)
edge[e] (q4)
edge[e] (q5);
\node[c] (b1) at (36:3) {$b_1$}
edge[e] (q1);
\node[c] (b2) at (108:3) {$b_2$}
edge[e] (q2);
\node[c] (b3) at (180:3) {$b_3$}
edge[e] (q3);
\node[c] (b4) at (252:3) {$b_4$}
edge[e] (q4);
\node[c] (b5) at (324:3) {$b_5$}
edge[e] (q5);
\end{tikzpicture}
\end{tabular}
\\
Hypergraph formalism (Figure 6 of \cite{tobias}) & & Our formalism
\end{tabular}
\end{center}
\caption{In the formalism of \cite{tobias} a causal structure is formally represented by a hypergraph, although the edges are suggestively drawn as squares with arrows to members. To convert to our formalism, an edge becomes an actual (unobserved) node with edges to each member. An application of \cref{sufficientboring} immediately shows that in this ``star'' scenario $\Cs = \Is$.}
\label{tobias2us}
\end{figure}

\subsection{Circuit diagrams}
The ubiquitous circuit diagrams used in quantum computing \cite{nc} and discussions of generalised probabilistic theories (e.g. \cite{Chiribella2010}) can be viewed as DAGs, and seem to suggest a causal interpretation (see \cite{Blute2000} and references therein). Recently this idea has been used specifically for the purpose of exploring Bell-like scenarios \cite{tobias2}.

In \cite{tobias2} a causal structure is represented as a DAG. Hence there is only one type of node, which is always associated with a random variable. Any edge can carry ``hidden variables'' in the classical case or quantum systems in the quantum case. Hence to translate to our formalism, first represent every node as an unobserved node. Then add a supplementary observed node for each of those nodes, and an edge from the unobserved to the supplementary observed node, as in \cref{tobiastwo2us}. Again the definitions of correlation, classical correlation and quantum correlation appear to coincide with $\Is$, $\Cs$ and $\Qs$ respectively (except that \cite{tobias2} allows infinite-valued latent variables, which as already noted may or may not result in more classical correlations). $\Qs$ only matches because every quantum instrument can be replaced by a channel\footnote{A channel is a quantum instrument with only one outcome, i.e. a completely positive trace preserving map.} with an additional ``flag'' system in the output which can later be measured to obtain the result. Finally, \cite{tobias2} considers $\mathtt{C}$-correlations for certain categories $\mathtt{C}$. This is closely related to the CDP formalism of operational-probabilistic theories and so ranging over all $\mathtt{C}$ should, under the above translation, agree with our $\Gs$.

Again many GDAGs in our formalism will not correspond to any DAG in \cite{tobias2}. For example, in the formalism of \cite{tobias2} there is no way to enforce that the edge from $B$ to $A$ in the GDAG of \cref{instrsec} does not carry hidden variables or quantum systems, rather than just the value $b$ as in our formalism.

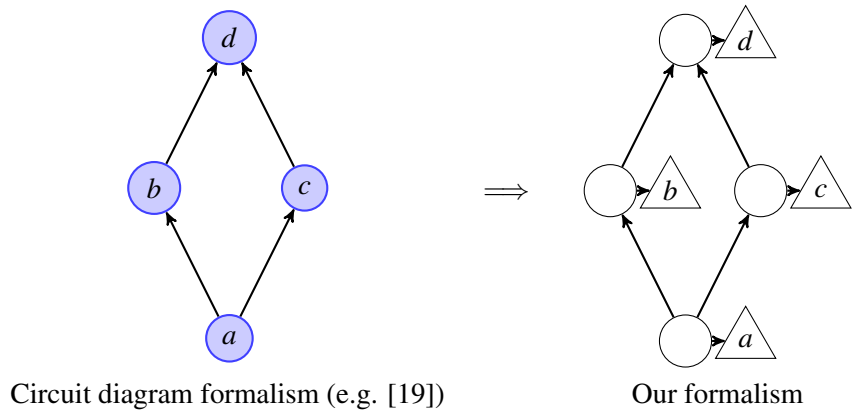
\begin{figure}
\begin{center}
\begin{tabular}{ccc}
\begin{tabular}{c}
\begin{tikzpicture}[node distance=2.3cm,>=stealth',thick]
\tikzstyle{place}=[circle,thick,draw=blue!75,fill=blue!20,minimum size=4mm]
\node[place] (l) at (1,0) {$a$} ;
\node[place] (la) at (0,2) {$b$} ;
\node[place] (lb) at (2,2) {$c$} ;
\node[place] (a2) at (1,4) {$d$} ;
\draw[->] (l) -- (la) ;
\draw[->] (l) -- (lb) ;
\draw[->] (la) -- (a2) ;
\draw[->] (lb) -- (a2) ;
\end{tikzpicture}
\end{tabular}&
$\implies$&
\begin{tabular}{c}
\begin{tikzpicture}
\node[q] (l) at (1,0) {} ;
\node[q] (la) at (0,2) {}
edge[e] (l);
\node[q] (lb) at (2,2) {}
edge[e] (l);
\node[q] (a2) at (1,4) {}
edge[e] (la)
edge[e] (lb);
\node[c] (cl) at (1.8,0) {$a$}
edge[e] (l);
\node[c] (cla) at (0.8,2) {$b$}
edge[e] (la);
\node[c] (clb) at (2.8,2) {$c$}
edge[e] (lb);
\node[c] (ca2) at (1.8,4) {$d$}
edge[e] (a2);
\end{tikzpicture}
\end{tabular}
\\
Circuit diagram formalism (e.g. \cite{tobias2}) & & Our formalism
\end{tabular}
\end{center}
\caption{In the formalism of \cite{tobias2} a causal structure is represented by a DAG. Every edge gets a hidden variable in the classical case and a quantum system in the quantum case, so to represent such a structure in our formalism each node should become two nodes, one of each type, as shown.}
\label{tobiastwo2us}
\end{figure}

\subsection{Quantum theory as a theory of inference}
In \cite{neutral}, Leifer and Spekkens also use GDAGs depicted using circular and triangular nodes. We deliberately use the same notation here, although the approaches are significantly different. 
The aim in \cite{neutral} is to generalise the quantum formalism to the point that one can, for example, talk about the joint quantum state of $A$ and $B$ even if $A$ is the input to a channel and $B$ the output. 
Here we stick to the standard quantum formalism, with tensor products only across space, and limit ourselves to the joint probabilities of the variables on the observed nodes---i.e., the classical variables. 
In \cite{neutral}, the state of a set of triangular nodes is diagonal in a fixed basis and hence encodes a joint probability distribution. 
We use the same notation because we expect the possible sets of joint distributions in \cite{neutral} to match our \Qs.

The main reason that the distributions may not be identical is that when an unobserved node has multiple outgoing edges, we associate a Hilbert space to each edge, giving an explicit tensor product structure. 
In \cite{neutral}, a single Hilbert space is associated with the circular node itself. The meaning of edges is to be in terms of some planned generalisation the classical Markov condition to quantum states. 
Presumably our tensor products will satisfy this condition (see \cref{us2ls} for an example of the likely translation), but there may be quantum states that are ``Leifer-Spekkens Markov'' for a GDAG and yet cannot be expressed using our tensor product form.

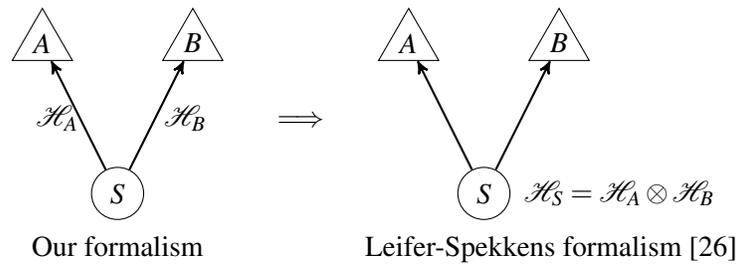
\begin{figure}
\begin{center}
\begin{tabular}{ccc}
\begin{tabular}{c}
\begin{tikzpicture}
  \node[q](A) at (1,0){$S$};
  \node[c](D) at (2,2){$B$}
  edge[e] (A);
  \node[c](E) at (0,2){$A$}
  edge[e] (A);
  \node at (0.2,1) {$\Hb_A$};
  \node at (1.9,1) {$\Hb_B$};
\end{tikzpicture}\end{tabular}&
$\implies$&
\begin{tabular}{c}
\begin{tikzpicture}
  \node[q](A) at (1,0){$S$};
  \node[c](D) at (2,2){$B$}
  edge[e] (A);
  \node[c](E) at (0,2){$A$}
  edge[e] (A);
  \node at (2.8,0) {$\Hb_S = \Hb_A \otimes \Hb_B$};
\end{tikzpicture}
\end{tabular}
\\
Our formalism & & Leifer-Spekkens formalism \cite{neutral}
\end{tabular}
\end{center}
\caption{In our formalism a quantum model for this GDAG consists of two Hilbert spaces, a bipartite quantum state and a POVM on each Hilbert space. In the Leifer-Spekkens formalism there would be a single Hilbert space for $\Hb_S$ with an associated state, and two POVMs on $\Hb_S$ satisfying some Markov condition. Translating from the first to the second just involves letting $\Hb_S$ be the tensor product of the two Hilbert spaces, keeping the state as it is, and tensoring the POVMs with identities so that they act on the whole of $\Hb_S$. Until the Leifer-Spekkens formalism has been fully worked out it is difficult to say whether translation in the opposite direction will always be possible.}
\label{us2ls}
\end{figure}

\section{Proofs of $d$-separation lemmas}\label{proofsec}
\begin{proof}[Proof of \cref{l:dsep}]
(``If.'')
We must show that every pseudo-path from $X$ to $Y$ intersects $Z$.
Assume for contradiction that there exists a pseudo-path from $X$ to $Y$ that does not intersect $Z$.
A pseudo-path cannot intersect $W$ by definition.
Then, by the assumption that $\{U,V,Z,W\}$ is a partition of $G$, a pseudopath from $X$ to $Y$ that does not intersect $Z$ can only contain elements in $U$ or $V$.  
Such a pseudo-path must at some point contain a pair of sequential elements $a \in U$ and $b \in V$.  
But we have also assumed that $m(U) \cap m(V) \subseteq W$, i.e.~the mutual children of $a$ and $b$ are in $W$. 
But this contradicts the definition of a pseudo-path, for which we must have $m(a)\cap m(b)\not\in W$.
Hence no such sequential pair in a pseudopath can exist, and therefore there are no pseudopaths from $X$ to $Y$ that do not intersect $Z$. 

(``Only if.'')
Notice that $W=G \setminus \ian{X \cup Y \cup Z}$ is as in the definition of $d$-separation, and in particular that $W\cap Z=\emptyset$.
We obtain the required partition of $G$ as follows.
Let $U$ be the union of all pseudo-paths that start at any node in $X$ and finish anywhere in $G$ but without intersecting $Z$.  
By the definition of $U$, we have $X\subseteq U$ and $U\cap Z=\emptyset$. 
By the definition of a pseudo-path, $U\cap W=\emptyset$.
Hence $U$, $W$ and $Z$ are disjoint.
Now define $V := G \setminus (U \cup W \cup Z)$. 
This defines a partition $\{U, V, Z, W\}$ of $G$, with $X \subseteq U$.
Now, by assumption all pseudopaths from $X$ to $Y$ intersect $Z$.
Therefore $Y\cap U=\emptyset$, by the definition of $U$.
Since we also have $Y\cap Z=\emptyset$ and $Y\cap W=\emptyset$, and since $\{U, V, Z, W\}$ is a partition of $G$, we therefore have $Y\subseteq V$.
Finally, suppose that there exist $a\in U$ and $b\in V$ such that $m(a) \cap m(b) \not\subseteq W$.
This defines a  pseudo-path from $a$ to $b$ that does not intersect $Z$. 
But then by the definition of $U$, we have $b\in U$
 which contradicts the fact that $b\in V$, since $U\cap V=\emptyset$.
 Hence we have $m(U) \cap m(V) \subseteq W$.
\end{proof}

\begin{proof}[Proof of \cref{l:part1}]
The GMC condition is
\begin{equation}
\label{e:gprobdist}
P(g) = \prod^{m}_{i=1} \cT_{x\s{i}}(\cpa{ x\s{i} })_{\GPA{ X\s{i}}}^{\GCH{ X\s{i}}},
\end{equation}
and we have
\begin{equation}
\label{e:abzprobdist}
P(g') = \sum_w p(g).
\end{equation}

By assumption $W \subseteq G$ contains its own future.  A node that is maximal with respect to $W$ is thus maximal with respect to $G$.  Consider such a maximal node $X\s{j}$, and the following expression:
\begin{equation}
\label{e:onenode}
\sum_{x\s{j}} \cT_{x\s{j}}(\cpa{ x\s{j} })_{\GPA{ X\s{j}}}^{\GCH{ X\s{j}}}.
\end{equation}
A maximal node has no outgoing systems and so $\GCH{ X\s{j}}$ is in this case empty, so \eqref{e:onenode} is an observation test.  Furthermore it is either already deterministic (if $X\s{j}$ is unobserved), or summing over all outcomes $x\s{j}$ makes it deterministic (if $X\s{j}$ is observed).  For both types of node therefore (\ref{e:onenode}) equals the unique deterministic effect on $\GPA{ X\s{j}}$.  Applying lemma \ref{l:compose_effect},
\begin{equation}
\sum_{x\s{j}} \cT_{x\s{j}}(\cpa{ x\s{j} })_{\GPA{ X\s{j}}}^{\GCH{ X\s{j}}} = \top_{\GPA{ X\s{j}}} = \prod_{X\s{i} \rightarrow  X\s{j} \in \GPA{ X\s{i}}} \top_{X\s{i} \rightarrow  X\s{j}}.
\end{equation}

Summing over $x\s{j}$ in (\ref{e:gprobdist}), noting that the maximality of $X\s{j}$ ensures that $x\s{j}$ appears only in the $i=j$ term, and substituting the above expression for that term we have
\begin{equation}
\label{e:summax}
\sum_{x\s{j}} P(g) =
  \prod_{i \in \{1,...,m\} \backslash j} \cT'_{x\s{i}}(\cpa{ x\s{i} })_{\GPA{ X\s{i}}}^{\GCH{ X\s{i}} \backslash X\s{i} \rightarrow X\s{j}},
\end{equation}
where 
\begin{equation}
\cT'_{x\s{i}}(\cpa{ x\s{i} })_{\GPA{ X\s{i}}}^{\GCH{ X\s{i}} \backslash X\s{i} \rightarrow X\s{j}}
= \cT_{x\s{i}}(\cpa{ x\s{i} })_{\GPA{ X\s{i}}}^{\GCH{ X\s{i}}} \top_{X\s{i} \rightarrow X\s{j}}
\end{equation}
where $\top_{X\s{i} \rightarrow X\s{j}}$ is the unique deterministic effect for the system on the edge $X\s{i}$ to $X\s{j}$. 

The upshot is that marginalising over the outcomes for a maximal element $X\s{j}$ produces a probability distribution that fulfils the GMC for the GDAG with that element removed, $G \backslash X\s{j}$.  Because $W \subseteq G$ contains its own future, this process can be repeated for every element in $W$, and so marginalising over every outcome in $W$ results in a distribution satisfying the GMC for $G \backslash W$.
\end{proof}

\begin{proof}[Proof of \cref{l:part2}]
We can write
\begin{equation}
P(x,y,z) =\sum_{u' v'} P(u,v,z)
\end{equation}
where $U'=U \backslash X$, $V'=V \backslash Y$, and the lowercase versions are the associated outcome variables as before.  Define $Z_1=Z \cap m(U) = m(U) \backslash U$ and $Z_2= Z \backslash Z_1$.  Then (using $U Z_1$ as shorthand for $U \cup Z_1$ and so on),
\begin{align}
P(y,z) &=\sum_{u} \prod_{P \in U Z_1} \cT_{p}(\cpa{ p })_{\GPA{ P }}^{\GCH{ P }}  \sum_{v'} \prod_{Q \in V Z_2} \cT_{q}(\cpa{ q })_{\GPA{ Q }}^{\GCH{ Q }} \\
&=\Bigr( \sum_{u'x} \prod_{P \in U Z_1} \cT_{p}(\cpa{ p })_{\GPA{ P }}^{\GCH{ P }}  \Bigl)  \Bigr( \sum_{v'} \prod_{Q \in V Z_2} \cT_{q}(\cpa{ q })_{\GPA{ Q }}^{\GCH{ Q }} \Bigl) .
\label{dsepfactor1}
\end{align}
The factorisation above follows because the nodes $U$, whose outcome variables $u$ are summed over in the first bracket, do not appear in the second bracket.  A node in $U$ is never a parent of a node in $Z_2$, from the definition of $Z_2$ and $Z_1$ above; it is never a parent of a node in $V$ because of condition \ref{e:new2}.  Conversely, a node in $V'$ is never a parent of a node in $Z_1$ or of a node in $U$ for the same reasons.  It follows trivially that a node in $U$ is not the \textit{child} of a node in $V'$ or vice versa.
This establishes the factorisation (and also that the terms in the brackets correspond to closed circuits and are thus probabilities).  For the same reasons we also have
\begin{align}
P(x,y,z) =\Bigr( \sum_{u'} \prod_{P \in U Z_1} \cT_{p}(\cpa{ p })_{\GPA{ P }}^{\GCH{ P }}  \Bigl)  \Bigr( \sum_{v'} \prod_{Q \in V Z_2} \cT_{q}(\cpa{ q })_{\GPA{ Q }}^{\GCH{ Q }} \Bigl) .
\label{dsepfactor2}
\end{align}
Now $P(x|y,z) = P(x,y,z)/P(y,z)$, and the second terms in \eqref{dsepfactor1} and \eqref{dsepfactor2} will cancel. Since $Y \subseteq V$ this means $P(x|y,z)$ is independent of $y$, establishing the conditional independence of $X$ and $Y$ given $Z$.
\end{proof}

\section{A $\Cs = \Is$ search strategy}\label{searchstrat}
It might appear that one has to attempt a potentially unbounded number of transformations to apply the sufficient condition for $\Cs = \Is$ in \cref{sufficientboring}. Fortunately, if any sequence of transformations exists from a GDAG to one satisfying the criteria given there, then one will be found using the following strategy, as we will show below.

Let $T$ (for ``tricky'') be the set of all observed nodes that have unobserved parents. Let $R$ (for ``root'') be the set of all unobserved nodes that have no unobserved parents. Consider every possible ordering of $T$: $T_1, T_2, \cdots, T_n$, with each element $T_i$ associated with every possible $R_i \in R$, with $R_i \qpath T_i$ 
.  For each possibility, apply the transformations as follows:
\begin{enumerate}
  \item Apply \cref{trsf3} to every pair of nodes with $X \qpath Y$.
  \item For $i$ from $1$ to $n$:
    \begin{enumerate}
      \item Applying \cref{trsf1}, remove any edges from $T_j$ (with $j>i$) to $T_i$, and from any unobserved nodes (except $R_i$) to $T_i$.
      \item Use \cref{trsf4} to add edges from $T_i$ to $T_j$ (with $j>i$) where possible.
    \end{enumerate}
  \item Apply \cref{trsf1} to remove any remaining edges incident on unobserved nodes, then use \cref{trsf2} to remove all the unobserved nodes.
\end{enumerate}

It can be seen that \cref{trsf3} can be applied first, as none of the other transformations can increase its applicability. It might as well be applied ``maximally'' as any unhelpful edges can always be removed later.

It can also be seen that \cref{trsf2} must be applied to all the unobserved nodes at some point, to ensure there are none in the final GDAG, and it can always be applied last, as it cannot increase the applicability of any of the other transformations.

All that remains is to show that the second step makes the best use of \cref{trsf1,trsf4}. Since removing edges can only add conditional independences, it can only be worth doing if it helps in applying \cref{trsf4}. Since we are aiming for a GDAG with no unobserved nodes, the only point in applying \cref{trsf4} to an unobserved node would be if it helped with a future application between observed nodes. Clearly, adding an edge from an observed node to an unobserved node cannot help. Let us consider a situation in which \cref{trsf4} can be used to add an edge from an unobserved node to an observed node.  Now, we can (and will) later remove any such edge from unobserved nodes to observed nodes, except if it is required to apply \cref{trsf4}.  Because of this, the only point of adding the edge would be for it to connect the observed node to the one unobserved parent required to enable this later application of \cref{trsf4}. But, from the maximal application of \cref{trsf3}, any such role can just as easily be played by the unobserved parent required for the possible application of \cref{trsf4} presently under consideration.

Hence \Cref{trsf4} is only worth applying between observed nodes, and of these only the nodes in $T$ are possibilities. After all the transformation we are left with some GDAG, which defines a partial order on $T$ and can be extended to a total order. If we are aiming for a particular order we need to remove any edges from $T_j$ to $T_i$ with $j > i$. The only ultimate use for edges from unobserved nodes is to allow the application of \cref{trsf4}, for which only one such edge is needed. If an unobserved node has unobserved parents then by the first step the parent can only have more descendants, making it the same or more useful for the application of \cref{trsf4}. Hence the single edge from an unobserved node we keep might as well be from an element of $R$. 

Finally, we need to argue that \cref{trsf4} might as well be applied based on the ordering on $T$ we have defined using the final GDAG. The only point in applying a \cref{trsf4} early is if it helps with a later application of \cref{trsf4}. If the later application is to add an edge from $X$ to $Y$, we can only help by adding an edge from  a node in $\PA{X}$ to $Y$. But a node in $\PA{X}$ will be before $X$ in the ordering on $T$, so such an edge will, if possible, be added before when following the above strategy.

\section{GDAG reduction}\label{reduction}
In order to study whether or not the sufficient condition for $\Cs = \Is$ given in \cref{sufficientboring} might also be necessary for $\Cs = \Is$ by checking small GDAGs, it is useful to have a notion of when one GDAG ``reduces'' to another, such that if the second GDAG has $\Cs \subsetneq \Is$ then the first does as well.

\subsection{Strong reducibility}\label{strongred}
We say that a GDAG $A$ is strongly reducible to another GDAG $B$ if the observed nodes in $B$ are a subset of the observed nodes in $A$, and for any causal operational-probabilistic theory, the set of possible distributions on observable variables in $B$ is equal to the set of distributions obtained by marginalising distributions on $A$. We likewise require that $\Is$ for $B$ is exactly the marginals of $\Is$ for $A$.

Applying the following transformations to $A$ gives a new graph $B$ to which $A$ is strongly reducible:
\begin{enumerate}
  \item \emph{Removing a disconnected component.}\label{disconnect} By the definition of an operational-probabilistic theory, the probabilities for two disconnected components are the products of the probabilities for each. Since marginalising one factor in a product distribution gives the other factor, valid distributions for $A$ marginalise to valid distributions on $B$. Similarly a valid distribution on $B$ can be taken to a valid distribution on $A$ with the correct marginal by putting an arbitrary model on the removed component. For $\Is$ simply note that the $d$-separation conditions for $B$ are not affected by the presence or absence the disconnected component.
  \item \emph{Removing a childless unobserved node.} Such a node represents the unique deterministic effect, which can be factorised into the deterministic effect on each incoming system, which can then be incorporated into the definition of the parent node. To go in the other direction simply use trivial systems for each incoming edge. Such nodes can neither block existing paths nor create a new unblocked path and so do not effect $\Is$ either.

  \item \emph{Merging an unobserved node with its sole parent, also unobserved.} An unobserved node that has only one parent, which is also unobserved, represents a deterministic test. Its parent can be redefined by applying that test to the relevant output system. To go in the other direction simply use the identity test, whose existence is part of the definition of operational-probabilistic theory. This transformation also does not affect conditional independences among the observed nodes.

  \item \emph{Removing an observed node associated with a 1-outcome variable.}\label{1outvar} Such a node represents the deterministic effect on its inputs, as in the case of a childless unobserved node. Outgoing edges have no effect because they just add a fixed label to children. Finally, removing a node certainly cannot remove conditional independences from the remaining nodes, to ensure it doesn't add any see \cref{nonew}.

  \item \emph{For an observed node $X$ all of whose parents are observed, removing an edge from a parent $Y$ such that all the observable conditional independences from $d$-separation after the removal already held beforehand.}\label{uselessedge} Such an observed node is specified by a classical conditional probability $p(x|y,z)$. Once the edge from $Y$ is removed we have $\ci{X}{Y}{Z}$ (since if $Y$ is a descendant of $X$ the original graph would have contained a cycle). By assumption $\ci{X}{Y}{Z}$ therefore holds in the original distribution, i.e. $p(x|y,z) = p(x|z)$, and so we can achieve the exact same probability distributions with or without the edge from $Y$ to $X$. Finally, $\Is$ is the same by construction.
  \item \emph{Removing an unobserved node whose parents and children are subsets of the parents and children respectively of another unobserved node.} The test at such an unobserved node can simply be incorporated into the other node, with the edges from common parents and children now carrying the systems to/from both. To go in the other direction just add a trivial test to the new node. An unblocked path via the removed node can just as easily go via the other node so $\Is$ is unaffected.
\end{enumerate}

\subsection{Reducibility}

The condition for reducibility is the same as strong reducibility, except that we only consider generalised probabilistic theories that have system types, states, and measurements suitable for perfectly transmitting, encoding, and decoding any finite-valued classical information. This includes classical probability theory (which defines $\Cs$) and quantum theory (which defines $\Qs$). It also includes unspecified theories (which define $\Gs$) since any operational-probabilistic theory can always be supplemented with such systems. It does not include, for example, the restriction of quantum theory to operations with a certain amount of noise.

Clearly reducibility is a weaker notion than strong reducibility.  In addition to the transformations in the previous subsection, applying the following transformations to $A$ gives a new graph $B$ to which $A$ is reducible:
\begin{enumerate}
  \item \emph{Merging an unobserved node with its sole child.} To convert a model on the unmerged GDAG to the merged one, simply compose the two tests. To go in the other direction, let the new unobserved node with only one child be the identity test on the edges from unobserved parents, and use classical information encoding states for the incoming edges from observed parents. At the child use the corresponding classical information decoding measurements to recreate the correct dependencies. As for $\Is$, simply note that this change has no effect on the $d$-separation of observed nodes.
\item \emph{Merging an observed node $Y$ (that has only one sibling, $Z$) with its unobserved parent $X$ (which is itself parentless).} The pair of nodes $X,Y$ represents a bipartite state at $X$ with a measurement $Y$ on one system. Considered together this is a ``preparation test'' for the remaining system that goes to $Z$. But in a causal theory every state is proportional to a deterministic state, so this is equivalent to sampling from the classical probability distribution given by the norms of the states and then preparing the corresponding normalized state. The sampling can be done as the new consolidated node, whilst the preparation can be incorporated into $Z$. Going in the other direction, we are starting with a single node representing a classical probability distribution. This can be sampled as part of the new unobserved node $X$, with the resulting classical information transmitted to both children. The copy sent to the observed node $Y$ is simply decoded and output, the copy sent to the other node $Z$ is decoded and then used as the label that previously came from the observed parent. Since an unobserved node cannot be conditioned on, the path from the observed node $Y$ via the unobserved node $X$ operates in exactly the same way as a direct connection as far as $d$-separation is concerned, so $\Is$ is unchanged.
\end{enumerate}

\subsection{The implications of reducibility}
Suppose we have two GDAGs, and the first is reducible to the second. Suppose the second has $\Cs \subsetneq \Is$, i.e. there exists some $P \in \Is$ with $P \not\in \Cs$. Then by reducibility, there exists a $P' \in \Is$ for the first GDAG, which marginalises to $P$. Suppose $P' \in \Cs$ for the first GDAG. Then by a second application of reducibility, it marginalises to a distribution in $\Cs$ for the second GDAG. But we already said it marginalises to $P \not\in \Cs$. Hence $P' \not\in\Cs$. We conclude that if a GDAG has $\Cs \subsetneq \Is$ then so does any other GDAG that reduces to it.

Except for \cref{disconnect,1outvar,uselessedge} of \cref{strongred}, the reduction rules don't affect the observed nodes and so the marginalisation step in the definition of reducibility is irrelevant. For reductions that don't use those 3 rules, we therefore have the stronger statement that $\Cs$ is the same for both GDAGs, and so is $\Is$. In particular $\Cs \subsetneq \Is$ for one GDAG if and only if $\Cs \subsetneq \Is$ for the other.

\subsection{$d$-separation without a trivial variable}\label{nonew}
The following is needed to ensure that transformation \cref{1outvar} of section \ref{strongred} satisfies the part of the definition of strong reducibility relating to $\Is$. Given a GDAG with observed and unobserved nodes, suppose that a distribution $P$ over variables on the observed nodes satisfies all the conditional independences implied by $d$-separation, i.e. $P \in \Is$. Suppose further that some variables $F$ always takes a fixed value. Then we claim that $P$ also satisfies all the conditional independences implied by the GDAG with $F$ removed.

Suppose that $X$ and $Y$ are $d$-separated by $Z$ in the new GDAG but not the old. If we imagine removing the edges incident to $F$ one by one, starting with outgoing edges and then moving on to incoming edges, then there must be a ``critical edge'' wherein $X$ and $Y$ are not $d$-separated by $Z$ before the removal, but are $d$-separated afterwards. Therefore all the unblocked paths before the removal must have passed through the critical edge.

Consider first an outgoing critical edge. Then $X$ and $Y$ are $d$-separated by $ZF$, because $F$ blocks any otherwise unblocked path from $X$ to $Y$. That means that $\ci{X}{Y}{ZF}$. But if $F$ takes a fixed value then conditioning on it doesn't do anything, so $\ci{X}{Y}{Z}$ as required.

The other case is an incoming critical edge. By construction all the outgoing edges have already been removed, so all the unblocked paths from $X$ to $Y$ are head-to-head at $F$. If we write $Z = Z_\text{D}Z_\text{ND}$ where $Z_\text{D}$ are descendants of $F$ and $Z_\text{ND}$ are not, then $X$ and $Y$ are $d$-separated by $Z_\text{ND}$ and so $\ci{X}{Y}{Z_\text{ND}}$. Furthermore any path from $XY$ to $Z_\text{D}$ not blocked by $Z_\text{ND}$ passes through $F$, and so $\ci{Z_\text{D}}{XY}{Z_\text{ND}F}$. As before this implies that $\ci{Z_\text{D}}{XY}{Z_\text{ND}}$. By the decomposition property of conditional independences we have $\ci{Z_\text{D}}{X}{YZ_\text{ND}}$ and hence $\ci{X}{Z_\text{D}}{YZ_\text{ND}}$ by the symmetry property. Combining this with $\ci{X}{Y}{Z_\text{ND}}$ using the contraction property gives $\ci{X}{Y}{Z_\text{D}Z_\text{ND}} = \ci{X}{Y}{Z}$ as required.

\section{Small ``interesting'' GDAGs}\label{dagtable}
Here we present the all the GDAGs of size at most six which the criteria in \cref{sec:towards} does not identify as having $\Cs = \Is$, and the reduction criteria above do not identify as being reducible to a smaller such GDAG. If all these GDAGs have $\Cs \subsetneq \Is$ then our criteria is also necessary for $\Cs = \Is$, at least for GDAGs of this size.

As well as the GDAG itself, we list a generating set of observable independences, which defines $\Is$. We also list a generating set of Shannon-type inequalities for $\Cs$, excluding those that are Shannon-type inequalities for $\Is$.

These inequalities provide good evidence that $\Cs \subsetneq \Is$. However, technically these inequalities could be non-Shannon inequalities for $\Is$. For most of the GDAGs, we have highlighted a subset of the nodes. The probability distribution defined by perfectly correlated random bits on these nodes, with all other nodes taking a fixed value, is a member of $\Is$ yet violates the first entropic inequality listed and is therefore not in $\Cs$. This closes the non-Shannon ``loophole'' and establishes that $\Cs \subsetneq \Is$ for these GDAGs.

\input{pics/dagtable}
\end{document}